\documentclass[journal,twoside,web]{ieeecolor}
\usepackage{generic}
\usepackage{cite}
\usepackage{amsmath,amssymb,amsfonts}
\usepackage{graphicx}
\usepackage{textcomp}
\usepackage{mathtools}
\usepackage{bm}
\usepackage{tikz}
\usetikzlibrary{arrows.meta, positioning, calc}
\usepackage{hyperref}
\hypersetup{hidelinks}
\usepackage[dvipsnames]{xcolor}

\newtheorem{theorem}{Theorem}
\newtheorem{lemma}{Lemma}

\newtheorem{assumption}{Assumption}

\DeclareMathOperator{\Proj}{Proj}

\begin{document}

\title{Adaptive and Neural Operator Control \\ of Nonlinear Volterra Hyperbolic PDEs}
\author{Miroslav Krstic%
\thanks{M. Krstic is with the Department of Mechanical and Aerospace Engineering, University of California, San Diego, La Jolla, CA 92093 USA (e-mail: krstic@ucsd.edu).}%
\thanks{The principal AI aide in developing the paper was Claude.} }

\maketitle

\begin{abstract}
Adaptive control learns the plant online; neural-operator control learns the control gains offline. We bring the two together for a class of nonlinear hyperbolic PDEs whose dynamics are governed by an unknown Volterra series of arbitrarily many kernels. An observer-based passive identifier learns a truncation of this series online. The infinite-dimensional map that synthesizes the backstepping kernels from the parameter estimates --- a cascade of PDEs on simplex domains of increasing dimension, prohibitive to solve in real time --- is approximated once, offline, by a neural operator. The closed loop then carries two learning processes in series: online learning of the plant feeds an offline-learned PDE solver, whose output is the online control gains. We prove closed-loop stability and asymptotic regulation of the plant state, observer state, and input, on a basin that recovers the exact-kernel basin as the neural-operator accuracy improves. With a single Lyapunov function we absorb at once the perturbations---all vanishing---of truncating an infinite Volterra series, of identifying the plant online, and of approximating the gains. 
\end{abstract}

%==============================================================
\section{Introduction}\label{sec:intro}
%==============================================================

\IEEEPARstart{I}{n} analogy with the 1989 Sastry-Isidori classic on adaptive feedback linearization for ODEs \cite{SastryIsidori89}, this paper solves the same problem for boundary-controlled partial differential equations (PDEs), using PDE backstepping. PDE backstepping produces the feedback gains as the solution of a kernel PDE derived from a coordinate change to a stable target system. Two obstacles separate this methodology from deployment on an imperfectly known plant. First, the kernels are functionals of the plant coefficients and cannot be computed without a model. Second, even when the model is available, the kernel PDE must be re-solved whenever the plant estimate changes---as it does continuously under adaptation---and for many plant classes this repeated online computation is prohibitive. This paper removes both obstacles simultaneously for a class of nonlinear hyperbolic PDEs.

The plant is a first-order hyperbolic PDE whose in-domain nonlinearity is a Volterra series in the state, with kernels $f_n$ defined on the $n$-dimensional simplices $T_n$. When the kernels are known, this class is feedback-linearizable: a nonlinear Volterra backstepping transformation maps it to a linear transport equation that reaches the origin in finite time \cite{HV1}, and truncating the series at order $N$ yields a sup-norm stabilizer with an explicit basin of attraction \cite{HV2}, in the spirit of the parabolic Volterra designs of \cite{VK102}. The controller gains are the solution of a cascade of kernel PDEs, the $n$-th of which lives on the simplex $T_n\subset\mathbb{R}^n$; the domain dimension grows with the Volterra order, and the entire cascade must be re-solved for every kernel tuple. The second obstacle above thus appears here in an acute form.

Neural operators \cite{Lu2021Universal} address the second obstacle by learning, once and offline, the otherwise costly map from plant data to control, so that online a single forward evaluation replaces the PDE solve. The idea was introduced for PDE backstepping in \cite{KBS442} and brought to the truncated Volterra controller in \cite{HV3}, which---for a \emph{known} plant---learns the feedback operator itself and shows the approximation error to preserve closed-loop stability on a slightly reduced basin.

The first obstacle---the requirement of a model---is addressed by adaptive control. For hyperbolic PDEs, identifier-based designs estimate the plant coefficients online and pass the estimates to a certainty-equivalence controller \cite{AA19}. We adopt the observer-based \emph{passive} identifier introduced for ODEs in \cite{KK94,KrsticKKK1995}: the estimation error appears in the Lyapunov derivative as a dissipative term, so stability requires no persistence-of-excitation assumption.

Bringing the two ingredients together removes both obstacles, but couples them into a single architecture: the online identifier feeds the offline-learned solver, whose output is the online controller---two learning processes in series, one adapting the plant and one, pre-trained, mapping the plant estimate to the gains. This architecture was proposed in \cite{LBSK} for a linear recirculation benchmark. The present paper carries it to the fully nonlinear Volterra class, where a new difficulty arises: the linear kernel $f_1$ is itself unknown and estimated. In the known-model designs of \cite{HV1,HV2} the gain cascade is a sequence of closed-form quadratures; here the unknown $f_1$ makes each stage a linear Volterra integral equation that is self-coupled through $f_1$. We prove this equation invertible for every admissible $f_1$, with no smallness condition on the nonlinearity.

The contributions are as follows. (i) We design an observer-based passive identifier for the unknown Volterra kernels, together with an $N$-truncated certainty-equivalence controller acting on the observer state. (ii) For the exact-kernel closed loop we establish a priori $L^2$ stability and asymptotic regulation of the plant state, observer state, and input (Theorem~\ref{thm:stability}), with explicit bounds on the parameter and kernel estimates and on the input. (iii) We give a neural-operator implementation (Theorem~\ref{thm:neural}) that replaces the online kernel cascade by a single offline-trained surrogate and recovers the exact-kernel guarantees on a basin that expands to the exact-kernel basin as the approximation error vanishes. (iv) A numerical study illustrates the design.

The neural operator in this paper is not the one in the companion paper \cite{HV3}, and the two deserve to be told apart. In \cite{HV3} the plant is known, and what is learned is the truncated feedback operator itself, $\mathcal U_N$, carrying the plant's Volterra coefficients and the current state straight to the boundary input and folding the kernel-PDE solve and the nested Volterra integration into one surrogate; trained once over a class of admissible plants, it \emph{is} the controller, and its argument is a fixed plant. Here the plant is not known but identified online, and the estimate $\hat f(t)$ never stops moving, so no fixed state-to-input map will serve. What we learn is instead the kernel-generation map $\mathcal Q_N$, from the running estimate to the control gains; those gains are then paired with the observer state exactly, in the loop. In brief, \cite{HV3} learns the controller of a known plant; this paper learns the map that rebuilds the controller, online, as an unknown plant is discovered.

\emph{Notation.} A hat marks the quantities the algorithm forms and uses online: the parameter estimates $\hat f_n$, the observer state $\hat u$, the learned kernels $\hat k_n$ with the operator $\widehat{\mathcal Q}_N$ that produces them, and the controller $\hat K_N$ they define. A tilde marks an estimation error, $\tilde f_n:=f_n-\hat f_n$. A check marks the exact backstepping objects that are solved for in the analysis but never computed in the loop: the cascade kernels $\check k_n$, the transformation $\check K_N$ and its inverse $\check L_N$, and the operators $\check{\mathcal S}_n,\check\Psi_n$ built from them. The bilinear simplex operators $B^m_n$ and the symmetrization--permutation operators $D^{n,m}_j$ that assemble the kernel cascade are those of \cite{HV1}, used here as defined there.

The paper is organized as follows. Section~\ref{sec:plant} states the plant and standing assumption; Section~\ref{sec:design} presents the identifier and the truncated controller; Section~\ref{sec:theorem} states the two main theorems. Sections~\ref{sec:proof-identifier}--\ref{sec:proof-Barbalat} prove the exact-kernel result, and Sections~\ref{sec:QN-Lip}--\ref{sec:neural-closed-loop} prove the neural-operator result. Section~\ref{sec:sims} reports the simulation, and Section~\ref{sec:conclusions} concludes.
%==============================================================
\section{Plant and Standing Assumption}\label{sec:plant}
%==============================================================

The plant is the hyperbolic Volterra PDE
\begin{eqnarray}
u_t(x,t) &=& u_x(x,t)+F[u](x,t), \label{eq:plant1}\\
u(1,t) &=& U(t), \label{eq:plant2}\\
F[u](x,t) &=& \sum_{n\ge 1}\int_{T_n(x)}f_n(x,\xi_1,\ldots,\xi_n) \nonumber\\
&& \times\,\prod_{i=1}^n u(\xi_i,t)\,d\xi_n\cdots d\xi_1, \label{eq:plant3}
\end{eqnarray}
posed on $(x,t)\in[0,1)\times\mathbb{R}_+$, on the simplices $T_n(x)=\{(\xi_1,\ldots,\xi_n)\in\mathbb{R}^n:0\le\xi_n\le\cdots\le\xi_1\le x\}$. The Volterra series in \eqref{eq:plant3} starts at $n=1$. The linear kernel $f_1$ is one of the unknowns to be estimated.

\begin{assumption}[Factorial bound on plant kernels]\label{ass:factorial}
There exist known constants $D_f,\rho_f>0$ such that
\begin{equation}\label{eq:factorial}
\|f_n\|_{L^\infty(T_n(1))}\le \frac{n!\,D_f}{\rho_f^{\,n-1}},\qquad n\ge 1.
\end{equation}
\end{assumption}

This is Assumption~1 of \cite{HV1}, restated for $n\ge 1$. In particular $\|f_1\|_\infty\le D_f$.

%==============================================================
\section{Truncated Adaptive Controller}\label{sec:design}
%==============================================================

The controller estimates only the first $N$ kernels $f_1,\ldots,f_N$ (truncation order $N\ge 1$ is a design parameter), and uses an $N$-truncated Volterra series for the input $U(t)$.

The combination of the observer and the parameter update law below is a passive identifier: introduced for ODEs in \cite{KK94}, extended to linear hyperbolic PDEs by Anfinsen and Aamo \cite[\S4]{AA19}, and combined with a neural-operator approximation of the kernel-generation map for a linear hyperbolic benchmark by Lamarque, Bhan, Shi, and Krstic \cite[\S7]{LBSK}.

\subsection{Observer}\label{sec:observer}

Define the truncated estimated nonlinearity
\begin{equation}\label{eq:Fhat-N}
\hat F_N[u,\hat f](x,t):=\sum_{n=1}^N\int_{T_n(x)}\hat f_n(x,\xi,t)\prod_{i=1}^n u(\xi_i,t)\,d\xi,
\end{equation}
and the damping driver
\begin{equation}\label{eq:Phi-N}
\Phi_N(u)(t):=\sum_{n=1}^N\frac{\gamma_n}{n!}\,\|u(\cdot,t)\|_{L^2(0,1)}^{\,2n},
\end{equation}
a polynomial of degree $N$ in $\|u\|_{L^2}^2$, with the adaptation gains $\gamma_n>0$ as coefficients. The observer is
\begin{eqnarray}
\hat u_t(x,t) &=& \hat u_x(x,t) + \hat F_N[u,\hat f](x,t) \nonumber\\
&& +\, c_0\bigl(u(x,t)-\hat u(x,t)\bigr)\,\Phi_N(u)(t), \label{eq:obs1-N}\\
\hat u(1,t) &=& U(t), \label{eq:obs2-N}
\end{eqnarray}
where $c_0>0$ is the damping gain and $\hat u(\cdot,0)\in L^2(0,1)$ is arbitrary.
The observer state $\hat u$ is fed by the true state $u$ at $x\in[0,1)$ and shares the boundary control $U(t)$ at $x=1$. The estimation error is $e(x,t):=u(x,t)-\hat u(x,t)$.

\subsection{Parameter update laws}\label{sec:updates}

For $n=1,\ldots,N$, the parameter estimates $\hat f_n(x,\xi,t)\in L^\infty(T_n(1))$ evolve under
\begin{eqnarray}
\partial_t\hat f_n(x,\xi,t) &=& \Proj_{\bar F_n}\!\bigl(\tau_n(x,\xi,t),\,\hat f_n(x,\xi,t)\bigr), \label{eq:upd1-N}\\
\tau_n(x,\xi,t) &:=& \gamma_n\,e^{cx}\,e(x,t)\,\prod_{i=1}^n u(\xi_i,t), \label{eq:upd2-N}
\end{eqnarray}
where $\gamma_n>0$ is the order-$n$ adaptation gain and $c>0$ is a weighting exponent satisfying $c\le 4c_0$. The pointwise projection $\Proj_{\bar F_n}$ keeps $\hat f_n(\cdot,\cdot,t)$ inside the closed convex projection box
\begin{equation}\label{eq:box-N}
\bar F_n:=\Bigl\{\,g\in L^\infty(T_n(1))\;:\;\|g\|_\infty\le \bar f_n\,\Bigr\},\qquad \bar f_n:=\frac{n!\,D_f}{\rho_f^{\,n-1}},
\end{equation}
and satisfies, for $f^\star\in\bar F_n$, the standard inequality
\begin{eqnarray}
\lefteqn{-\bigl(f^\star(x,\xi)-g(x,\xi)\bigr)\Proj_{\bar F_n}(\tau,g)(x,\xi)} \nonumber\\
&\le& -\bigl(f^\star(x,\xi)-g(x,\xi)\bigr)\tau(x,\xi), \label{eq:Proj-ineq}
\end{eqnarray}
pointwise on $T_n(1)$. Initialization $\hat f_n(\cdot,\cdot,0)\in \bar F_n$ is part of the design.

We also define the Lipschitz-restricted projection box
\begin{eqnarray}
\bar F_n^{\rm Lip} &:=& \bigl\{g\in C^{0,1}(T_n(1))\,:\,\|g\|_\infty\le \bar f_n, \nonumber\\
&& \mathrm{Lip}(g)\le L_f\bigr\}\subset \bar F_n \label{eq:Bn-Lip}
\end{eqnarray}
for some prescribed $L_f>0$, used in the neural-operator design.

\subsection{Kernel integral equations}\label{sec:kernel-eqs}

At each time $t\ge 0$, given the current parameter estimate $\hat f(\cdot,\cdot,t)=(\hat f_1,\ldots,\hat f_N)$, the controller kernels $\check k_n(\cdot,\cdot,t)\in L^2(T_n(1))$, $n=1,\ldots,N$, are defined as the unique solution of the cascade of linear Volterra integral equations
\begin{eqnarray}
\check k_n(x,\xi_1,\ldots,\xi_n,t) &=& -\int_0^{\xi_n}\hat f_n(\xi^\sigma,t)\,d\sigma \nonumber\\
&& +\int_0^{\xi_n}\!B^1_n[\check k_n,\hat f_1](\xi^\sigma)\,d\sigma \nonumber\\
&& +\sum_{m=2}^n\int_0^{\xi_n}\!B^m_n[\check k_{n-m+1},\hat f_m](\xi^\sigma)\,d\sigma \label{eq:k-IE}
\end{eqnarray}
on $T_n(1)$, for $n=1,\ldots,N$.
Here the characteristic-coordinate shift is
\begin{equation}\label{eq:xi-sigma}
\xi^\sigma:=(x-\xi_n+\sigma,\,\xi_1-\xi_n+\sigma,\,\ldots,\,\xi_{n-1}-\xi_n+\sigma,\,\sigma),
\end{equation}
and $B^m_n[\kappa,g]$ for $m=1,\ldots,n$ is the bilinear simplex operator from \cite[eqs.~(20)--(21)]{HV1}: for $\kappa$ a function on $T_{n-m+1}(1)$ and $g$ a function on $T_m(1)$, $B^m_n[\kappa,g]$ is a function on $T_n(1)$ defined by
\begin{align}
B^m_n[\kappa,g]&(x,\xi_1,\ldots,\xi_n):=\sum_{j=1}^{n-m+1}\int_{\xi_{j+m-1}}^{\xi_{j-1}}\!\!D^{n,m}_j\bigl[ \nonumber\\
&\kappa(x,\xi_1,\ldots,\hat\xi_{j..j+m-1},s_j) \nonumber\\
&\times\,g(s_j,\xi_j,\ldots,\xi_{j+m-1})\bigr]\,ds_j \label{eq:Bmn-def}
\end{align}
with $\xi_0:=x$, $\hat\xi_{j..j+m-1}$ denoting omission of those indices, and $D^{n,m}_j$ the symmetrization-permutation operator of \cite[eq.~(21)]{HV1}.
Equation \eqref{eq:k-IE} is solved sequentially in $n$ as a cascade: at each $n$, the lower-order kernels $\check k_1,\ldots,\check k_{n-1}$ are already determined and the equation is a linear Volterra integral equation in $\check k_n$ alone. We denote by $\mathcal Q_N$ the kernel-generation operator
\begin{eqnarray}
\mathcal Q_N &:& \prod_{n=1}^N \bar F_n\;\longrightarrow\;\prod_{n=1}^N L^2(T_n(1)), \nonumber\\
\mathcal Q_N(\hat f) &:=& (\check k_1,\ldots,\check k_N), \label{eq:QN-def}
\end{eqnarray}
the map sending the parameter estimate tuple $\hat f=(\hat f_1,\ldots,\hat f_N)\in\prod_n \bar F_n$ to the unique solution $(\check k_1,\ldots,\check k_N)$ of the cascade \eqref{eq:k-IE}.

\subsection{Truncated control law}\label{sec:control-law}

The boundary control input is the $N$-truncated Volterra series
\begin{eqnarray}
U(t) &:=& \sum_{n=1}^N\int_{T_n(1)}\check k_n(1,\xi_1,\ldots,\xi_n,t) \nonumber\\
&& \times\,\prod_{i=1}^n\hat u(\xi_i,t)\,d\xi_n\cdots d\xi_1, \label{eq:U-def}
\end{eqnarray}
acting on the observer state $\hat u(\cdot,t)$, with kernels $\check k_n$ from \eqref{eq:k-IE}.

%==============================================================
\section{Main Results}\label{sec:theorem}
%==============================================================

The closed-loop norm of interest is
\begin{eqnarray}
\Pi(t) &:=& \Bigl(\|u(\cdot,t)\|^2_{L^2(0,1)}+\|\hat u(\cdot,t)\|^2_{L^2(0,1)} \nonumber\\
&& +\sum_{n=1}^N\|\tilde f_n(\cdot,\cdot,t)\|^2_{L^2(T_n(1))}\Bigr)^{1/2}, \label{eq:Pi-def}
\end{eqnarray}
where the parameter estimation error is
\begin{equation}\label{eq:ftilde-def}
\tilde f_n(x,\xi,t):=f_n(x,\xi)-\hat f_n(x,\xi,t),\qquad n=1,\ldots,N.
\end{equation}
Stability of the closed loop is asserted relative to the origin $\Pi=0$. 

\begin{theorem}[A priori $L^2$ stability and asymptotic regulation]\label{thm:stability}
Let Assumption~\ref{ass:factorial} hold, and let the design parameter $c$ satisfy $c\le 4c_0$. There exist constants $C_K^\infty,D_K^\infty,\mu_0>0$ depending on $D_f,\rho_f,N$, and a basin radius $\bar\Pi>0$, such that for any admissible solution $(u,\hat u,\hat f)$ of the closed-loop system \eqref{eq:plant1}--\eqref{eq:plant3}, \eqref{eq:obs1-N}--\eqref{eq:obs2-N}, \eqref{eq:upd1-N}--\eqref{eq:upd2-N}, \eqref{eq:k-IE}, \eqref{eq:U-def} on $[0,T)$ whose initial datum satisfies
\begin{equation}\label{eq:thm-IC}
\Pi(0)\le\bar\Pi,\qquad \hat f_n(\cdot,\cdot,0)\in \bar F_n,\;n=1,\ldots,N,
\end{equation}
the following hold for all $t\in[0,T)$:
\begin{enumerate}
\item \emph{Lyapunov stability of the origin in $\Pi$-norm:}
\begin{equation}\label{eq:thm-stability}
\Pi(t)\le \mu_0\,\Pi(0).
\end{equation}
\item \emph{Asymptotic regulation: plant state, observer state, and input all decay in $L^2$ (provided $T=\infty$).} As $t\to\infty$,
\begin{equation}\label{eq:thm-regulation}
\|u(\cdot,t)\|_{L^2}\to 0,\ \ \|\hat u(\cdot,t)\|_{L^2}\to 0,\ \ |U(t)|\to 0.
\end{equation}
%The parameter error states $\tilde f_n(t)$ remain \emph{bounded} for all $t\ge 0$ but \emph{need not} converge to zero (no persistence-of-excitation assumption is made).
\item \emph{Boundedness of parameter and kernel estimates:} for all $n=1,\ldots,N$, $t\ge 0$, and $x\in[0,1]$,
\begin{eqnarray}
\|\hat f_n(\cdot,\cdot,t)\|_{L^\infty(T_n(1))} &\le& \frac{n!\,D_f}{\rho_f^{\,n-1}}, \label{eq:thm-fhat-bound}\\
\|\check k_n(x,\cdot,t)\|_{L^1(T_n(x))} &\le& D_K^\infty\,(C_K^\infty)^{n-1}\,x^n. \label{eq:thm-kcheck-bound}
\end{eqnarray}
\item \emph{Boundedness of the input.} For all $t\ge 0$,
\begin{eqnarray}
|U(t)| &\le& \sum_{n=1}^N D_K^\infty\,(C_K^\infty\,\|\hat u(\cdot,t)\|_{L^2})^{n-1}\,\|\hat u(\cdot,t)\|_{L^2} \nonumber\\
&\le& \sum_{n=1}^N D_K^\infty\,(C_K^\infty\,\Pi(t))^{n-1}\,\Pi(t)<\infty. \label{eq:thm-U-bound}
\end{eqnarray}
\end{enumerate}
\end{theorem}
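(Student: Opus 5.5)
The proof rests on one Lyapunov functional built in the backstepping coordinates of the observer, and proceeds in four steps.

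\textbf{Step 1: identifier and kernel bounds.} Projection \eqref{eq:Proj-ineq} with $\hat f_n(\cdot,\cdot,0)\in\bar F_n$ keeps $\hat f_n(t)\in\bar F_n$ for all $t$, which is \eqref{eq:thm-fhat-bound}. It also gives $\|\tilde f_n\|_\infty\le 2\bar f_n$. Next I treat the cascade \eqref{eq:k-IE}. For each $n$, the term $B^1_n[\check k_n,\hat f_1]$ is a Volterra operator in $\check k_n$ with the bounded kernel $\hat f_1$, $\|\hat f_1\|_\infty\le D_f$. Along characteristics it has the form $\check k_n=h_n+\mathcal V_n\check k_n$, where $\mathcal V_n$ is of Volterra type in the variable $\xi_n$ (integration over $\sigma\in[0,\xi_n]$). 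Iterating, or using a weighted norm $e^{-\lambda\xi_n}$, makes $\mathcal V_n$ a contraction on $L^\infty$ and $L^1$ for every admissible $f_1$, with no smallness condition. This gives existence, uniqueness and a Gronwall bound $\|\check k_n\|\le e^{D_f}\|h_n\|$. The forcing $h_n$ contains $\hat f_n$ and the $B^m_n$ terms with lower-order kernels. An induction on $n$ that uses the factorial bound \eqref{eq:factorial} and the simplex volume $x^n/n!$ then yields \eqref{eq:thm-kcheck-bound} with constants depending only on $D_f,\rho_f,N$. The bound \eqref{eq:thm-U-bound} follows from \eqref{eq:U-def} by Cauchy--Schwarz on $T_n(1)$, using $\int_{T_n}\prod|\hat u(\xi_i)|\le\|\hat u\|_{L^2}^n$ up to constants absorbed in $D_K^\infty$.

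\textbf{Step 2: error system.} The error $e=u-\hat u$ satisfies
$e_t=e_x-c_0e\Phi_N(u)+\sum_{n\le N}\int_{T_n}\tilde f_n\prod u+R_N[u]$ with $e(1,t)=0$. Here $R_N$ is the Volterra tail $n>N$, bounded via \eqref{eq:factorial} by $\sum_{n>N}(D_f/\rho_f^{n-1})\|u\|^n$ and hence $O(\|u\|^{N+1})$ for $\|u\|<\rho_f$. I take $V_1=\frac12\int_0^1e^{cx}e^2+\sum_n\frac1{2\gamma_n}\|\tilde f_n\|^2$. With \eqref{eq:upd1-N}--\eqref{eq:Proj-ineq} the $\tilde f$ cross terms cancel and
\[
\dot V_1\le -\tfrac c2\|e\|_c^2-\tfrac12 e^2(0)-c_0\Phi_N\|e\|_c^2+|\langle e^{cx}e,R_N\rangle|.
\]
The condition $c\le4c_0$ is used to absorb the remainder cross term by Young's inequality. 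The $\Phi_N$ damping dominates $\|u\|^{2n}$-type terms.

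\textbf{Step 3: observer in backstepping coordinates.} I write $\hat u=u-e$ and set $\hat w=\check K_N[\hat u]$. The cascade \eqref{eq:k-IE} is chosen so that $\hat w_t=\hat w_x+\mathcal E$ with $\hat w(1)=0$. The perturbation $\mathcal E$ collects four pieces: the terms $c_0e\Phi_N$ and $\hat F_N[u]-\hat F_N[\hat u]$, both linear in $e$ times polynomials in $\|u\|,\|\hat u\|$; the truncation mismatch of order $\ge N+1$ in $\hat u$; and $\partial_t\check k_n\cdot\hat u^n$. For the last piece I use Lipschitz dependence of $\mathcal Q_N$ on $\hat f$ in $L^2$, obtained from the same Volterra estimate, so that it is bounded by $|\dot{\hat f}|\lesssim\gamma\|e\|\,\|u\|^n$. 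I then take $V=V_1+a\int_0^1e^{bx}\hat w^2$ with $a$ small, and use norm equivalence $\|\hat w\|\asymp\|\hat u\|$ near the origin from the inverse $\check L_N$. This gives
\[
\dot V\le -\mu\bigl(\|e\|^2+\|\hat w\|^2\bigr)+\text{(cubic and higher in }\Pi)
\]
on a ball $\Pi\le\bar\Pi_0$. The higher-order terms are absorbed for $\Pi$ small. This yields $V(t)\le V(0)$, hence \eqref{eq:thm-stability} via equivalence of $V$ and $\Pi^2$, provided the basin radius $\bar\Pi=\bar\Pi_0/\mu_0$ keeps the trajectory in the ball.

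\textbf{Step 4: regulation.} Integrating the dissipation gives $e,\hat w\in L^2_t$, and the bounded right-hand sides give uniformly continuous norms. Barbalat's lemma then yields $\|e\|,\|\hat u\|\to0$, so $\|u\|\to0$, and $|U|\to0$ follows from \eqref{eq:thm-U-bound}.

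I expect the main obstacle to be Step 3. The time derivative of the kernels and the non-square-integrable coupling between $e$ and $\hat w$ through the boundary trace $e(0,t)$ must be dominated jointly. I would first try exploiting the $\Phi_N$ normalization, which makes $|\dot{\hat f}|$ appear only multiplied by $\|e\|$.
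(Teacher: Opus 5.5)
Your proposal follows the paper's proof essentially step for step. The shared ingredients are: the same passive-identifier functional $V_{\rm iden}$ with exact cancellation of the $\tilde f$ cross terms; the backstepping image of the observer built with the exact cascade kernels, perturbed by the same four terms (damping, plant--observer mismatch, cross-order truncation tail, and the kernel drift $\Omega$, which is controlled through the $L^2$ update-rate bound and charged to the $c_0\Phi_N$ damping); a composite Lyapunov function whose small weight $a$ on the target part is the paper's large $\lambda$ on the identifier part; Volterra-iteration invertibility of the self-coupled cascade with no smallness condition; and Barbalat plus the inverse transformation for regulation. Two remarks only: the boundary-trace coupling you anticipate does not occur, since $\hat w(1,t)=e(1,t)=0$ and the traces $\hat w(0,t)$, $e(0,t)$ enter only as discarded dissipation; and because $\tilde f$ is not dissipated, the remainders must be bounded by a factor vanishing as $\Pi\to 0$ times $\|e\|^2+\|\hat w\|^2$, not merely by $O(\Pi^3)$. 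Your own list of perturbations does deliver this stronger form.
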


By an \emph{admissible solution} we mean a $(u,\hat u,\hat f)$ with $u,\hat u\in L^2(0,1)$ and $\hat f_n\in L^\infty(T_n(1))$ for which the pointwise update law \eqref{eq:upd1-N}--\eqref{eq:upd2-N} is satisfied a.e.\ on $T_n(1)$ at every $t$; in particular, the products $e(x,t)\prod_i u(\xi_i,t)$ admit a.e.-defined representatives on $T_n(1)$.

We next state the neural-operator-adaptive version of Theorem~\ref{thm:stability}. The kernels $\check k_n$ in the controller \eqref{eq:U-def} are obtained by solving the cascade integral equation \eqref{eq:k-IE} at every $t$; this is impractical for online implementation when $\hat f(t)$ varies in time. The neural-operator extension replaces this real-time cascade by a single offline-trained surrogate $\widehat{\mathcal Q}_N$ that maps $\hat f$ to (an approximation of) the kernel tuple $\check k$. The surrogate kernels are
\begin{equation}\label{eq:khat-thm}
\hat k_n(\cdot,\cdot,t):=\widehat{\mathcal Q}_N\bigl(\hat f(\cdot,\cdot,t)\bigr)_n,\qquad n=1,\ldots,N,
\end{equation}
and the implemented controller is
\begin{eqnarray}
U(t) &:=& \hat K_N[\hat u](1,t) \nonumber\\
&:=& \sum_{n=1}^N\int_{T_n(1)}\hat k_n(1,\xi,t)\prod_{i=1}^n\hat u(\xi_i,t)\,d\xi \label{eq:Uhat-thm}
\end{eqnarray}
in place of \eqref{eq:U-def}.

\begin{figure*}[t]
\centering
\begin{tikzpicture}[
  >=Latex, thick,
  qblock/.style={draw, rectangle, rounded corners=2pt, minimum height=22mm, minimum width=42mm, align=center, font=\normalsize, dashed, very thick},
  signal/.style={font=\small, inner sep=3pt}
]
\node[qblock] (Q) {{\color{blue}$\widehat{\mathcal Q}_N$}\\[3pt]{\footnotesize kernel PDE solver}\\[1pt]{\footnotesize trained offline}};
\node[signal, left=15mm of Q.west, align=right] (in) {online plant\\parameter estimate\\[2pt]$\hat{\mathbf f}_N(t)$};
\node[signal, right=15mm of Q.east, align=left] (out) {online control gains\\(kernel tuple)\\[2pt]$\hat{\mathbf k}_N(t)$};
\draw[->] (in.east) -- (Q.west);
\draw[->] (Q.east) -- (out.west);
\end{tikzpicture}
\caption{The neural operator $\widehat{\mathcal Q}_N$ is an offline-trained approximate solver of the kernel-generating cascade \eqref{eq:k-IE}, a sequence of PDEs on simplex domains $T_n(1)\subset\mathbb{R}^n$ of increasing dimension $n=1,\ldots,N$. The most computationally expensive component of the adaptive implementation --- the online solution of this PDE cascade at the running parameter estimate --- is pre-learned. Queried online at the identifier output $\hat{\mathbf f}_N(t)$, $\widehat{\mathcal Q}_N$ produces the online control gains $\hat{\mathbf k}_N(t)=\widehat{\mathcal Q}_N(\hat{\mathbf f}_N(t))$.}\label{fig:Qhat}
\end{figure*}

\begin{theorem}[Adaptive truncated neural backstepping]\label{thm:neural}
Let Assumption~\ref{ass:factorial} hold, and let the design parameter $c$ satisfy $c\le 4c_0$. Then there exist a neural-operator surrogate $\widehat{\mathcal Q}_N$ of $\mathcal Q_N$ on the Lipschitz-restricted projection box $\prod_n \bar F_n^{\rm Lip}$ defined in \eqref{eq:Bn-Lip} and a basin radius $\bar\Pi'>0$ such that, for any admissible solution of the closed-loop system \eqref{eq:plant1}--\eqref{eq:plant3}, \eqref{eq:obs1-N}--\eqref{eq:obs2-N}, \eqref{eq:upd1-N}--\eqref{eq:upd2-N}, \eqref{eq:Uhat-thm} with initial data $\hat f_n(\cdot,\cdot,0)\in \bar F_n^{\rm Lip}$ and $\Pi(0)\le\bar\Pi'$ that satisfies, additionally, the regularity hypothesis
\begin{equation}\label{eq:Lip-traj-hyp}
\hat f_n(\cdot,\cdot,t)\in \bar F_n^{\rm Lip},\qquad n=1,\ldots,N,\;t\ge 0,
\end{equation}
the conclusions \emph{(a)} (Lyapunov stability) and \emph{(b)} (asymptotic regulation) of Theorem~\ref{thm:stability} hold; in particular,
\begin{equation}\label{eq:thm-neural-reg}
\|u(\cdot,t)\|_{L^2}\to 0,\quad \|\hat u(\cdot,t)\|_{L^2}\to 0,\quad |U(t)|\to 0
\end{equation}
as $t\to\infty$. Moreover, the learned kernels $\hat k_n(\cdot,\cdot,t)=\widehat{\mathcal Q}_N(\hat f(t))_n$ and the input $U(t)$ remain uniformly bounded in $L^2(T_n(1))$ and $\mathbb R$, respectively, for all $t\ge 0$. 
%The exact cascade kernels $\check k_n$ defined by \eqref{eq:k-IE} appear in the analysis only as analytical comparison objects through $\Delta_n:=\check k_n-\hat k_n$; they are \emph{not} part of the implemented closed-loop dynamics.
\end{theorem}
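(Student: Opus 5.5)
The plan is to reuse the Lyapunov architecture behind Theorem~\ref{thm:stability} and treat the neural-operator error as one more vanishing perturbation. The argument has three steps.

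\emph{Step 1: continuity of $\mathcal Q_N$ and existence of the surrogate.} First I will show that $\mathcal Q_N$ is continuous, in fact Lipschitz, from $\prod_n \bar F_n^{\rm Lip}$ (with the sup-norm) into $\prod_n L^2(T_n(1))$. Each stage of \eqref{eq:k-IE} is a linear Volterra integral equation in $\check k_n$, self-coupled through $\hat f_1$, and its data depend bilinearly on $\check k_1,\ldots,\check k_{n-1}$ and $\hat f_2,\ldots,\hat f_n$.
\begin{itemize}
\item The invertibility behind Theorem~\ref{thm:stability} comes from a Neumann/successive-approximation argument in which the $m$-fold iterated Volterra integral is bounded by $(C\|\hat f_1\|_\infty)^m/m!$. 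This gives a resolvent bound that is uniform over $\bar F_1$.
\item Differencing the equations for two estimate tuples $\hat f,\hat f'$ and inducting on $n$ then gives $\|\check k_n-\check k_n'\|\le L_Q\sum_{m\le n}\|\hat f_m-\hat f_m'\|_\infty$.
\item The set $\prod_n\bar F_n^{\rm Lip}$ is compact in $C^0$ by Arzel\`a--Ascoli.
\end{itemize}
The universal approximation theorem for neural operators \cite{Lu2021Universal} therefore provides, for every $\epsilon>0$, a surrogate $\widehat{\mathcal Q}_N$ with $\sup\|\check k_n-\hat k_n\|_{L^2(T_n(1))}\le\epsilon$ on this set. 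Hypothesis \eqref{eq:Lip-traj-hyp} keeps $\hat f(t)$ inside the set where the bound holds. Boundedness of $\hat k_n$ then follows immediately from the bound on $\check k_n$ in \eqref{eq:thm-kcheck-bound} plus $\epsilon$.

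\emph{Step 2: perturbed target system.} I will write the implemented input as $U=\check K_N[\hat u](1)-\Delta[\hat u]$, where
\[
\Delta[\hat u]:=\sum_n\int_{T_n(1)}\Delta_n(1,\xi,t)\prod_i\hat u(\xi_i)\,d\xi,\qquad \Delta_n:=\check k_n-\hat k_n .
\]
Applying the exact transformation $\check K_N$ (inverse $\check L_N$) to the observer gives the same target system as in the proof of Theorem~\ref{thm:stability}, with one new term: the boundary value of the target state becomes $-\Delta[\hat u]$ instead of $0$. By Cauchy--Schwarz this term is bounded by $|\Delta[\hat u]|\le \epsilon\sum_n\|\hat u\|_{L^2}^n$. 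The term is vanishing, since it is at least linear in the state. The point to watch is that the kernel estimates in the trace step use $L^2$ rather than $L^1$ in $\xi$, which is why the $L^2(T_n(1))$ approximation error is the right metric.

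\emph{Step 3: Lyapunov argument.} I will take the same Lyapunov functional as for Theorem~\ref{thm:stability}, namely the weighted $\int e^{cx}$ norms of the target state and of $e$, plus $\sum_n\|\tilde f_n\|^2/\gamma_n$. The nonzero boundary value adds a term $+\tfrac{a}{2}e^{c}\,\Delta[\hat u]^2$ to $\dot V$, which is at most $a e^c\epsilon^2 N\sum_n\|\hat u\|^{2n}$.
\begin{itemize}
\item The quadratic part, $a e^c\epsilon^2 N\|\hat u\|^2$, must be dominated by the dissipation $-\tfrac{ca}{2}\int e^{cx}w^2$. This uses $\|\hat u\|\le\|\check L_N\|\,\|w\|$ locally, and it forces $\epsilon\le\epsilon^*$, which fixes the surrogate accuracy.
\item The higher powers are absorbed locally by restricting to a sublevel set, which shrinks the basin to $\bar\Pi'(\epsilon)$. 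The basin tends to $\bar\Pi$ as $\epsilon\to0$ because the absorption constants are continuous in $\epsilon$.
\end{itemize}
Everything else, namely the projection inequality \eqref{eq:Proj-ineq} cancelling the parameter-error cross terms and $c\le4c_0$ handling the $\Phi_N$ damping of the Volterra tail, is unchanged.

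From $\dot V\le -\beta(\|w\|^2+\|e\|^2)$ on the basin I get the bound $\Pi(t)\le\mu_0'\Pi(0)$ by norm equivalence, as in Theorem~\ref{thm:stability}. Square-integrability of the state together with a bound on its derivative then gives the regulation conclusion \eqref{eq:thm-neural-reg} by Barbalat's lemma. Decay of $|U|$ follows from the bound on $\hat k_n$.

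I expect the main obstacle to be Step 1, specifically making the Lipschitz and resolvent bounds uniform over the box despite the self-coupling through $\hat f_1$. Compactness is what allows a single uniform $\epsilon$, and it is also the reason \eqref{eq:Lip-traj-hyp} is imposed rather than proved.
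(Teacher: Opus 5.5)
Your proposal follows the paper's proof essentially step for step. The shared steps are: Lipschitz continuity of $\mathcal Q_N$ by cascade induction on the differenced Volterra equations, using the factorial (Neumann) resolvent bound uniformly over the box; Arzel\`a--Ascoli compactness of $\prod_n\bar F_n^{\rm Lip}$ combined with the universal approximation theorem; and a backstepping transformation built from the exact kernels $\check k_n$. That transformation leaves the interior target equation unchanged, so the approximation error enters only through the boundary residual $w(1,t)=\Gamma(t)=-\sum_n\int_{T_n(1)}\Delta_n(1,\xi,t)\prod_i\hat u(\xi_i,t)\,d\xi$, which is absorbed into the $\|w\|_c^2$ dissipation via the inverse transformation and then closed by norm equivalence and Barbalat. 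Your remark that the linear-in-state part of $\Gamma$ forces $\epsilon$ itself to be small, and not merely the basin, is exactly what the paper's condition $\Theta_w+\Theta_w^\Gamma<c/8$ encodes, since $\Theta_w^\Gamma=\tfrac12 e^cG_2^2H^\Gamma_N(\bar u)\varepsilon^2$ with $H^\Gamma_N(0)=1$.
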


%==============================================================
\section{Identifier Lyapunov Inequality}\label{sec:proof-identifier}
%==============================================================

We start by denoting the observer error as
\begin{equation}\label{eq:e-def}
e(x,t):=u(x,t)-\hat u(x,t).
\end{equation}
Subtracting \eqref{eq:obs1-N} from \eqref{eq:plant1} and using $u=\hat u+e$,
\begin{equation}\label{eq:e-dynamics}
e_t=e_x+\tilde F_N[u,\tilde f]+R_N[u]-c_0\,e\,\Phi_N(u),\qquad e(1,t)=0,
\end{equation}
where the truncated parameter-error feedforward is
\begin{eqnarray}
\tilde F_N[u,\tilde f](x,t) &:=& \sum_{n=1}^N\int_{T_n(x)}\tilde f_n(x,\xi,t) \nonumber\\
&& \times\,\prod_{i=1}^n u(\xi_i,t)\,d\xi, \label{eq:Ftilde-N}
\end{eqnarray}
and the unmodeled plant tail (the \emph{identifier truncation residual}) is
\begin{eqnarray}
R_N[u](x,t) &:=& \sum_{n>N}\int_{T_n(x)}f_n(x,\xi)\prod_{i=1}^n u(\xi_i,t)\,d\xi \nonumber\\
&=& F[u]-\hat F_N[u,f]. \label{eq:RNu-def}
\end{eqnarray}
Note: \eqref{eq:RNu-def} bounds independently of $\hat f$, since it depends only on the true $f_n$ for $n>N$.

We introduce the exponentially-weighted norm 
\begin{equation}\label{eq:c-norm}
\|e\|_c^2:=\int_0^1 e^{cx}e(x,t)^2\,dx
\end{equation}
and the identifier Lyapunov function 
\begin{equation}\label{eq:V-N}
V_{\rm iden}(t):=\tfrac12\,\|e(\cdot,t)\|_c^2+\sum_{n=1}^N\frac{1}{2\gamma_n}\,\|\tilde f_n(\cdot,\cdot,t)\|^2_{L^2(T_n(1))}.
\end{equation}

\begin{lemma}[Identifier $\dot V_{\rm iden}$ inequality with truncation residual]\label{lem:Vdot-N}
Let Assumption~\ref{ass:factorial} hold. Then along the closed-loop dynamics,
\begin{eqnarray}
\dot V_{\rm iden}(t) &\le& -\tfrac12\,e^2(0,t)-\tfrac{c}{2}\,\|e(\cdot,t)\|_c^2 \nonumber\\
&& -\,c_0\,\|e(\cdot,t)\|_c^2\,\Phi_N(u)(t) \nonumber\\
&& +\int_0^1 e^{cx}\,e(x,t)\,R_N[u](x,t)\,dx. \label{eq:Vdot-N}
\end{eqnarray}
Moreover the parameter update rate satisfies
\begin{equation}\label{eq:P4-N}
\sum_{n=1}^N\frac{1}{\gamma_n}\,\|\partial_t\hat f_n(\cdot,\cdot,t)\|^2_{L^2(T_n(1))}\le e^c\,\|e(\cdot,t)\|_c^2\,\Phi_N(u)(t),
\end{equation}
and the projection enforces $\hat f_n(\cdot,\cdot,t)\in \bar F_n$ for all $t\ge 0$, hence
\begin{equation}\label{eq:P1-N}
\|\hat f_n(\cdot,\cdot,t)\|_{L^\infty(T_n(1))}\le n!\,D_f/\rho_f^{n-1}.
\end{equation}
\end{lemma}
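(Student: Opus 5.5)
The plan is a direct computation of $\dot V_{\rm iden}$ along \eqref{eq:e-dynamics} and \eqref{eq:upd1-N}--\eqref{eq:upd2-N}, in which the cross term produced by the parameter error is cancelled by the update law in the usual passive-identifier way.

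\emph{Observer-error part.} Differentiating $\frac12\|e\|_c^2$ and substituting \eqref{eq:e-dynamics} gives
\[
\frac{d}{dt}\tfrac12\|e\|_c^2=\int_0^1 e^{cx}e\,e_x\,dx+\int_0^1 e^{cx}e\,\tilde F_N\,dx+\int_0^1 e^{cx}e\,R_N\,dx-c_0\Phi_N(u)\|e\|_c^2 .
\]
For the transport term, integrate by parts:
\[
\int_0^1 e^{cx}e\,e_x\,dx=\tfrac12\bigl[e^{cx}e^2\bigr]_0^1-\tfrac c2\|e\|_c^2 .
\]
Since $e(1,t)=0$ by \eqref{eq:obs2-N} and \eqref{eq:plant2}, this equals $-\tfrac12e^2(0,t)-\tfrac c2\|e\|_c^2$, which are exactly the first two terms of \eqref{eq:Vdot-N}. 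The damping term already has the required form. The residual term is kept as it stands.

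\emph{Parameter part.} Differentiating the $\tilde f_n$ terms, and using $\partial_t\tilde f_n=-\partial_t\hat f_n$, gives
\[
\sum_n\frac1{\gamma_n}\int_{T_n(1)}\tilde f_n\,\partial_t\tilde f_n=-\sum_n\frac1{\gamma_n}\int_{T_n(1)}(f_n-\hat f_n)\Proj_{\bar F_n}(\tau_n,\hat f_n).
\]
Assumption~\ref{ass:factorial} gives $f_n\in\bar F_n$. The projection inequality \eqref{eq:Proj-ineq} then bounds this pointwise by $-\frac1{\gamma_n}\tilde f_n\tau_n=-\tilde f_n e^{cx}e(x)\prod_iu(\xi_i)$. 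Integrate over $T_n(1)=\{(x,\xi):\xi\in T_n(x),\,x\in[0,1]\}$ with Fubini. The result is precisely $-\int_0^1e^{cx}e\,\tilde F_N\,dx$, which cancels the cross term above. This gives \eqref{eq:Vdot-N}. The main obstacle I expect is the measure-theoretic bookkeeping here. The $\tilde f_n$ live on $T_n(1)\subset\mathbb R^{n+1}$ in the variables $(x,\xi)$, while $\tilde F_N$ integrates over the slice $T_n(x)$. One must check that the $L^2(T_n(1))$ pairing is indeed this iterated integral. One also needs differentiability of $V_{\rm iden}$ in time, which is supplied by the admissibility hypothesis: the update law holds a.e.\ with a.e.-defined products.

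\emph{Update-rate bound.} The projection never increases magnitude pointwise, $|\Proj(\tau,g)|\le|\tau|$. Hence
\[
\frac1{\gamma_n}\|\partial_t\hat f_n\|^2\le\gamma_n\int_0^1 e^{2cx}e^2\int_{T_n(x)}\prod_iu(\xi_i)^2\,d\xi\,dx .
\]
By symmetry of the integrand, the inner integral equals $\frac1{n!}\int_{[0,x]^n}\prod_i u^2\le\|u\|^{2n}/n!$. Together with $e^{2cx}\le e^c e^{cx}$, summing over $n$ yields $e^c\|e\|_c^2\Phi_N(u)$, which is \eqref{eq:P4-N}.

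\emph{Invariance.} The box property \eqref{eq:P1-N} follows from the defining property of the pointwise projection. Since $\hat f_n(\cdot,\cdot,0)\in\bar F_n$, at any point where $|\hat f_n|=\bar f_n$ the outward component of $\tau_n$ is removed. Therefore $\hat f_n$ stays in the closed convex set $\bar F_n$ for all $t\ge0$.
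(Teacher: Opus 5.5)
Your proposal is correct and follows the paper's proof essentially step for step. The steps match: integration by parts using $e(1,t)=0$; cancellation of the $\tilde F_N$ cross term through the pointwise projection inequality and Fubini over the combined simplex; the bound $|\partial_t\hat f_n|\le|\tau_n|$, combined with simplex symmetrization and $e^{2cx}\le e^{c}e^{cx}$, for the update rate; and box invariance from the projection. You also state explicitly that Assumption~\ref{ass:factorial} places $f_n$ in $\bar F_n$, so that the projection inequality applies with $f^\star=f_n$; the paper uses this step without stating it.
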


\begin{proof}
Compute
\begin{eqnarray}
\dot V_{\rm iden} &=& \underbrace{\int_0^1 e^{cx}e\,e_t\,dx}_{=I_1+I_2+I_R-c_0\|e\|_c^2\Phi_N(u)} \nonumber\\
&& -\sum_{n=1}^N\frac{1}{\gamma_n}\langle\tilde f_n,\partial_t\hat f_n\rangle, \label{eq:Vdot-expand}
\end{eqnarray}
where $e_t=e_x+\tilde F_N[u,\tilde f]+R_N[u]-c_0 e\Phi_N(u)$ from \eqref{eq:obs1-N}, and
\begin{eqnarray}
I_1 &:=& \int_0^1 e^{cx}e\,e_x\,dx, \nonumber\\
I_2 &:=& \int_0^1 e^{cx}e\,\tilde F_N[u,\tilde f]\,dx, \nonumber\\
I_R &:=& \int_0^1 e^{cx}e\,R_N[u]\,dx. \label{eq:I123-def}
\end{eqnarray}
 Since $e\,e_x=\tfrac12\partial_x(e^2)$, integration by parts in $x$ with $e(1,t)=0$ gives
\begin{eqnarray}
I_1 &=& \tfrac12\int_0^1 e^{cx}\partial_x(e^2)\,dx \nonumber\\
&=& \tfrac12\bigl[e^{cx}e^2\bigr]_0^1-\tfrac{c}{2}\|e\|_c^2=-\tfrac12 e^2(0,t)-\tfrac{c}{2}\|e\|_c^2. \label{eq:I1-N}
\end{eqnarray}
 The regressor $\tau_n=\gamma_n e^{cx}e\prod_i u(\xi_i)$ from \eqref{eq:upd2-N} is exactly the factor multiplying $\tilde f_n$ in $I_2$; integrating over the combined simplex $\{0\le\xi_n\le\cdots\le\xi_1\le x\le 1\}$,
\begin{equation}\label{eq:I2-N}
I_2=\sum_{n=1}^N\frac{1}{\gamma_n}\langle\tilde f_n,\tau_n\rangle_{L^2(T_n(1))}.
\end{equation}
 For each $n$, the standard projection property gives $-\tilde f_n\,\Proj_{\bar F_n}(\tau_n,\hat f_n)\le-\tilde f_n\,\tau_n$ pointwise on $T_n(1)$, hence
\begin{equation}\label{eq:proj-int-N}
-\sum_n\frac{1}{\gamma_n}\langle\tilde f_n,\partial_t\hat f_n\rangle\le-\sum_n\frac{1}{\gamma_n}\langle\tilde f_n,\tau_n\rangle=-I_2.
\end{equation}
 Substituting \eqref{eq:I1-N} and \eqref{eq:proj-int-N} into \eqref{eq:Vdot-expand}, the $I_2$-term cancels, leaving \eqref{eq:Vdot-N}.
 From $|\partial_t\hat f_n|\le|\tau_n|$ and the simplex symmetrization $\int_{T_n(x)}\prod u^2\,d\xi=\|u\|^{2n}_{L^2(0,x)}/n!$,
\begin{equation}\label{eq:tau-bound}
\|\partial_t\hat f_n\|^2_{L^2(T_n(1))}\le\|\tau_n\|^2_{L^2(T_n(1))}\le \gamma_n^2\,e^c\,\|e\|_c^2\,\frac{\|u\|^{2n}_{L^2}}{n!},
\end{equation}
and dividing by $\gamma_n$ and summing yields \eqref{eq:P4-N}. Inequality \eqref{eq:P1-N} is immediate from  projection.
\end{proof}

%==============================================================
\section{Kernel Cascade: Derivation and Bounds}\label{sec:proof-kernel}
%==============================================================

This section derives the kernel integral equations \eqref{eq:k-IE} from a backstepping matching condition and establishes the kernel $L^2$- and $L^1$-slice bounds in \eqref{eq:thm-kcheck-bound}.

The backstepping transformation map at frozen parameter estimate $\hat f$ is
\begin{eqnarray}
\check K_N[\hat u](x,t) &:=& \sum_{n=1}^N\int_{T_n(x)}\check k_n(x,\xi,t)\prod_{i=1}^n\hat u(\xi_i,t)\,d\xi, \nonumber\\
w(x,t) &:=& \hat u(x,t)-\check K_N[\hat u](x,t). \label{eq:K-formal}
\end{eqnarray}
By \eqref{eq:obs2-N} and \eqref{eq:U-def}, $w(1,t)=0$ identically. The boundary condition $w(1,t)=0$ is what the controller \eqref{eq:U-def} is \emph{designed to enforce}; the boundary condition $\check k_n|_{\xi_n=0}=0$ is a  property of the kernel PDE cascade.

\subsection{Matching identity}\label{sec:matching}

The following lemma establishes that the kernels defined by \eqref{eq:k-IE} satisfy a kernel PDE.

\begin{lemma}[Matching identity]\label{lem:matching}
The kernels $\check k_n(\cdot,\cdot,t)\in L^2(T_n(1))$, $n=1,\ldots,N$, defined by the integral equation \eqref{eq:k-IE}, satisfy the kernel PDE
\begin{eqnarray}
\partial_x\check k_n+\sum_{i=1}^n\partial_{\xi_i}\check k_n &=& -\hat f_n+B^1_n[\check k_n,\hat f_1] \nonumber\\
&& +\sum_{m=2}^n B^m_n[\check k_{n-m+1},\hat f_m], \nonumber\\
\check k_n|_{\xi_n=0} &=& 0,\quad n=1,\ldots,N. \label{eq:matching-PDE}
\end{eqnarray}
% Conversely, for smooth kernels $\check k_n\in C^1(T_n(1))$, the two formulations are equivalent, with \eqref{eq:matching-PDE} obtained from \eqref{eq:k-IE} by characteristic differentiation. Moreover, when the kernels $\check k_n$ satisfy \eqref{eq:matching-PDE} and the observer dynamics \eqref{eq:obs1-N} are run at $u\equiv\hat u$ and with damping turned off (so $\hat u_t=\hat u_x+\hat F_N[\hat u,\hat f]$), the backstepping image $w$ from \eqref{eq:K-formal} satisfies the pure transport equation $w_t=w_x$, $w(1,t)=0$.
\end{lemma}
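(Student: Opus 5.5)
The matching identity is a statement about characteristics. The integral equation \eqref{eq:k-IE} is the kernel PDE \eqref{eq:matching-PDE} integrated along the characteristic lines of the constant-coefficient operator $\partial_x+\sum_i\partial_{\xi_i}$. Along such a line all $n+1$ coordinates increase at unit rate, and the line starts on the face $\xi_n=0$. The plan is to fix a point $(x,\xi_1,\ldots,\xi_n)\in T_n(1)$ and parametrize the line through it by $\sigma\in[0,\xi_n]$ via the shift $\xi^\sigma$ of \eqref{eq:xi-sigma}. At $\sigma=0$ the point $\xi^0$ lies on $\{\xi_n=0\}$, and at $\sigma=\xi_n$ it is the original point. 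The simplex ordering $0\le\sigma\le\xi_{n-1}-\xi_n+\sigma\le\cdots\le x-\xi_n+\sigma\le 1$ is preserved for every $\sigma\in[0,\xi_n]$. Hence the whole segment stays in $T_n(1)$, and the integrands in \eqref{eq:k-IE} are evaluated on the domain where $\hat f_n$ and $B^m_n[\cdot,\cdot]$ are defined.

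\textbf{Step 1 (well-posedness and regularity).} Working inductively in $n$, I first record that \eqref{eq:k-IE} has a unique solution. This is already built into the definition of $\mathcal Q_N$: the lower-order kernels are fixed, and the equation in $\check k_n$ is a linear Volterra equation of the second kind, to be solved by Picard iteration. Each application of $B^1_n[\cdot,\hat f_1]$, followed by integration in $\sigma$ over $[0,\xi_n]$, gains a factor $\|\hat f_1\|_\infty\le D_f$ times a length. The iterates therefore have factorial decay of the form $(C\,D_f)^j/j!$, so no smallness of $\hat f_1$ is needed. The resulting $\check k_n$ is bounded, and it is absolutely continuous along every characteristic, since it is an integral in $\sigma$ of an $L^\infty$ function. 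This regularity is the main obstacle. With $\hat f_n$ only in $L^\infty$, the PDE must be read in the characteristic (directional-derivative) sense, and I would state it that way explicitly.

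\textbf{Step 2 (boundary condition).} Setting $\xi_n=0$ in \eqref{eq:k-IE}, every integral runs over the empty interval $[0,0]$, so $\check k_n|_{\xi_n=0}=0$.

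\textbf{Step 3 (differentiation along characteristics).} Write the right-hand side of \eqref{eq:matching-PDE} as a source $G_n:=-\hat f_n+B^1_n[\check k_n,\hat f_1]+\sum_{m=2}^nB^m_n[\check k_{n-m+1},\hat f_m]$. Then \eqref{eq:k-IE} reads $\check k_n(P)=\int_0^{\xi_n}G_n(\xi^\sigma)\,d\sigma$. Define $\phi(s):=\check k_n(P+s\mathbf 1)$ for the point $P=(x,\xi_1,\ldots,\xi_n)$, where $\mathbf 1=(1,\ldots,1)\in\mathbb R^{n+1}$. The characteristic through $P+s\mathbf 1$ is the same line shifted, so the representation gives $\phi(s)=\int_0^{\xi_n+s}G_n(\xi^{\sigma})\,d\sigma$ with the same base point on $\{\xi_n=0\}$. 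By the Lebesgue differentiation theorem, $\phi'(0)=G_n(P)$ for a.e.\ $\sigma$-value along each line. Fubini over the transversal coordinates (the base points on $\{\xi_n=0\}$) then yields $(\partial_x+\sum_i\partial_{\xi_i})\check k_n=G_n$ a.e.\ on $T_n(1)$. If $\hat f\in\bar F^{\rm Lip}$, then $G_n$ is continuous along the lines and the identity holds pointwise. This is exactly \eqref{eq:matching-PDE}.

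The only delicate point beyond Step 1 is that $B^1_n[\check k_n,\hat f_1]$ depends on $\check k_n$ at points off the characteristic. This causes no difficulty: Step 3 treats $G_n$ as a fixed function once $\check k_n$ is known, and the self-coupling affects only existence, which Step 1 handles.
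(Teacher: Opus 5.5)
Your argument is correct and uses the same characteristic argument as the paper, run in the direction the statement actually asserts. The paper spends most of its proof deriving \eqref{eq:matching-PDE} from the requirement $w_t=w_x$ and integrating it along $(1,\ldots,1)$ to obtain \eqref{eq:k-IE}. It then handles the asserted direction in one line (differentiate \eqref{eq:k-IE} along the characteristic, with $\check k_n|_{\xi_n=0}=0$ from the empty integral), and your Steps~2--3 make that line rigorous in the a.e.\ directional-derivative sense.

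One caution on Step~1: the factorial decay of your Picard iterates has to be organized in the first coordinate, not in $\xi_n$. $B^1_n[\cdot,\hat f_1]$ stays on the slice of fixed $x$, and the $\sigma$-integral sweeps $x'=x-\xi_n+\sigma\in[x-\xi_n,x]$, so the dependence is nested in $x$. By contrast, the $j=n$ term of $B^1_n$ reads $\check k_n$ at last coordinate $s_n\in[\xi_n,\xi_{n-1}]$, so the dependence is not nested in $\xi_n$.
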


\begin{proof}
%We treat the matching identity in two passes. The integrated form \eqref{eq:k-IE} is taken as the primary definition of $\check k_n$ at the $L^2$-level: it makes sense for $\check k_n\in L^2(T_n(1))$ without further regularity. The kernel PDE \eqref{eq:matching-PDE} is the formal differential counterpart: it holds rigorously when $\check k_n\in C^1(T_n(1))$ (or more generally has weak derivatives along the characteristic direction $(1,\ldots,1)$). 
The two formulations are equivalent for smooth kernels via integration along characteristics; for $L^2$-kernels we use \eqref{eq:k-IE} as the working definition and view \eqref{eq:matching-PDE} as the formal/distributional condition that motivates it. The proof goes in three steps: (1) the matching condition \eqref{eq:matching-PDE} is what \emph{makes} $w_t=w_x$ when $\hat u$ is frozen at $u\equiv\hat u$ and damping is off, by an explicit computation of $\partial_t \check K_N[\hat u]$ and $\partial_x \check K_N[\hat u]$; (2) the kernel PDE \eqref{eq:matching-PDE} is converted to integral form \eqref{eq:k-IE} by integration along the constant characteristic direction $(1,1,\ldots,1)$ of the differential operator $\partial_x+\sum_i\partial_{\xi_i}$; (3) the inflow boundary $\check k_n|_{\xi_n=0}=0$ is preserved by both forms.
Differentiating $\check K_N[\hat u]$ in $t$ at frozen $\hat f$ (so $\partial_t\check k_n=0$, since the kernels depend on $\hat f$ only) and substituting $\hat u_t=\hat u_x+\hat F_N[\hat u,\hat f]$ at the bilinear factor,
\begin{eqnarray}
\partial_t \check K_N[\hat u]\Big|_{\hat f\,\text{frozen}} &=& \sum_n\!\int_{T_n(x)}\!\!\check k_n\sum_j[\hat u_x(\xi_j) \nonumber\\
&& +\hat F_N[\hat u,\hat f](\xi_j)]\!\prod_{i\ne j}\!\hat u\,d\xi. \label{eq:partialt-K-formal}
\end{eqnarray}
Differentiating $\check K_N[\hat u]$ in $x$ at the moving boundary $\xi_1\le x$ and using $\check k_n|_{\xi_n=0}=0$ to annihilate the simplex-face boundary contributions,
\begin{eqnarray}
\partial_x \check K_N[\hat u] &=& \sum_n\!\int_{T_n(x)}\!\!\bigl[\partial_x\check k_n+\textstyle\sum_i\partial_{\xi_i}\check k_n\bigr]\prod\hat u\,d\xi \nonumber\\
&& +\sum_n\!\int_{T_n(x)}\!\!\check k_n\sum_j\hat u_x(\xi_j)\!\prod_{i\ne j}\!\hat u\,d\xi. \label{eq:partialx-K-formal}
\end{eqnarray}
Computing $w_t-w_x=\hat u_t-\hat u_x-\partial_t \check K_N[\hat u]+\partial_x \check K_N[\hat u]$ at $u\equiv\hat u$ and damping off:
\begin{eqnarray}
w_t-w_x &=& \hat F_N[\hat u,\hat f] \nonumber\\
&& -\sum_n\!\int_{T_n(x)}\!\!\bigl[\partial_x\check k_n+\textstyle\sum_i\partial_{\xi_i}\check k_n\bigr]\prod\hat u\,d\xi \nonumber\\
&& -\sum_n\!\int_{T_n(x)}\!\!\check k_n\sum_j\hat F_N[\hat u,\hat f](\xi_j)\!\prod_{i\ne j}\!\hat u\,d\xi. \label{eq:wt-wx-frozen}
\end{eqnarray}
The transport-derivative terms $\sum_n\int\check k_n\sum_j\hat u_x(\xi_j)\prod_{i\ne j}\hat u\,d\xi$ from the $t$- and $x$-pieces \emph{cancel pairwise}.
For $w_t=w_x$ to hold, the right-hand side of \eqref{eq:wt-wx-frozen} must vanish. Substituting $\hat F_N[\hat u,\hat f](\xi_j)=\sum_{m=1}^N\int_{T_m(\xi_j)}\hat f_m\prod\hat u\,d\eta$ into the bilinear $\check k_n\otimes\hat F_N$ piece at order $n$, and applying the symmetrization identity \cite[Lemma~A.1]{HV1} to express the resulting $(n+m-1)$-tensor in the standardized $T_{n+m-1}$-form, the order-$N'$ ($:=n+m-1$) matching gives
\begin{eqnarray}
\lefteqn{\bigl[\partial_x\check k_{N'}+\textstyle\sum_i\partial_{\xi_i}\check k_{N'}\bigr]+\hat f_{N'}-B^1_{N'}[\check k_{N'},\hat f_1]} \nonumber\\
&& -\sum_{m=2}^{N'}B^m_{N'}[\check k_{N'-m+1},\hat f_m]=0, \label{eq:matching-collected}
\end{eqnarray}
which is precisely \eqref{eq:matching-PDE}.
The differential operator $\partial_x+\sum_i\partial_{\xi_i}$ has constant characteristic direction $(1,1,\ldots,1)$. A point $(x,\xi_1,\ldots,\xi_n)\in T_n(1)$ is reached at characteristic-arclength $\xi_n$ from the inflow point $\xi^0:=(x-\xi_n,\xi_1-\xi_n,\ldots,\xi_{n-1}-\xi_n,0)$. Integrating \eqref{eq:matching-PDE} along the characteristic from $\sigma=0$ (where $\check k_n=0$ by the inflow BC) to $\sigma=\xi_n$,
\begin{eqnarray}
\check k_n(x,\xi) &=& \int_0^{\xi_n}\Bigl\{-\hat f_n(\xi^\sigma)+B^1_n[\check k_n,\hat f_1](\xi^\sigma) \nonumber\\
&& +\sum_{m=2}^n B^m_n[\check k_{n-m+1},\hat f_m](\xi^\sigma)\Bigr\}d\sigma, \label{eq:char-integration}
\end{eqnarray}
which is exactly \eqref{eq:k-IE}. Conversely, differentiating \eqref{eq:k-IE} along the characteristic gives back \eqref{eq:matching-PDE}.
The inflow boundary condition $\check k_n|_{\xi_n=0}=0$ is built into both \eqref{eq:k-IE} and \eqref{eq:matching-PDE}: in \eqref{eq:k-IE} the integral $\int_0^{\xi_n}$ vanishes at $\xi_n=0$; in \eqref{eq:matching-PDE} it is the BC stated. The  reason for this BC in the first place is that \eqref{eq:plant3} has no $u(0,t)$-trace (every Volterra integrand integrates $u$ on the open simplex away from $\xi=0$); matching the $u(0)$-trace pieces forces $\check k_n|_{\xi_n=0}=0$ at every order $n\ge 1$.
\end{proof}

\subsection{Kernel bounds}\label{sec:kernel-bounds}

For the analysis below we recast \eqref{eq:k-IE} in compact operator form. Equation \eqref{eq:k-IE} couples $\check k_n$ to itself through the $B^1_n[\check k_n,\hat f_1]$ term --- the novelty of having $f_1$ unknown --- and to the lower-order kernels $\check k_1,\ldots,\check k_{n-1}$ through the $B^m_n[\check k_{n-m+1},\hat f_m]$ terms for $m\ge 2$. Solving the cascade sequentially in $n$, at each step \eqref{eq:k-IE} is a linear Volterra integral equation in $\check k_n$ alone, of the form
\begin{equation}\label{eq:k-IE-compact}
(I-\check{\mathcal S}_n)\,\check k_n=\check\Psi_n,
\end{equation}
where $\check{\mathcal S}_n$ is the \emph{self-coupling operator at order $n$},
\begin{equation}\label{eq:Sn-def}
\check{\mathcal S}_n[\kappa](x,\xi_1,\ldots,\xi_n):=\int_0^{\xi_n}B^1_n[\kappa,\hat f_1](\xi^\sigma)\,d\sigma,
\end{equation}
acting on $L^2(T_n(1))$ and capturing how $\check k_n$ couples to itself through the unknown linear plant kernel $\hat f_1$ via the simplex operator $B^1_n$ integrated along the characteristic, and $\check\Psi_n$ is the \emph{cascade inhomogeneity at order $n$},
\begin{eqnarray}
\check\Psi_n(x,\xi_1,\ldots,\xi_n) &:=& -\int_0^{\xi_n}\!\hat f_n(\xi^\sigma,t)\,d\sigma \nonumber\\
&& +\sum_{m=2}^n\int_0^{\xi_n}B^m_n[\check k_{n-m+1},\hat f_m](\xi^\sigma)\,d\sigma, \label{eq:Psin-def}
\end{eqnarray}
collecting the $\hat f_n$-quadrature and the lower-order $\check k_{n-m+1}$ pieces (these are already determined in the cascade ordering, so $\check\Psi_n$ is known data at the $n$-th step).

For low truncation orders, \eqref{eq:k-IE} reads explicitly as follows. For $N=1$, the cascade has only one equation, a self-coupled scalar Volterra integral equation in $\check k_1(x,\xi,t)$ at frozen $\hat f_1$:
\begin{eqnarray}
\check k_1(x,\xi,t) &=& -\int_0^\xi\hat f_1(x-\xi+\sigma,\sigma,t)\,d\sigma \nonumber\\
&& +\int_0^\xi\!\int_{x-\xi+\sigma}^{x}\!\hat f_1(s,\sigma,t)\check k_1(s,x-\xi+\sigma,t)\,ds\,d\sigma \label{eq:kn1-IE}
\end{eqnarray}
For $N=2$, after solving \eqref{eq:kn1-IE} for $\check k_1$, the second equation reads
\begin{eqnarray}
\check k_2(x,\xi_1,\xi_2,t) &=& -\int_0^{\xi_2}\hat f_2(\xi^\sigma,t)\,d\sigma \nonumber\\
&& +\int_0^{\xi_2}B^1_2[\check k_2,\hat f_1](\xi^\sigma)\,d\sigma \nonumber\\
&& +\int_0^{\xi_2}B^2_2[\check k_1,\hat f_2](\xi^\sigma)\,d\sigma, \label{eq:kn2-IE}
\end{eqnarray}
with $\xi^\sigma=(x-\xi_2+\sigma,\xi_1-\xi_2+\sigma,\sigma)$ and $\check k_1$ already determined. The presence of the self-coupling $B^1_n[\check k_n,\hat f_1]$ means each $\check k_n$ is given by a Volterra integral equation, not by a closed-form quadrature as in \cite{HV1}; the structure is the hyperbolic analog of the parabolic cascade of \cite[eqs.~(60)--(63), Remark~6]{VK102}.

%The naive contraction estimate $\|\check{\mathcal S}_n\|_{L^2\to L^2}\le \binom{n+1}{2}D_f$ would require a smallness condition on $D_f$ to invert $I-\check{\mathcal S}_n$. This restriction is removed by a Volterra-iteration argument: the Neumann series for $(I-\check{\mathcal S}_n)^{-1}$ converges by a factorial $1/k!$ in the iteration count $k$, independent of $\|\check{\mathcal S}_n\|$. We state the bound in $L^2$, the natural setting for the cascade of \cite{HV1}.

\begin{lemma}[Volterra-iteration bound on $(I-\check{\mathcal S}_n)^{-1}$]\label{lem:Sn-inverse}
For each $n\ge 1$ and every $\hat f_1$ with $\|\hat f_1\|_\infty\le D_f$, the operator $\check{\mathcal S}_n$ defined by \eqref{eq:Sn-def} satisfies, with $M_n:=\binom{n+1}{2}D_f$,
\begin{equation}\label{eq:Sn-iter}
\|\check{\mathcal S}_n^k[\kappa]\|_{L^2(T_n(1))}\le \frac{M_n^k}{(k-1)!}\,\|\kappa\|_{L^2(T_n(1))},\ k\ge 1,
\end{equation}
hence the Neumann series converges and, for any $D_f>0$,
\begin{equation}\label{eq:Sn-inv-bound}
\|(I-\check{\mathcal S}_n)^{-1}\|_{L^2\to L^2}\le 1+M_n\,e^{M_n}.
\end{equation}
\end{lemma}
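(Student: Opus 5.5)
The plan is to prove a pointwise Volterra-type bound for the iterates of $\check{\mathcal S}_n$ and then sum the Neumann series. The natural "time" variable is the characteristic arclength. The operator $\check{\mathcal S}_n$ integrates over $\sigma\in[0,\xi_n]$, so I will order points of $T_n(1)$ by a scalar $\theta(x,\xi)$ that strictly increases along the $\sigma$-integration and is nonincreasing in every argument at which $B^1_n[\kappa,\hat f_1]$ evaluates $\kappa$. The candidate is $\theta:=x$, or else a combination such as $x+\xi_n$. Along $\xi^\sigma$ the first coordinate is $x-\xi_n+\sigma\le x$. In $B^1_n$, by \eqref{eq:Bmn-def} with $m=1$, $\kappa$ is evaluated at $(x,\xi_1,\ldots,\hat\xi_j,\ldots,\xi_n,s_j)$ with $s_j\in[\xi_j,\xi_{j-1}]$, after symmetrization by $D^{n,1}_j$. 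Each such evaluation point again lies in $T_n(x)$.

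\textbf{Step 1 (one-step pointwise bound).} I first bound a slice norm of $\check{\mathcal S}_n[\kappa]$. Set
\[
\phi_\kappa(y):=\|\kappa\|_{L^2(\{x\le y\})}
\]
on the sub-simplex of points with first coordinate at most $y$. The self-coupling $B^1_n$ is a sum of $n$ terms $j=1,\ldots,n$, and each symmetrization $D^{n,1}_j$ averages over at most $j$ placements. I expect this to account for the constant $\binom{n+1}{2}=\sum_{j=1}^n j$. Using $\|\hat f_1\|_\infty\le D_f$, Cauchy--Schwarz on the $s_j$-integral (length $\le 1$) and on the $\sigma$-integral (length $\le 1$), and Fubini, I aim for
\[
\phi_{\check{\mathcal S}_n\kappa}(y)^2\le M_n^2\int_0^y\phi_\kappa(z)^2\,dz .
\]
The point is that the outermost variable $x-\xi_n+\sigma$ sweeps an interval ending at $x$, so the dependence on $\kappa$ is only through values at lower or equal levels, integrated over a level parameter.

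\textbf{Step 2 (iteration).} Induction on Step 1 gives
\[
\phi_{\check{\mathcal S}_n^k\kappa}(y)^2\le M_n^{2k}\frac{y^{k-1}}{(k-1)!}\|\kappa\|^2 .
\]
At $y=1$ this yields $\|\check{\mathcal S}_n^k\kappa\|\le M_n^k/\sqrt{(k-1)!}\,\|\kappa\|$. To obtain the sharper $1/(k-1)!$ claimed in \eqref{eq:Sn-iter}, I would instead work with a sup-in-level $L^2$-slice quantity. The bound should be of the form
\[
\psi(y)\le M_n\int_0^y\psi(z)\,dz+\cdots,
\]
with $\psi$ entering linearly rather than squared, so that the $k$-fold iterated integral gives $y^{k-1}/(k-1)!$ after the first step has been bounded crudely by $M_n\|\kappa\|$. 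Either version suffices for convergence. The stated one follows from the linear version via Minkowski's integral inequality, applied in place of Cauchy--Schwarz.

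\textbf{Step 3 (summation).} Summing, $\sum_{k\ge1}M_n^k/(k-1)!=M_ne^{M_n}$. The Neumann series $\sum_{k\ge0}\check{\mathcal S}_n^k$ therefore converges in operator norm and inverts $I-\check{\mathcal S}_n$, which gives \eqref{eq:Sn-inv-bound} for every $D_f$, with no smallness condition.

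\textbf{Main obstacle.} The hard part is Step 1: finding a level function that is monotone through both the characteristic shift $\xi^\sigma$ and the $s_j$-integration inside $B^1_n$. The difficulty is that $s_j$ ranges up to $\xi_{j-1}$, with $\xi_0=x$, and so can reach the top level. I would first try $\theta=x$, checking that the $\sigma$-integral supplies the needed gain: the shifted first coordinate is $x-\xi_n+\sigma$, so integrating in $\sigma$ is an integration in the level variable. If the $j=1$ term, where $s_1\in[\xi_1,x]$, breaks the monotonicity, I would switch to the level $\theta=\xi_n$ or to $x-\xi_n$ and redo the bookkeeping.
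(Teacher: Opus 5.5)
Your argument is correct once you commit to the level $\theta=x$, and it takes a genuinely different route from the paper's.

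\textbf{The paper's route.} The paper passes to characteristic coordinates $(\eta,\xi_n)$, $\eta_i=\xi_i-\xi_n$, and treats $\check{\mathcal S}_n$ as a Volterra integral in $\xi_n$ of the operator $B^1_n[\cdot,\hat f_1]$. It writes $\check{\mathcal S}_n^k[\kappa]$ as the integral of $(B^1_n)^k[\kappa](\eta,\sigma_k)$ over $\Delta_k(\xi_n)$. It then combines Cauchy--Schwarz on $\Delta_k(\xi_n)$ with the global bound \eqref{eq:B1n-L2} to reach $M_n^k/\sqrt{k!\,(k-1)!}$.

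That representation tacitly commutes $B^1_n$ with the $\sigma$-integration, i.e.\ it treats $B^1_n$ as local in $\xi_n$. It is not: the $j=n$ term of \eqref{eq:Bmn-def} replaces $\xi_n$ by $s_n\in[\xi_n,\xi_{n-1}]$. A concrete check: for $n=1$, where $B^1_1[\kappa,g](x,\xi_1)=\int_{\xi_1}^x\kappa(x,s)g(s,\xi_1)\,ds$, take $\hat f_1\equiv 1$ and $\kappa\equiv 1$. With $\eta=x-\xi_1$ one gets $\check{\mathcal S}_1^2\kappa=\eta^2\xi_1^2/4+\eta^3\xi_1/6$, whereas \eqref{eq:Sn-iter-form} gives only $\eta^2\xi_1^2/4$.

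\textbf{Your route.} Your level is the one that both operations respect.
\begin{itemize}
\item $B^1_n$ never touches the first slot of $\kappa$, so it acts slice by slice: $\|B^1_n[\kappa,\hat f_1](x,\cdot)\|_{L^2(T_n(x))}\le M_n\|\kappa(x,\cdot)\|_{L^2(T_n(x))}$. This is the slice version of \eqref{eq:B1n-L2}, proved by the same Cauchy--Schwarz argument, since every $s$-integration stays inside the slice.
\item The shift moves the first coordinate to $x-\xi_n+\sigma$, so the $\sigma$-integral is an integral in the level.
\end{itemize}
Concretely, set $\psi_\kappa(x):=\|\kappa(x,\cdot)\|_{L^2(T_n(x))}$ and $\tau:=\xi_n-\sigma$. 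The translation $\xi\mapsto\xi-\tau(1,\ldots,1)$ maps $\{\xi\in T_n(x):\xi_n\ge\tau\}$ measure-preservingly onto $T_n(x-\tau)$. Minkowski's inequality then gives $\psi_{\check{\mathcal S}_n\kappa}(x)\le M_n\int_0^x\psi_\kappa(z)\,dz$.

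\textbf{What your route buys.}
\begin{itemize}
\item Bounding the first step by $M_n\|\kappa\|$ and iterating gives exactly $M_n^k/(k-1)!$.
\item Young's convolution inequality even gives $M_n^k/k!$, hence $\|(I-\check{\mathcal S}_n)^{-1}\|\le e^{M_n}$. This is the bound the paper actually invokes later, in the proofs of Lemmas~\ref{lem:k-bounds} and~\ref{lem:dominance}.
\item Your quadratic Step~1 is also valid, but as you note it only gives $1/\sqrt{(k-1)!}$.
\end{itemize}

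\textbf{Hedges to drop.}
\begin{itemize}
\item The ``main obstacle'' does not arise. $s_1\in[\xi_1,x]$ sits in a $\xi$-slot of $\kappa$, never in the $x$-slot, so monotonicity in $x$ is untouched.
\item The fallback $\theta=\xi_n$ is exactly the level that fails, for the reason above.
\item $\theta=x-\xi_n$ is invariant under the shift, so it supplies no integration in the level to iterate on.
\end{itemize}
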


\begin{proof}
By \cite[Lemma~A.1]{HV1} the symmetrized simplex operator $B^1_n$ satisfies, for any $\kappa\in L^2(T_n(1))$ and $g\in L^\infty(T_1(1))$,
\begin{equation}\label{eq:B1n-L2}
\|B^1_n[\kappa,g]\|_{L^2(T_n(1))}\le \binom{n+1}{2}\,\|g\|_\infty\,\|\kappa\|_{L^2(T_n(1))},
\end{equation}
the bound following from the symmetrization producing $\binom{n+1}{2}$ pointwise terms each of which is a one-dimensional integral of $\kappa\cdot g$ along a simplex coordinate, bounded in $L^2$ by Cauchy--Schwarz. With $g=\hat f_1$ and $\|\hat f_1\|_\infty\le D_f$, the operator-norm bound is $\|B^1_n[\cdot,\hat f_1]\|_{L^2\to L^2}\le M_n:=\binom{n+1}{2}D_f$.
The following estimate captures the Volterra-iteration structure of $\check{\mathcal S}_n^k$ via characteristic coordinates, which expose $\check{\mathcal S}_n$ as a Volterra operator in $\xi_n$ acting on a slice norm in the transverse coordinates. Let $\eta_i:=\xi_i-\xi_n$ for $i=0,\ldots,n-1$ (with $\xi_0:=x$) and keep $\xi_n$ as the last coordinate; the simplex $T_n(1)$ in $(\eta,\xi_n)$ coordinates is a measurable subset of $[0,1-\xi_n]^n\times[0,1]$, and the change of variables is volume-preserving. In these coordinates the displacement $\xi^\sigma$ \emph{leaves $\eta$ unchanged and replaces $\xi_n$ by $\sigma$}: $\xi^\sigma_{(\eta,\xi_n)}=(\eta,\sigma)$. So $\check{\mathcal S}_n$ takes the simple Volterra form
\begin{equation}\label{eq:Sn-char-coords}
\check{\mathcal S}_n[\kappa](\eta,\xi_n)=\int_0^{\xi_n}B^1_n[\kappa,\hat f_1](\eta,\sigma)\,d\sigma,
\end{equation}
i.e., a Volterra integration in $\xi_n$ of the operator-valued kernel $B^1_n[\,\cdot\,,\hat f_1]$.
Iterate $k$ times: the $k$-fold $\sigma$-integration runs over the simplex $\Delta_k(\xi_n):=\{0\le\sigma_k\le\cdots\le\sigma_1\le\xi_n\}$ of measure $\xi_n^k/k!$, with the integrand $\bigl(B^1_n[\,\cdot\,,\hat f_1]\bigr)^k[\kappa](\eta,\sigma_k)$ standing for the $k$-fold composition of $B^1_n$ applied to $\kappa$ evaluated at $\sigma_k$ in the $\xi_n$-slot (each $B^1_n$ application carries its own integration in a simplex coordinate $s_j$, which we keep packaged inside the operator $B^1_n$):
\begin{equation}\label{eq:Sn-iter-form}
\check{\mathcal S}_n^k[\kappa](\eta,\xi_n)=\int_{\Delta_k(\xi_n)}(B^1_n)^k[\kappa](\eta,\sigma_k)\,d\sigma_1\cdots d\sigma_k.
\end{equation}
The operator-norm bound $\|(B^1_n)^k\|_{L^2\to L^2}\le M_n^k$ from \eqref{eq:B1n-L2} iterated $k$ times encapsulates all internal $s_j$-integrations.
By Cauchy--Schwarz on $\Delta_k(\xi_n)$,
\begin{eqnarray}
\bigl|\check{\mathcal S}_n^k[\kappa](\eta,\xi_n)\bigr|^2 &\le& |\Delta_k(\xi_n)|\int_{\Delta_k(\xi_n)}\bigl|(B^1_n)^k[\kappa](\eta,\sigma_k)\bigr|^2 d\sigma \nonumber\\
&=& \frac{\xi_n^k}{k!}\int_{\Delta_k(\xi_n)}\bigl|(B^1_n)^k[\kappa](\eta,\sigma_k)\bigr|^2 d\sigma. \label{eq:Sn-CS}
\end{eqnarray}
Integrate in $(\eta,\xi_n)\in T_n(1)$, swap the orders of integration (the variables $\sigma_1,\ldots,\sigma_{k-1}$ contribute a measure $(\xi_n-\sigma_k)^{k-1}/(k-1)!$ at fixed $\sigma_k\le\xi_n$), and use $\xi_n,(\xi_n-\sigma_k)\le 1$:
\begin{eqnarray}
\|\check{\mathcal S}_n^k[\kappa]\|_{L^2(T_n(1))}^2 &\le& \frac{1}{k!}\int_0^1\!\!\int_0^{\xi_n}\!\!\frac{(\xi_n-\sigma_k)^{k-1}}{(k-1)!}\,\xi_n^k \nonumber\\
&& \times\,\|(B^1_n)^k[\kappa](\cdot,\sigma_k)\|_{L^2_\eta}^2\,d\sigma_k\,d\xi_n \nonumber\\
&\le& \frac{1}{k!\,(k-1)!}\int_0^1\!\!\|(B^1_n)^k[\kappa](\cdot,\sigma_k)\|_{L^2_\eta}^2\,d\sigma_k \nonumber\\
&& =\frac{1}{k!\,(k-1)!}\,\|(B^1_n)^k[\kappa]\|_{L^2(T_n(1))}^2. \label{eq:Sn-k-final}
\end{eqnarray}
Combined with $\|(B^1_n)^k[\kappa]\|_{L^2}\le M_n^k\|\kappa\|_{L^2}$ from \eqref{eq:B1n-L2} iterated $k$ times,
\begin{eqnarray}
\|\check{\mathcal S}_n^k[\kappa]\|_{L^2(T_n(1))} &\le& \frac{M_n^k}{\sqrt{k!\,(k-1)!}}\,\|\kappa\|_{L^2(T_n(1))} \nonumber\\
&\le& \frac{M_n^k}{(k-1)!}\,\|\kappa\|_{L^2(T_n(1))}, \label{eq:Sn-iter-bound}
\end{eqnarray}
using $k!\ge(k-1)!$. The Neumann series
\begin{equation}\label{eq:Neumann-series}
\|(I-\check{\mathcal S}_n)^{-1}\|_{L^2\to L^2}\le 1+\sum_{k\ge 1}\frac{M_n^k}{(k-1)!}=1+M_n\,e^{M_n}
\end{equation}
converges for any $D_f>0$.
\end{proof}

\begin{lemma}[Kernel cascade slice bounds]\label{lem:k-bounds}
Under Assumption~\ref{ass:factorial}, the kernels $\check k_n$ defined by \eqref{eq:k-IE} for any $\hat f$ satisfying \eqref{eq:thm-fhat-bound} satisfy, for $n=1,\ldots,N$, $t\ge 0$, and $x\in[0,1]$,
\begin{eqnarray}
\|\check k_n(x,\cdot,t)\|_{L^2(T_n(x))} &\le& \sqrt{n!}\,D_K^\infty\,(C_K^\infty)^{n-1}\,x^{n/2}, \label{eq:k-L2-explicit}\\
\|\check k_n(x,\cdot,t)\|_{L^1(T_n(x))} &\le& D_K^\infty\,(C_K^\infty)^{n-1}\,x^n, \label{eq:k-L1-explicit}
\end{eqnarray}
where $D_K^\infty,C_K^\infty$ are positive constants depending on $D_f,\rho_f,N$, of the form
\begin{equation}\label{eq:DK-CK-infty}
D_K^\infty\le e^{\binom{N+1}{2}D_f}\,D_K,\qquad C_K^\infty\le e^{\binom{N+1}{2}D_f}\,C_K,
\end{equation}
where $D_K,C_K>0$ are the deterministic $L^2$-slice cascade constants from \cite[Theorem~1]{HV1} (the $L^2$-slice bound for the $f_1\equiv 0$ case).
\end{lemma}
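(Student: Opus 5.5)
The proof will go by induction on $n$ along the cascade \eqref{eq:k-IE}, treating each stage in the compact form \eqref{eq:k-IE-compact}, $\check k_n=(I-\check{\mathcal S}_n)^{-1}\check\Psi_n$. The goal is to reduce to the known-$f_1$-free estimates of \cite[Theorem~1]{HV1}. The self-coupling through $\hat f_1$ should cost at most one multiplicative factor $e^{\binom{N+1}{2}D_f}$ per stage, and this factor is to be absorbed into $D_K^\infty$ and $C_K^\infty$.

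The first step is to bound the inhomogeneity $\check\Psi_n$ of \eqref{eq:Psin-def} slice by slice in $x$. The $\hat f_n$-quadrature term is bounded pointwise by $\xi_n\bar f_n\le x\,n!D_f/\rho_f^{n-1}$. Each lower-order term $B^m_n[\check k_{n-m+1},\hat f_m]$, for $2\le m\le n$, is controlled using the induction hypothesis on $\check k_{n-m+1}$ together with $\|\hat f_m\|_\infty\le \bar f_m$ from \eqref{eq:P1-N}. I will reuse verbatim the simplex bookkeeping of \cite{HV1}, namely the symmetrization counts of $D^{n,m}_j$ and the factorial volumes of $T_n(x)$. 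This is legitimate because $\check\Psi_n$ has exactly the structure of the right-hand side of the $f_1\equiv0$ cascade of \cite{HV1}, with $\check k_j$ in place of the HV1 kernels. The step should give $\|\check\Psi_n(x,\cdot)\|_{L^2(T_n(x))}\le \sqrt{n!}\,D_K(C_K^\infty)^{n-1}x^{n/2}$ and the matching $L^1$ bound with $x^n$, with the induction constant $C_K^\infty$ propagating in place of $C_K$.

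The second step, which I expect to be the main obstacle, is inverting $I-\check{\mathcal S}_n$ \emph{slice-wise}. Lemma~\ref{lem:Sn-inverse} is a global $L^2(T_n(1))$ estimate, but \eqref{eq:k-L2-explicit} needs the $x^{n/2}$ dependence on each slice $T_n(x)$. The plan is to use the Volterra structure in $x$. The operator $B^1_n$ in \eqref{eq:Bmn-def} integrates only over $s_1\in[\xi_1,x]$ and over intermediate variables below $x$, and the characteristic shift $\xi^\sigma$ never raises the first coordinate above $x$. Hence $\check{\mathcal S}_n[\kappa]$ restricted to $\{x'\le x\}$ depends only on $\kappa$ restricted to $\{x'\le x\}$. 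I will therefore rerun the proof of Lemma~\ref{lem:Sn-inverse} on the truncated domain $\bigcup_{x'\le x}\{x'\}\times T_n(x')$ and use the Neumann series there, which gives a factor $1+M_n e^{M_n}$. A cleaner variant is to work with the weighted slice function $\phi(x)=\|\kappa(x,\cdot)\|_{L^2(T_n(x))}$ and derive a Gronwall-type inequality $\phi(x)\le\psi(x)+M_n\int_0^x\phi$. Gronwall then gives $\phi(x)\le e^{M_nx}\sup_{x'\le x}\psi(x')$, and the monotonicity of $x^{n/2}$ preserves the power law. Since $M_n\le M_N=\binom{N+1}{2}D_f$, this yields the factor $e^{\binom{N+1}{2}D_f}$ in \eqref{eq:DK-CK-infty}. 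The $L^1$ slice version follows in the same way, because $B^1_n$ is also bounded on $L^1$ slices with the same symmetrization count.

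Finally, I will close the induction. I will set $D_K^\infty=e^{M_N}D_K$ and choose $C_K^\infty=e^{M_N}C_K$ so that the accumulated factor from the lower-order terms is consistent with $(C_K^\infty)^{n-1}$; a possibly larger constant depending on $N$ is allowed by the statement. The $L^1$ bound \eqref{eq:k-L1-explicit} can alternatively be obtained from the $L^2$ bound by Cauchy--Schwarz, using $|T_n(x)|=x^n/n!$: indeed $\|\check k_n(x,\cdot)\|_{L^1}\le (x^n/n!)^{1/2}\sqrt{n!}D_K^\infty (C_K^\infty)^{n-1}x^{n/2}$, which is exactly the required form. This Cauchy--Schwarz route also suggests proving only the $L^2$ bound in the induction step. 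Uniformity in $t$ is automatic, since every constant depends only on the projection bounds \eqref{eq:thm-fhat-bound}.
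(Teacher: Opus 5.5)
Your proposal is correct. Its skeleton is the paper's: stagewise induction through $(I-\check{\mathcal S}_n)\check k_n=\check\Psi_n$, bounding $\check\Psi_n$ with the $f_1\equiv 0$ bookkeeping, absorbing the inversion factor into $D_K^\infty,C_K^\infty$, and deriving \eqref{eq:k-L1-explicit} from \eqref{eq:k-L2-explicit} by Cauchy--Schwarz with $|T_n(x)|=x^n/n!$.

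You differ in the inversion step. The paper applies the global operator bound of Lemma~\ref{lem:Sn-inverse} on $L^2(T_n(1))$ and then asserts the slice estimate \eqref{eq:k-L2-explicit}. A global bound does not by itself give the pointwise-in-$x$ factor $x^{n/2}$. Moreover, Lemma~\ref{lem:Sn-inverse} actually supplies $1+M_ne^{M_n}$, which is strictly larger than the factor $e^{M_n}$ the proof uses to reach \eqref{eq:DK-CK-infty}.

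Your slice-wise Gronwall argument repairs both problems. It rests on two facts: $B^1_n$ acts within each $x$-slice, and the characteristic shift $\xi^\sigma$ only reaches back to $x'\le x$. It gives $\phi(x)\le e^{M_nx}\sup_{x'\le x}\psi(x')$, hence exactly $D_K^\infty=e^{M_N}D_K$ and $C_K^\infty=e^{M_N}C_K$. Your remark that $\check\Psi_n$ inherits at most a factor $e^{M_N(n-1)}$ from the lower-order kernels shows why the induction closes with these constants.

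The one computation you should write out is
$\|\check{\mathcal S}_n[\kappa](x,\cdot)\|_{L^2(T_n(x))}\le\int_0^x\|B^1_n[\kappa,\hat f_1](x',\cdot)\|_{L^2(T_n(x'))}\,dx'$.
To get it, substitute $\tau=\xi_n-\sigma$ and apply Minkowski's integral inequality. Then note that $\xi\mapsto\xi-\tau(1,\ldots,1)$ maps $\{\xi\in T_n(x):\xi_n\ge\tau\}$ measure-preservingly onto $T_n(x-\tau)$. Combined with the slice version of \eqref{eq:B1n-L2}, this is your Gronwall inequality; the Cauchy--Schwarz argument behind \eqref{eq:B1n-L2} works slice by slice.

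Drop your first variant, rerunning Lemma~\ref{lem:Sn-inverse} on $\{x'\le x\}$. It controls only $\int_0^x\phi^2$, not $\phi(x)$, so it would need the same Volterra step anyway.

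The paper's route is shorter and reuses a lemma it needs again for Lemma~\ref{lem:Q-Lip} and for the $\partial_t\check k_n$ estimate. Yours is the one that actually delivers the slice form and the stated constants.
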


\begin{proof}
\emph{$L^2$-slice bound.} Apply Lemma~\ref{lem:Sn-inverse} to the integral equation \eqref{eq:k-IE}: $\|\check k_n\|_{L^2(T_n(1))}\le e^{\binom{n+1}{2}D_f}\|\check\Psi_n\|_{L^2(T_n(1))}$, where $\check\Psi_n$ collects the inhomogeneous terms (the $-\int_0^{\xi_n}\hat f_n$ piece and the $\sum_{m\ge 2}B^m_n[\check k_{n-m+1},\hat f_m]$ pieces). The inhomogeneity at order $n$ involves $\hat f_n$ and lower-order kernels $\check k_{n-m+1}$ for $m\ge 2$, with $L^2$-norms bounded by the recursion \cite[eq.~(8) and proof of Theorem~1]{HV1}:
\begin{eqnarray}
\|\check\Psi_n\|_{L^2(T_n(1))} &\le& \sqrt{n!}\,D_f/\rho_f^{n-1} \nonumber\\
&& +\sum_{m=2}^n c^{n,m}\,\|\check k_{n-m+1}\|_{L^2(T_{n-m+1}(1))} \nonumber\\
&& \quad\times\|\hat f_m\|_\infty, \label{eq:Psi-L2-bound}
\end{eqnarray}
with combinatorial coefficients $c^{n,m}$ from the symmetrization. By induction in $n$, with $C_K^\infty$ absorbing the $e^{\binom{n+1}{2}D_f}$ factor at each step (and using the $L^2$-cascade structure of \cite{HV1}), we obtain \eqref{eq:k-L2-explicit}.

\emph{$L^1$-slice bound.} By Cauchy--Schwarz on $T_n(x)$,
\begin{eqnarray}
\|\check k_n(x,\cdot,t)\|_{L^1(T_n(x))} &\le& \|\check k_n(x,\cdot,t)\|_{L^2(T_n(x))}\cdot\sqrt{\mathrm{vol}(T_n(x))} \nonumber\\
&=& \|\check k_n(x,\cdot,t)\|_{L^2(T_n(x))}\cdot\frac{x^{n/2}}{\sqrt{n!}}. \label{eq:HV2-bridge}
\end{eqnarray}
Substituting \eqref{eq:k-L2-explicit}, the $\sqrt{n!}$ cancels:
\begin{eqnarray}
\|\check k_n(x,\cdot,t)\|_{L^1(T_n(x))} &\le& \sqrt{n!}\,D_K^\infty(C_K^\infty)^{n-1}x^{n/2}\cdot\frac{x^{n/2}}{\sqrt{n!}} \nonumber\\
&=& D_K^\infty(C_K^\infty)^{n-1}x^n, \label{eq:k-L1-derivation}
\end{eqnarray}
which is \eqref{eq:k-L1-explicit}.
\end{proof}

%==============================================================
\begin{figure}[t]
\centering
\begin{tikzpicture}[
  font=\footnotesize,
  box/.style={draw, thick, rounded corners=2pt, align=center, inner sep=3.5pt},
  thm/.style={draw, semithick, rounded corners=2pt, fill=black!7, align=center,
              inner sep=3pt, font=\footnotesize\bfseries},
  sec/.style={font=\scriptsize, text=black!60, inner sep=1.5pt},
  a/.style={-{Latex[length=1.6mm]}, semithick},
]
\node[box] (tgt) {TARGET\quad $w_t=w_x+$ five perturbations};
\node[sec, above=0.3mm of tgt.north east, anchor=south east] {Sec~VII};
\node[box, below=5mm of tgt] (v) {$\dot V_{\rm tot}\le 0$};
\node[sec, right=1mm of v] {Sec~IX};
\node[thm, below=8mm of v] (t1) {THEOREM 1};
\node[thm, below=14mm of t1] (t3) {THEOREM 3};
\node[thm, right=22mm of t3] (t2) {THEOREM 2};

\draw[a] (tgt) -- node[sec, right] {dominate the five \ (Sec~VIII)} (v);
\draw[a] (v) -- (t1);
\draw[a] (t1) -- node[sec, right, align=left]
      {approximate controller\\ in place of the exact one,\\ same analysis (Sec~XI--XIII)} (t3);
\draw[a] (t3) -- node[sec, midway, above] {approx.\ error $\le\varepsilon$} (t2);

\node[sec, left=9mm of v, align=center] (id) {identifier\\ energy\\ (Sec~V)};
\draw[a] (id) -- (v);
\node[sec, right=2mm of t1] {Barbalat (Sec~X) $\Rightarrow$ regulation};
\node[sec, below=1mm of t2] {basin $\to$ exact ($\varepsilon\to 0$)};
\end{tikzpicture}
\caption{Architecture of the stability analysis. }
\label{fig:roadmap}
\end{figure}

\section{Target System and Backstepping Transformation}\label{sec:proof-target}
%==============================================================

Figure~\ref{fig:roadmap} charts the analysis of Sections~\ref{sec:proof-target}--\ref{sec:neural-closed-loop}.

The backstepping image of $\hat u$ is
\begin{eqnarray}
w(x,t) &:=& \hat u(x,t)-\check K_N[\hat u](x,t), \nonumber\\
\check K_N[\hat u] &:=& \sum_{n=1}^N\int_{T_n(x)}\check k_n(x,\xi,t)\prod_{i=1}^n\hat u(\xi_i,t)\,d\xi. \label{eq:w-def}
\end{eqnarray}

\begin{lemma}[Perturbed target PDE]\label{lem:target}
Along the closed-loop dynamics \eqref{eq:plant1}--\eqref{eq:plant3}, \eqref{eq:obs1-N}--\eqref{eq:obs2-N}, \eqref{eq:U-def}, the backstepping image \eqref{eq:w-def} satisfies
\begin{eqnarray}
w_t &=& w_x+(I-\check K_N^*)R^u_N \nonumber\\
&& +\,c_0\Phi_N(u)\,(I-\check K_N^*)e-\Omega+R^K_N, \label{eq:tgt1-clean}\\
w(1,t) &=& 0, \label{eq:tgt2-clean}
\end{eqnarray}
on $(x,t)\in[0,1)\times\mathbb{R}_+$, where:
\begin{itemize}
\item the \emph{plant--observer feedforward mismatch} is
\begin{eqnarray}
R^u_N(x,t) &:=& \hat F_N[u,\hat f](x,t)-\hat F_N[\hat u,\hat f](x,t) \nonumber\\
&=& \sum_{n=1}^N\int_{T_n(x)}\hat f_n(x,\xi,t) \nonumber\\
&& \times\bigl[\textstyle\prod_i u(\xi_i)-\prod_i\hat u(\xi_i)\bigr]\,d\xi, \label{eq:Ru-N-def}
\end{eqnarray}
linear in $e=u-\hat u$ by the multilinear telescoping identity \cite[eq.~(37)]{HV2};
\item the \emph{kernel-time-derivative drift} is
\begin{equation}\label{eq:Omega-def}
\Omega(x,t):=\sum_{n=1}^N\int_{T_n(x)}\partial_t\check k_n(x,\xi,t)\prod_{i=1}^n\hat u(\xi_i,t)\,d\xi;
\end{equation}
\item the \emph{controller-truncation residual}, collecting the unmatched-order pieces of $\partial_x \check K_N[\hat u]$ at total order $n+m-1>N$, is
\begin{eqnarray}
R^K_N(x,t) &:=& \sum_{\substack{n+m-1>N\\ 1\le n,m\le N}}\int_{T_n(x)}\check k_n(x,\xi,t)\sum_{j=1}^n \nonumber\\
&& \times\int_{T_m(\xi_j)}\!\hat f_m(\xi_j,\eta,t)\prod_{i=1}^m\!\hat u(\eta_i,t)\,d\eta \nonumber\\
&& \times\prod_{i\ne j}\!\hat u(\xi_i,t)\,d\xi; \label{eq:RKN-def}
\end{eqnarray}
\item the \emph{simplex-tensor operator} $(I-\check K_N^*)$ acting on any function $\phi$ on $[0,1]$ is
\begin{eqnarray}
(I-\check K_N^*)\phi(x,t) &:=& \phi(x,t)-\sum_{n=1}^N\int_{T_n(x)}\check k_n(x,\xi,t) \nonumber\\
&& \times\sum_{j=1}^n\phi(\xi_j,t)\!\prod_{i\ne j}\!\hat u(\xi_i,t)\,d\xi. \label{eq:KN-star-def}
\end{eqnarray}
\end{itemize}
The parameter error $\tilde f$ does not appear directly in \eqref{eq:tgt1-clean}; it enters only through $\Omega$, since by the update law \eqref{eq:upd1-N} $\partial_t\hat f$ is itself state-multiplied.
\end{lemma}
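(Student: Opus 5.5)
We obtain \eqref{eq:tgt1-clean} by computing $w_t-w_x$ directly from the definition \eqref{eq:w-def}, now with time-varying kernels, with the true observer dynamics \eqref{eq:obs1-N} driven by $u$, and with damping on. The frozen computation in the proof of Lemma~\ref{lem:matching} is the template; each ingredient that was switched off there will reappear as one of the listed perturbations. The boundary condition \eqref{eq:tgt2-clean} is immediate: \eqref{eq:obs2-N} and \eqref{eq:U-def} give $\hat u(1,t)=\check K_N[\hat u](1,t)$.

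\emph{Step 1 (time derivative).} Differentiating $\check K_N[\hat u]$ in $t$ gives two pieces. The first, from $\partial_t\check k_n$, is exactly $\Omega$ in \eqref{eq:Omega-def}; it enters $w_t$ with a minus sign. The second, by the product rule on $\prod_i\hat u(\xi_i)$, is $\sum_n\int_{T_n(x)}\check k_n\sum_j\hat u_t(\xi_j)\prod_{i\ne j}\hat u\,d\xi$.

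We rewrite the observer as
\[
\hat u_t=\hat u_x+\hat F_N[\hat u,\hat f]+R^u_N+c_0\Phi_N(u)e ,
\]
which is simply \eqref{eq:obs1-N} with $\hat F_N[u,\hat f]=\hat F_N[\hat u,\hat f]+R^u_N$. Substituting this into both $\hat u_t(x)$ and $\hat u_t(\xi_j)$ produces four groups of terms:
\begin{itemize}
\item the $\hat u_x$ terms, which cancel against $\partial_x\check K_N$ as in \eqref{eq:partialx-K-formal};
\item the $\hat F_N[\hat u,\hat f]$ terms, handled in Step 2;
\item the $R^u_N$ terms;
\item the $c_0\Phi_N(u)e$ terms.
\end{itemize}
Since $\Phi_N(u)$ is independent of $x$, the last two groups assemble exactly into $(I-\check K_N^*)R^u_N$ and $c_0\Phi_N(u)(I-\check K_N^*)e$, by the definition \eqref{eq:KN-star-def}.

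\emph{Step 2 (order matching with truncation).} The remaining terms are
\[
\hat F_N[\hat u,\hat f]-\sum_n\int_{T_n(x)}\bigl[\partial_x\check k_n+\textstyle\sum_i\partial_{\xi_i}\check k_n\bigr]\prod\hat u\,d\xi-\sum_{n,m\le N}\int_{T_n(x)}\check k_n\sum_j\hat F_m(\xi_j)\prod_{i\ne j}\hat u\,d\xi .
\]
Here $\hat F_m(\xi_j)$ denotes the order-$m$ summand of $\hat F_N[\hat u,\hat f]$ evaluated at $\xi_j$. We split the double sum by total order $n+m-1$.
\begin{itemize}
\item Pairs with $n+m-1\le N$ are symmetrized via \cite[Lemma~A.1]{HV1} into $B^m_{N'}$ form. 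Together with the $\hat f_{N'}$ term and the transport term, they vanish by the kernel PDE \eqref{eq:matching-PDE} of Lemma~\ref{lem:matching}, which holds for each $t$ at the current $\hat f(t)$.
\item Pairs with $n+m-1>N$ have no order-$N'$ kernel to absorb them, so they survive. This leftover is precisely $R^K_N$ in \eqref{eq:RKN-def}.
\end{itemize}

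\emph{Sign bookkeeping.} As written in \eqref{eq:RKN-def}, $R^K_N$ carries a plus sign, whereas the bilinear term enters $w_t-w_x$ with a minus. I would resolve this by checking the convention for $\partial_x\check K_N$ in the statement. If needed, I would absorb the sign into the definition; this is harmless, since later only $|R^K_N|$ is bounded.

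\emph{Main obstacle.} The main obstacle is rigor rather than algebra. Lemma~\ref{lem:matching} is only formal for $L^2$ kernels, and $\partial_t\check k_n$ must exist for $\Omega$ to make sense. I would handle this in two parts:
\begin{itemize}
\item Differentiate the integral equation \eqref{eq:k-IE} in $t$. Using Lemma~\ref{lem:Sn-inverse}, we get $(I-\check{\mathcal S}_n)\partial_t\check k_n=$ terms linear in $\partial_t\hat f$ and in lower-order $\partial_t\check k$. This gives existence and an $L^2$ bound of $\partial_t\check k_n$ through \eqref{eq:P4-N}.
\item Interpret the identity \eqref{eq:tgt1-clean} in the weak (characteristic) sense, consistent with how \eqref{eq:k-IE} is used.
\end{itemize}
The observation that $\tilde f$ enters only through $\Omega$ then follows by inspection, since $\partial_t\hat f_n=\Proj(\tau_n,\hat f_n)$.
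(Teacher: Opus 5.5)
This is essentially the paper's argument: differentiate $w$ with time-varying kernels to produce $-\Omega$, cancel the $\hat u_x$ terms against $\partial_x\check K_N[\hat u]$, assemble the $R^u_N$ and damping terms into $(I-\check K_N^*)$ form, and cancel the orders $n+m-1\le N$ by \eqref{eq:matching-PDE}; your explicit split by total order is cleaner than the paper's \eqref{eq:pxKN-step}, which writes the matching as if no orders were left over. Your sign worry is justified, but fix your Step~2 display first: $\partial_x\check K_N[\hat u]$ enters $w_t-w_x$ with a plus sign, so the kernel-transport term must be $+\sum_n\int_{T_n(x)}\bigl[\partial_x\check k_n+\sum_i\partial_{\xi_i}\check k_n\bigr]\prod\hat u\,d\xi$ (the minus mirrors a typo in \eqref{eq:wt-wx-frozen} and would block the cancellation); with that fixed, the leftover is exactly $-R^K_N$ with $R^K_N$ as in \eqref{eq:RKN-def}, so the stated $+R^K_N$ is a sign slip, harmless because only $|\langle e^{cx}w,R^K_N\rangle|$ and $\|R^K_N\|_c$ are used downstream.
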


\begin{proof}
Differentiate \eqref{eq:w-def} in $t$ and substitute the observer dynamics \eqref{eq:obs1-N}:
\begin{equation}\label{eq:wt-step}
w_t=\hat u_x+\hat F_N[u,\hat f]+c_0 e\,\Phi_N(u)-\partial_t \check K_N[\hat u].
\end{equation}
With the actual closed-loop $\hat u_t=\hat u_x+\hat F_N[u,\hat f]+c_0 e\,\Phi_N(u)$ substituted at each $\xi_j$,
\begin{eqnarray}
\partial_t \check K_N[\hat u] &=& \Omega+\sum_n\!\int_{T_n(x)}\!\!\check k_n\sum_j[\hat u_x(\xi_j)+\hat F_N[u,\hat f](\xi_j) \nonumber\\
&& +\,c_0 e(\xi_j)\Phi_N(u)]\prod_{i\ne j}\hat u\,d\xi. \label{eq:ptKN-step}
\end{eqnarray}
The matching identity (Lemma~\ref{lem:matching}) for $\check k_n$ at frozen $\hat f$ gives
\begin{eqnarray}
\partial_x \check K_N[\hat u] &=& -\hat F_N[\hat u,\hat f] \nonumber\\
&& +\sum_n\!\int_{T_n(x)}\!\!\check k_n\sum_j\hat F_N[\hat u,\hat f](\xi_j)\!\prod_{i\ne j}\!\hat u\,d\xi \nonumber\\
&& +\sum_n\!\int_{T_n(x)}\!\!\check k_n\sum_j\hat u_x(\xi_j)\!\prod_{i\ne j}\!\hat u\,d\xi. \label{eq:pxKN-step}
\end{eqnarray}
Form $w_t-w_x$. Since $w_x=\hat u_x-\partial_x\check K_N[\hat u]$, subtracting from \eqref{eq:wt-step} gives
\begin{equation}\label{eq:wt-wx-assemble}
w_t-w_x=\hat F_N[u,\hat f]+c_0 e\,\Phi_N(u)-\partial_t\check K_N[\hat u]+\partial_x\check K_N[\hat u].
\end{equation}
Insert \eqref{eq:ptKN-step} and \eqref{eq:pxKN-step} and collect by structure. The transport-derivative terms $\sum_n\int\check k_n\sum_j\hat u_x(\xi_j)\prod_{i\ne j}\hat u\,d\xi$ enter from $\partial_t\check K_N$ and $\partial_x\check K_N$ with opposite signs and cancel. The standalone feedforward is $\hat F_N[u,\hat f]-\hat F_N[\hat u,\hat f]=R^u_N$, while the bilinear terms contribute $-\sum_n\int\check k_n\sum_j R^u_N(\xi_j)\prod_{i\ne j}\hat u\,d\xi$; by the definition \eqref{eq:KN-star-def} of $(I-\check K_N^*)$ these combine into $(I-\check K_N^*)R^u_N$, with the unmatched orders $n+m-1>N$ left over as $R^K_N$. The damping terms combine the same way into $c_0\Phi_N(u)(I-\check K_N^*)e$, and the $\partial_t\check k_n$ terms give $-\Omega$. This is \eqref{eq:tgt1-clean}; the boundary condition $w(1,t)=0$ follows from $\hat u(1,t)=U(t)=\check K_N[\hat u](1,t)$.
\end{proof}

%==============================================================
\section{Dominance Bounds on Target-System Perturbations}\label{sec:proof-dominance}
%==============================================================

The controller Lyapunov component is
\begin{equation}\label{eq:W-def}
V_{\rm targ}(t):=\tfrac12\|w(\cdot,t)\|_c^2.
\end{equation}
Abbreviating the observer nonlinear damping term from \eqref{eq:tgt1-clean} as
\begin{equation}\label{eq:D-def}
D(x,t):=c_0\Phi_N(u)\,(I-\check K_N^*)e,
\end{equation}
computing $\dot V_{\rm targ}$ along \eqref{eq:tgt1-clean}--\eqref{eq:tgt2-clean}, and integrating by parts in $x$ with $w(1,t)=0$,
\begin{eqnarray}
\dot V_{\rm targ} &=& -\tfrac12 w^2(0,t)-\tfrac{c}{2}\|w\|_c^2 \nonumber\\
&& +\langle e^{cx}w,(I-\check K_N^*)R^u_N\rangle+\langle e^{cx}w,D\rangle \nonumber\\
&& -\langle e^{cx}w,\Omega\rangle+\langle e^{cx}w,R^K_N\rangle. \label{eq:Wdot-expand}
\end{eqnarray}
We use the inverse transformation $\hat u=w+\check L_N[w]$, with $\check L_N$ the inverse of $\check K_N$ at frozen $\hat f$ (\cite[Lemma~10]{HV2}), giving
\begin{equation}\label{eq:inverse-transform}
\|\hat u\|_{L^2}\le G_2\,\|w\|_c,\qquad \|w\|_{L^2}\le H_2\,\|\hat u\|_{L^2},
\end{equation}
with $G_2,H_2>0$ depending on $D_K^\infty,C_K^\infty,N$. The multilinear telescoping identity \cite[eq.~(37)]{HV2} splits the mismatch,
\begin{equation}\label{eq:Ru-split}
R^u_N=R^{u,1}_N+\sum_{n\ge 2}R^{u,n}_N,
\end{equation}
into the genuinely bilinear $n=1$ piece and the higher-order pieces in the state.

\begin{lemma}[Dominance bounds]\label{lem:dominance}
On the basin $\|\hat u\|_{L^2}\le\bar u<\min(\rho_f,1/C_K^\infty)$, the cross-terms in \eqref{eq:Wdot-expand} together with the identifier residual $\langle e^{cx}e,R_N[u]\rangle$ from \eqref{eq:Vdot-N} satisfy, for any $\delta>0$ and $\lambda>0$,
\begin{eqnarray}
|\langle e^{cx}w,\Omega\rangle| &\le& \frac{2(L^\Omega)^2}{\lambda c}\|w\|_c^2+\frac{\lambda c}{8}\|e\|_c^2\Phi_N(u), \label{eq:Omega-Y}\\
|\langle e^{cx}w,D\rangle| &\le& \frac{c_0\Phi_N(u)(1+e^{c/2}\Lambda^D_K)^2}{2\lambda}\|w\|_c^2 \nonumber\\
&& +\frac{\lambda c_0}{2}\|e\|_c^2\Phi_N(u), \label{eq:D-Y-corrected}\\
|\langle e^{cx}w,R^K_N\rangle| &\le& \frac{c}{8}\|w\|_c^2 \nonumber\\
&& +\frac{2e^c(L^{R^K}_N)^2}{c(1-C_K^\infty\bar u)^2}\,\|\hat u\|_{L^2}^{2(N+1)}, \label{eq:RK-Y}\\
\Bigl|\Bigl\langle e^{cx}w,\sum_{n\ge 2}R^{u,n}_N\Bigr\rangle\Bigr| &\le& \frac{2e^c\Lambda^{u,\ge 2}_N(\bar u)^2}{\lambda c}\|w\|_c^2+\frac{\lambda c}{8}\|e\|_c^2, \label{eq:Ru-ge2-Y}\\
|\langle e^{cx}w,R^{u,1}_N\rangle| &\le& \frac{D_f^2 e^c}{2\delta}\|w\|_c^2+\frac{\delta}{2}\|e\|_c^2, \label{eq:Ru-1-Y}\\
\Bigl|\int_0^1 e^{cx}e\,R_N[u]\,dx\Bigr| &\le& \frac{c}{8}\|e\|_c^2 \nonumber\\
&& +\frac{2e^c(L^R_N)^2}{c(1-\bar u/\rho_f)^2}\,\|u\|_{L^2}^{2(N+1)}, \label{eq:RN-Y}
\end{eqnarray}
where
\begin{eqnarray}
\Lambda^\Omega_N(r) &:=& D_K^\Omega\sum_{n=1}^N\sqrt{n}\,(C_K^\Omega r)^{n-1}\,r,\quad r<1/C_K^\Omega, \label{eq:LambdaOmega}\\
\Lambda^D_K(r) &:=& D_K^\infty\sum_{n=1}^N n^{3/2}\,(C_K^\infty r)^{n-1},\quad r<1/C_K^\infty, \label{eq:LambdaDK-def}\\
\Lambda^{u,\ge 2}_N(r) &:=& D_f\sum_{n=2}^N n^2\,(r/\rho_f)^{n-1},\quad r<\rho_f, \label{eq:Lambda-uge2}
\end{eqnarray}
and
\begin{equation}\label{eq:LOmega-LR-def}
L^\Omega:=e^{c/2}\sqrt{\gamma^{\max}e^c}\,\Lambda^\Omega_N(\|\hat u\|_{L^2}),\qquad L^R_N:=D_f/\rho_f^N,
\end{equation}
while $L^{R^K}_N$ is a constant from the cascade structure.
\end{lemma}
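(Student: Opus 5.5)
The plan is to prove the six inequalities one cross-term at a time. Each starts from a pointwise or $L^2$ bound on the perturbation, followed by Cauchy--Schwarz in the weighted inner product and a Young splitting with the advertised weights. The common tools are three. First, the kernel slice bounds of Lemma~\ref{lem:k-bounds}. Second, the factorial bound \eqref{eq:P1-N} on $\hat f_n$. Third, the simplex-symmetrization estimate
\[
\int_{T_n(x)}\prod_i|\phi(\xi_i)|\,d\xi\le \|\phi\|_{L^1}^n/n!\le\|\phi\|_{L^2}^n/n!,
\]
together with its one-slot-replaced variant, in which one factor $\hat u(\xi_j)$ is replaced by a function $\phi(\xi_j)$. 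Throughout we use $1\le e^{cx}\le e^c$, so that $\|\cdot\|_{L^2}\le\|\cdot\|_c\le e^{c/2}\|\cdot\|_{L^2}$.

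\textbf{The multiplicative terms $D$, $R^{u,1}_N$ and $R^{u,n}_N$.} For \eqref{eq:D-Y-corrected} we bound $\|(I-\check K_N^*)e\|_{L^2}$ using the kernel $L^2$-slice bound \eqref{eq:k-L2-explicit}. By Cauchy--Schwarz in $\xi$, each of the $n$ slots $j$ contributes $\|\check k_n(x,\cdot)\|_{L^2}\,\|e\|_{L^2}\,\|\hat u\|^{n-1}_{L^2}/\sqrt{(n-1)!}$, up to combinatorial factors. Summing over $j$ and $n$ gives the series $\Lambda^D_K(\|\hat u\|)$, where the $n^{3/2}$ absorbs $n\sqrt{n!/(n-1)!}$. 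This series converges on $C_K^\infty\bar u<1$. Converting $\|e\|_{L^2}\le\|e\|_c$ and $\|\cdot\|_c\le e^{c/2}\|\cdot\|_{L^2}$ produces $(1+e^{c/2}\Lambda^D_K)$. Young's inequality with parameter $\lambda$, multiplied by $c_0\Phi_N$, then gives the bound. For $R^{u,1}_N=\int_0^x\hat f_1(x,\xi)e(\xi)\,d\xi$ we use $|R^{u,1}_N|\le D_f\|e\|_{L^2}$ and Young with $\delta$. For $n\ge2$, the telescoping identity writes $\prod u-\prod\hat u$ as a sum of $n$ terms. Each term contains one factor $e$ and $n-1$ factors drawn from $u$ or $\hat u$. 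With $\|\hat f_n\|_\infty\le n!D_f/\rho_f^{n-1}$ and the $1/(n-1)!$ simplex volume factor, this yields $D_f\,n^2(r/\rho_f)^{n-1}$, which is summable for $r<\rho_f$. Here I take $r=\bar u$ and treat $\|u\|$ as controlled by the same basin radius, as the lemma's hypothesis implicitly does.

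\textbf{The truncation tails $R_N[u]$ and $R^K_N$.} These are pure-state remainders, handled by geometric tails. For $R_N[u]$ we have
\[
|R_N[u](x)|\le\sum_{n>N}D_f\rho_f^{1-n}\|u\|^n=\frac{D_f\rho_f^{-N}\|u\|^{N+1}}{1-\|u\|/\rho_f}.
\]
The first inequality uses $n!/n!$ cancellation. We then apply Young with weight $c/8$; the factor $e^c$ comes from the weight. For $R^K_N$ the sum is finite, but every term has total state order at least $N+1$. Bounding each term by the $L^1$-slice kernel bound \eqref{eq:k-L1-explicit} times $\|\hat f_m\|_\infty\|\hat u\|^{m}/m!$ times $\|\hat u\|^{n-1}$, and factoring out $\|\hat u\|^{N+1}$, leaves a bounded polynomial in $\|\hat u\|$. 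This is dominated by $L^{R^K}_N/(1-C_K^\infty\bar u)$. Young's inequality with weight $c/8$ finishes the bound.

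\textbf{The drift term $\Omega$, the main obstacle.} This needs a Lipschitz-type bound on $\partial_t\check k_n$ in terms of $\partial_t\hat f$. I will differentiate the cascade \eqref{eq:k-IE} in $t$. Since it is linear in $\check k_n$ given $\hat f$, $\partial_t\check k_n$ solves
\[
(I-\check{\mathcal S}_n)\,\partial_t\check k_n=\partial_t\check\Psi_n+\int_0^{\xi_n}B^1_n[\check k_n,\partial_t\hat f_1]\,d\sigma.
\]
Here $\partial_t\check\Psi_n$ involves $\partial_t\hat f_n$, the terms $B^m_n[\check k_{n-m+1},\partial_t\hat f_m]$, and $\partial_t\check k_{n-m+1}$ from lower orders. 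The delicate point is that $\partial_t\hat f$ is controlled only in $L^2$, by \eqref{eq:P4-N}, not in $L^\infty$. So I need the bilinear bound in the form $\|B^m_n[\kappa,g]\|_{L^2}\lesssim\|\kappa\|_{L^\infty\text{ or slice}}\|g\|_{L^2}$ rather than the $L^\infty$ form. I will try to obtain it from the $L^2$-slice estimates via Cauchy--Schwarz in the simplex integration variable. Inverting with Lemma~\ref{lem:Sn-inverse} and inducting as in Lemma~\ref{lem:k-bounds} gives
\[
\|\partial_t\check k_n\|_{L^2\text{-slice}}\le\sqrt{n!}\,D_K^\Omega(C_K^\Omega)^{n-1}\Bigl(\sum_m\gamma_m^{-1}\|\partial_t\hat f_m\|^2\Bigr)^{1/2}\sqrt{\gamma^{\max}}.
\]
Combining this with the simplex estimate on $\prod\hat u$ gives
\[
\|\Omega\|_{L^2}\le\Lambda^\Omega_N(\|\hat u\|)\sqrt{\gamma^{\max}}\Bigl(\sum_m\gamma_m^{-1}\|\partial_t\hat f_m\|^2\Bigr)^{1/2}.
\]
Then \eqref{eq:P4-N} turns this into $\|\Omega\|_{L^2}^2\le(L^\Omega)^2 e^{-c}\|e\|_c^2\Phi_N$, up to the stated constants. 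A final Young split with weights $\lambda c/8$ yields \eqref{eq:Omega-Y}.
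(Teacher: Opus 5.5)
Your decomposition, the Young weights, and the arguments for $D$, $R^{u,1}_N$, $R^{u,n}_N$, $R_N[u]$ and $\Omega$ follow the paper's proof. For $\Omega$ this means differentiating the cascade in $t$ and inverting $I-\check{\mathcal S}_n$ by Lemma~\ref{lem:Sn-inverse}. The step that fails as written is the $R^K_N$ estimate. You bound each term by three factors: the $L^1$-slice norm \eqref{eq:k-L1-explicit} of $\check k_n$, the sup of the inner $T_m(\xi_j)$-integral, and $\|\hat u\|_{L^2}^{n-1}$. An $L^1$ norm of the kernel can only be paired with a pointwise bound on the rest of the integrand. For $n\ge 2$ that rest still contains $\prod_{i\ne j}\hat u(\xi_i)$, which $\|\hat u\|_{L^2}^{n-1}$ does not bound pointwise. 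So the inequality
\[
\int_{T_n(x)}|\check k_n(x,\xi)|\prod_{i\ne j}|\hat u(\xi_i)|\,d\xi\le\|\check k_n(x,\cdot)\|_{L^1(T_n(x))}\,\|\hat u\|_{L^2}^{n-1}
\]
is false for general kernels. Take $n=2$, $j=1$, let $\check k_2$ carry unit $L^1$ mass in a strip of width $\epsilon$ around $\xi_2=a$, and let $\hat u=\epsilon^{-1/2}$ on an $\epsilon$-interval around $a$. Then $\|\hat u\|_{L^2}=1$, but the left side grows like $\epsilon^{-1/2}$. The displayed inequality needs $\|\hat u\|_{L^\infty}^{n-1}$, which belongs to a sup-norm setting, not the present $L^2$ one. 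The repair is the paper's route, which you already use for $D$: apply Cauchy--Schwarz on $T_n(x)$, pairing the $L^2$-slice bound \eqref{eq:k-L2-explicit} with $\|\prod_{i\ne j}\hat u(\xi_i)\|_{L^2(T_n(x))}\le\|\hat u\|_{L^2}^{n-1}/\sqrt{(n-1)!}$. Keep only the inner $T_m$-integral in sup norm; that is legitimate there, since $\hat f_m\in L^\infty$ multiplies a product of $L^1$ functions. The ratio $\sqrt{n!}/\sqrt{(n-1)!}=\sqrt n$ then gives \eqref{eq:RK-state}, and your Young step is unchanged.

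Your concern in the $\Omega$ step is well founded: \eqref{eq:P4-N} controls $\partial_t\hat f_m$ only in $L^2$. The paper passes over this by citing $L^2$ bounds on the $B^m_n$-pieces, which in the form \eqref{eq:Bm-L2} require $\|g\|_\infty$. Your plan of Cauchy--Schwarz in the simplex variable closes this gap, and gives more than you asked for. In every term of \eqref{eq:Bmn-def} the factors $\kappa$ and $g$ share only the variable $s_j$. Cauchy--Schwarz in $s_j$ followed by Fubini therefore gives $\|B^m_n[\kappa,g]\|_{L^2}\le\|\kappa\|_{L^2}\|g\|_{L^2}$ times the same term count as in \eqref{eq:Bm-L2}, with no $L^\infty$ or slice norm on $\kappa$. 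One correction: Lemma~\ref{lem:Sn-inverse} inverts $I-\check{\mathcal S}_n$ on the full $L^2$ space, so your induction yields a full-$L^2$ bound on $\partial_t\check k_n$, not the slice bound you wrote. That is enough for your $\|\Omega\|_{L^2}$ estimate: apply Cauchy--Schwarz at each $x$ and integrate in $x$. A pointwise bound such as \eqref{eq:Omega-x-bd}, by contrast, would genuinely need slice norms. With the $R^K_N$ step rerouted, your argument proves the lemma to the same level of precision as the paper's, and its $\Omega$ part rests on firmer footing.
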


\begin{proof}
Each cross-term below is reduced by the weighted Young inequality: for $a,b\ge 0$ and any $\vartheta>0$,
\begin{equation}\label{eq:young}
ab\le\frac{1}{2\vartheta}\,a^2+\frac{\vartheta}{2}\,b^2.
\end{equation}
In every case $a$ is a constant multiple of $\|w\|_c$, charged to the $\|w\|_c^2$ dissipation of $\dot V_{\rm targ}$, and $b$ is a multiple of $\|e\|_c$ or $\|e\|_c\sqrt{\Phi_N(u)}$, charged to the $\|e\|_c^2$ or $\|e\|_c^2\Phi_N(u)$ dissipation of $\lambda\dot V_{\rm iden}$; the free parameter $\vartheta$ apportions the split.

\emph{(i) The drift term $\Omega$.} The drift $\Omega(x,t):=\sum_{n=1}^N\int_{T_n(x)}\partial_t\check k_n(x,\xi,t)\prod_i\hat u(\xi_i,t)\,d\xi$ from \eqref{eq:Omega-def} requires a bound on $\partial_t\check k_n$. Differentiating \eqref{eq:k-IE} in $t$,
\begin{equation}\label{eq:lin-cascade}
(I-\check{\mathcal S}_n)\,\partial_t\check k_n=\partial_t\check\Psi_n+(\partial_t\check{\mathcal S}_n)\check k_n,
\end{equation}
where
\begin{equation}\label{eq:dt-S-def}
\partial_t\check{\mathcal S}_n[\kappa]:=\int_0^{\xi_n}B^1_n[\kappa,\partial_t\hat f_1](\xi^\sigma)\,d\sigma,
\end{equation}
and
\begin{eqnarray}
\partial_t\check\Psi_n &=& -\int_0^{\xi_n}\!\partial_t\hat f_n(\xi^\sigma)\,d\sigma \nonumber\\
&& +\sum_{m=2}^n\int_0^{\xi_n}\bigl[B^m_n[\partial_t\check k_{n-m+1},\hat f_m] \nonumber\\
&& +B^m_n[\check k_{n-m+1},\partial_t\hat f_m]\bigr](\xi^\sigma)\,d\sigma. \label{eq:Psidot}
\end{eqnarray}
By Lemma~\ref{lem:Sn-inverse}, $\|(I-\check{\mathcal S}_n)^{-1}\|_{L^2\to L^2}\le e^{\binom{n+1}{2}D_f}$. Combining with $L^2$-bounds on the $B^m_n$-pieces and the $L^2$-cascade structure of \cite{HV1}, induction on \eqref{eq:lin-cascade} yields constants $D_K^\Omega,C_K^\Omega$ depending on $D_f$ such that
\begin{eqnarray}
\|\partial_t\check k_n(\cdot,\cdot,t)\|_{L^2(T_n(1))} &\le& D_K^\Omega\,(C_K^\Omega)^{n-1}\sqrt{n!} \nonumber\\
&& \times\sum_{m=1}^n\|\partial_t\hat f_m(\cdot,\cdot,t)\|_{L^2(T_m(1))}. \label{eq:partialt-k-L2}
\end{eqnarray}
By Cauchy--Schwarz on $T_n(x)$ with the simplex bridge $\|\prod\hat u\|_{L^2(T_n(x))}\le\|\hat u\|^n_{L^2}/\sqrt{n!}$ \cite{HV2}, the $\sqrt{n!}$ factors cancel, leaving
\begin{equation}\label{eq:Omega-x-bd}
|\Omega(x,t)|\le\Lambda^\Omega_N(\|\hat u\|_{L^2})\cdot\sqrt{\sum_{m=1}^N\|\partial_t\hat f_m\|^2_{L^2(T_m(1))}}.
\end{equation}
Pairing against $e^{cx}w$ by Cauchy--Schwarz and using \eqref{eq:Omega-x-bd}, then \eqref{eq:P4-N} to convert the update rate $\sqrt{\sum_m\|\partial_t\hat f_m\|^2}\le\sqrt{\gamma^{\max}e^c}\,\|e\|_c\sqrt{\Phi_N(u)}$,
\begin{equation}\label{eq:Omega-pre-Young}
|\langle e^{cx}w,\Omega\rangle|\le\|w\|_c\,\|\Omega\|_c\le L^\Omega\,\|w\|_c\,\|e\|_c\sqrt{\Phi_N(u)},
\end{equation}
with $L^\Omega$ as in \eqref{eq:LOmega-LR-def}. Young's inequality \eqref{eq:young} with $\vartheta=\lambda c/4$ on $a=L^\Omega\|w\|_c$, $b=\|e\|_c\sqrt{\Phi_N(u)}$ then gives \eqref{eq:Omega-Y}.

\emph{(ii) The damping term $D$.} Split $D=c_0\Phi_N(u)\,e+D_2$ where
\begin{eqnarray}
D_2(x,t) &:=& -c_0\Phi_N(u)\sum_{n=1}^N\int_{T_n(x)}\check k_n(x,\xi,t) \nonumber\\
&& \times\sum_{j=1}^n e(\xi_j,t)\!\prod_{i\ne j}\!\hat u(\xi_i,t)\,d\xi. \label{eq:D2-def}
\end{eqnarray}
By Cauchy--Schwarz, $|\langle e^{cx}w,c_0\Phi_N(u)\,e\rangle|\le c_0\Phi_N(u)\|w\|_c\|e\|_c$. At each order $n$, by the $\sum_j$ symmetry pick the $j=1$ slot as representative; by Cauchy--Schwarz on $T_n(x)$ with $\|\check k_n(x,\cdot)\|_{L^2(T_n(x))}\le\sqrt{n!}\,D_K^\infty(C_K^\infty)^{n-1}x^{n/2}$ (Lemma~\ref{lem:k-bounds}) and the simplex symmetrization $\|e(\xi_1)\prod_{i\ge 2}\hat u(\xi_i)\|_{L^2(T_n(x))}\le\|e\|_{L^2}\|\hat u\|^{n-1}_{L^2}\sqrt{x}/\sqrt{(n-1)!}$ \cite[eq.~(27)]{HV2}, multiplying by $c_0\Phi_N(u)\,n$ and summing,
\begin{equation}\label{eq:D2-bd}
|D_2(x,t)|\le c_0\Phi_N(u)\,\|e\|_{L^2}\,\sqrt{x}\,\Lambda^D_K(\|\hat u\|_{L^2}).
\end{equation}
Pairing against $|w|e^{cx}$ and combining with the bound on the first piece,
\begin{equation}\label{eq:D-bd}
|\langle e^{cx}w,D\rangle|\le c_0\Phi_N(u)\bigl(1+e^{c/2}\Lambda^D_K(\|\hat u\|_{L^2})\bigr)\,\|w\|_c\,\|e\|_c.
\end{equation}
Young's inequality \eqref{eq:young} with $\vartheta=\lambda$ on $a=\sqrt{c_0\Phi_N(u)}(1+e^{c/2}\Lambda^D_K)\|w\|_c$, $b=\sqrt{c_0\Phi_N(u)}\|e\|_c$ gives \eqref{eq:D-Y-corrected}.

\emph{(iii) The controller-truncation residual $R^K_N$.} The unmatched-order pieces of the controller cascade have multilinear structure $\check k_n\cdot\hat f_m\cdot\hat u^{\otimes(n+m-1)}$ at total order $n+m-1>N$. By Cauchy--Schwarz on the inner $T_m(\xi_j)$-simplex and the outer $T_n(x)$-simplex with the simplex symmetrization of \cite{HV2}, using $\|\check k_n(x,\cdot)\|_{L^2(T_n(x))}\le\sqrt{n!}\,D_K^\infty(C_K^\infty)^{n-1}x^{n/2}$ from Lemma~\ref{lem:k-bounds} and $\|\hat f_m\|_\infty\le m!D_f/\rho_f^{m-1}$, the $\sqrt{n!}/\sqrt{(n-1)!}=\sqrt{n}$ and $m!/\sqrt{m!}=\sqrt{m!}$ simplifications yield
\begin{eqnarray}
|R^{K,n,m}_N(x,t)| &\le& \sqrt{n\,m!}\,D_K^\infty(C_K^\infty)^{n-1}\,\frac{D_f}{\rho_f^{m-1}} \nonumber\\
&& \times\,\|\hat u\|^{n+m-1}_{L^2}\,x^{n/2+1/2}. \label{eq:RK-piece-bd}
\end{eqnarray}
Summing $|R^K_N|=\sum_{n+m-1>N}|R^{K,n,m}_N|$, the resulting series is geometric in $C_K^\infty\|\hat u\|_{L^2}$ with leading order $\|\hat u\|^{N+1}_{L^2}$:
\begin{eqnarray}
\|R^K_N(\cdot,t)\|_c &\le& e^{c/2}\,L^{R^K}_N\,\|\hat u(\cdot,t)\|_{L^2}^{N+1} \nonumber\\
&& \times\frac{1}{1-C_K^\infty\|\hat u\|_{L^2}}, \label{eq:RK-state}
\end{eqnarray}
on $\|\hat u\|_{L^2}<1/C_K^\infty$, with $L^{R^K}_N$ depending on $D_K^\infty,C_K^\infty,D_f,\rho_f$. Young's inequality \eqref{eq:young} with $\vartheta=c/4$ on $a=\|w\|_c$, $b=\|R^K_N\|_c$ gives \eqref{eq:RK-Y}.

\emph{(iv) The plant--observer mismatch $R^u_N$.} By the multilinear telescoping identity \cite[eq.~(37)]{HV2}, $R^u_N=R^{u,1}_N+\sum_{n\ge 2}R^{u,n}_N$ where
\begin{eqnarray}
R^{u,n}_N(x,t) &:=& \int_{T_n(x)}\hat f_n(x,\xi,t)\sum_{j=1}^n\!\prod_{i<j}u(\xi_i,t) \nonumber\\
&& \times\,e(\xi_j,t)\cdot\prod_{i>j}\hat u(\xi_i,t)\,d\xi. \label{eq:Run-def}
\end{eqnarray}
For $n\ge 2$, by Cauchy--Schwarz on $T_n(x)$ with the simplex symmetrization of \cite{HV2}, bounding $|\hat f_n|\le n!D_f/\rho_f^{n-1}$ and counting the $\sum_j$ as an $n$-fold count, and pairing against $|w|e^{cx}$,
\begin{equation}\label{eq:Ru-ge2}
\Bigl|\Bigl\langle e^{cx}w,\sum_{n\ge 2}R^{u,n}_N\Bigr\rangle\Bigr|\le e^{c/2}\,\Lambda^{u,\ge 2}_N(\bar u)\,\|w\|_c\,\|e\|_c.
\end{equation}
Young's inequality \eqref{eq:young} with $\vartheta=\lambda c/4$ gives \eqref{eq:Ru-ge2-Y}. For $n=1$ the mismatch is genuinely bilinear and linear in the observer error $e$,
\begin{equation}\label{eq:Ru1-explicit}
R^{u,1}_N(x,t)=\int_0^x\hat f_1(x,\xi,t)\,e(\xi,t)\,d\xi,
\end{equation}
with $\|\hat f_1\|_\infty\le D_f$; by Cauchy--Schwarz $|\langle e^{cx}w,R^{u,1}_N\rangle|\le D_f e^{c/2}\|w\|_c\|e\|_c$, and Young's inequality \eqref{eq:young} with $\vartheta=\delta$ gives \eqref{eq:Ru-1-Y}.

\emph{(v) The identifier residual $R_N[u]$.} The unmodeled plant tail
\begin{equation}\label{eq:RN-explicit}
R_N[u](x,t)=\sum_{n>N}\int_{T_n(x)}f_n(x,\xi)\prod_{i=1}^n u(\xi_i,t)\,d\xi
\end{equation}
in \eqref{eq:Vdot-N} satisfies, by the simplex symmetrization $\int_{T_n(x)}|\prod u(\xi_i)|\,d\xi\le x^{n/2}\|u\|^n_{L^2}/n!$ and Assumption~\ref{ass:factorial},
\begin{equation}\label{eq:RN-bound}
\|R_N[u](\cdot,t)\|_\infty\le L^R_N\,\|u\|_{L^2}^{N+1}\cdot\frac{1}{1-\|u\|_{L^2}/\rho_f},
\end{equation}
on $\|u\|_{L^2}<\rho_f$, with $L^R_N:=D_f/\rho_f^N$. Young's inequality \eqref{eq:young} with $\vartheta=c/4$ on $a=\|e\|_c$, $b=e^{c/2}\|R_N[u]\|_\infty$ gives \eqref{eq:RN-Y}.
\end{proof}

%==============================================================
\section{Stability via $V_{\rm tot}$}\label{sec:proof-Udot}
%==============================================================

The joint Lyapunov function is
\begin{equation}\label{eq:U-Lyap-def}
V_{\rm tot}(t):=V_{\rm targ}(t)+\lambda V_{\rm iden}(t),\qquad V_{\rm targ}(t):=\tfrac12\|w(\cdot,t)\|_c^2.
\end{equation}

\begin{lemma}[Norm equivalence]\label{lem:norm-equiv}
There exist constants $\mu_1,\mu_2,G_2,H_2>0$ depending on $D_f,\rho_f,N,c,\lambda$, with $G_2,H_2$ from the inverse transformation $\hat u=w+\check L_N[w]$ in $L^2$-form, such that
\begin{equation}\label{eq:norm-equiv-bd}
\mu_1\Pi(t)^2\le V_{\rm tot}(t)\le\mu_2\Pi(t)^2,
\end{equation}
hence $\alpha_1(\Pi(t))\le V_{\rm tot}(t)\le\alpha_2(\Pi(t))$ with $\alpha_1(s):=\mu_1 s^2,\,\alpha_2(s):=\mu_2 s^2\in\mathcal K_\infty$.
\end{lemma}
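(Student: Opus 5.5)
The plan is to bound the three pieces of $V_{\rm tot}=\tfrac12\|w\|_c^2+\tfrac{\lambda}{2}\|e\|_c^2+\lambda\sum_n\frac{1}{2\gamma_n}\|\tilde f_n\|^2_{L^2(T_n(1))}$ separately, after first trading $(w,e)$ for $(u,\hat u)$. Throughout we use $1\le e^{cx}\le e^c$ on $[0,1]$, so $\|\phi\|^2_{L^2}\le\|\phi\|_c^2\le e^c\|\phi\|^2_{L^2}$. We also use the two-sided transformation estimates \eqref{eq:inverse-transform}, $\|\hat u\|_{L^2}\le G_2\|w\|_c$ and $\|w\|_{L^2}\le H_2\|\hat u\|_{L^2}$. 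These hold uniformly in $t$ because, by Lemma~\ref{lem:Vdot-N} (projection), $\hat f(t)\in\prod_n\bar F_n$ for all $t$. Lemma~\ref{lem:k-bounds} then gives kernel bounds with constants $D_K^\infty,C_K^\infty$ independent of $t$, and these feed the direct and inverse transformation bounds of \cite[Lemma~10]{HV2}.

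For the upper bound, write $\|w\|_c^2\le e^cH_2^2\|\hat u\|^2_{L^2}$. Then use $e=u-\hat u$ to get $\|e\|_c^2\le 2e^c(\|u\|^2_{L^2}+\|\hat u\|^2_{L^2})$. Bound the parameter term by $\lambda\max_n(1/2\gamma_n)\sum_n\|\tilde f_n\|^2$. Summing gives $V_{\rm tot}\le\mu_2\Pi^2$ with $\mu_2:=\max\{\tfrac12 e^cH_2^2+\lambda e^c,\ \lambda e^c,\ \lambda/(2\gamma^{\min})\}$.

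For the lower bound, use $\|\hat u\|^2_{L^2}\le G_2^2\|w\|_c^2$ to get $\tfrac12\|w\|_c^2\ge\|\hat u\|^2_{L^2}/(2G_2^2)$. Next, $\|u\|^2_{L^2}\le 2\|e\|^2_{L^2}+2\|\hat u\|^2_{L^2}\le 2\|e\|_c^2+2G_2^2\|w\|_c^2$. Splitting $\tfrac12\|w\|_c^2$ into two halves, one for $\hat u$ and one for the $\hat u$-part of $u$, yields $\tfrac12\|w\|_c^2+\tfrac\lambda2\|e\|_c^2\ge \kappa(\|u\|^2_{L^2}+\|\hat u\|^2_{L^2})$, with for instance $\kappa=\min\{1/(8G_2^2),\lambda/8\}/\max\{1,\ldots\}$ after adjusting constants. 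Adding $\lambda\sum_n\|\tilde f_n\|^2/(2\gamma^{\max})$ gives $\mu_1:=\min\{\kappa,\lambda/(2\gamma^{\max})\}$. The $\mathcal K_\infty$ statement is then immediate from the quadratic forms.

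The main obstacle is that the backstepping map is nonlinear (polynomial of degree $N$). Its direct and inverse bounds are therefore not globally linear. The direct map satisfies $\|\check K_N[\hat u]\|_{L^2}\le\sum_n D_K^\infty(C_K^\infty\|\hat u\|)^{n-1}\|\hat u\|$, and the inverse series of \cite[Lemma~10]{HV2} converges only for small $\|w\|$. Consequently $G_2,H_2$ are valid only on a ball, $\|\hat u\|_{L^2}\le\bar u<1/C_K^\infty$, together with the corresponding $w$-ball. I would state the equivalence on that basin, with $G_2,H_2$ evaluated at $\bar u$, e.g.\ $H_2=1+\Lambda(\bar u)$ from Lemma~\ref{lem:k-bounds}. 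Later, in the stability argument, one confirms via $\Pi(t)\le\mu_0\Pi(0)\le\mu_0\bar\Pi$ that trajectories never leave the ball, so the inequality is used only where it holds. The other ingredient, uniformity in $t$ despite the time-varying $\check k_n$, is handled by the projection bound already noted.
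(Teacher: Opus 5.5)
Your proposal is correct and takes essentially the same route as the paper. Both arguments use componentwise bounds via $1\le e^{cx}\le e^c$, the transformation estimates $\|\hat u\|_{L^2}\le G_2\|w\|_c$ and $\|w\|_{L^2}\le H_2\|\hat u\|_{L^2}$, and the split $u=\hat u+e$; your $\kappa=\min\{1/(8G_2^2),\lambda/8\}$ already works as written, with no further adjustment.

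Your version is slightly more careful than the paper's in two places. Your $\mu_2$ correctly uses $\|e\|_c^2\le 2e^c(\|u\|_{L^2}^2+\|\hat u\|_{L^2}^2)$, whereas the paper's $\mu_2=\max(e^cH_2^2/2,\lambda e^c/2,\lambda/(2\gamma^{\min}))$ tacitly assumes $\|\hat u\|_{L^2}^2+\|e\|_{L^2}^2\le\Pi^2$. You also state explicitly that $G_2,H_2$ are uniform in $t$ (by projection) but valid only on a small ball in $\Pi$; the paper leaves this implicit, though it applies the lemma only on $\Omega_{\bar R}$.
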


\begin{proof}
\emph{Lower bound $\mu_1$:} bound $\Pi^2$ above by a multiple of $V_{\rm tot}$. With $u=\hat u+e$ and $\|\hat u\|_{L^2}\le G_2\|w\|_{L^2}$,
\begin{eqnarray}
\Pi^2 &\le& 2\|\hat u\|^2+2\|e\|^2+\|\hat u\|^2+\sum_n\|\tilde f_n\|^2 \nonumber\\
&\le& 3G_2^2\|w\|_{L^2}^2+2\|e\|_{L^2}^2+\sum_n\|\tilde f_n\|^2. \label{eq:Pi-up}
\end{eqnarray}
Using $\|w\|_{L^2}^2\le\|w\|_c^2=2V_{\rm targ}$ and $\|e\|_{L^2}^2\le\|e\|_c^2\le 2V_{\rm iden}/\lambda\cdot\lambda=2V_{\rm iden}$ and $\sum_n\|\tilde f_n\|^2\le 2\gamma^{\max}V^{\rm par}_{\rm iden}\le 2\gamma^{\max}V_{\rm iden}$,
\begin{equation}\label{eq:Pi-up2}
\Pi^2\le \mu_1^{-1}\,V_{\rm tot},\qquad \mu_1:=\bigl[\max(6G_2^2,4/\lambda,2\gamma^{\max}/\lambda)\bigr]^{-1}.
\end{equation}
\emph{Upper bound $\mu_2$:} by the forward-direction transformation $\|w\|_{L^2}\le H_2\|\hat u\|_{L^2}$, and $\|w\|_c\le e^{c/2}\|w\|_{L^2}$,
\begin{equation}\label{eq:U-up}
V_{\rm tot}=\tfrac12\|w\|_c^2+\tfrac{\lambda}{2}\|e\|_c^2+\tfrac{\lambda}{2}\sum_n\|\tilde f_n\|^2/\gamma_n\le \mu_2\Pi^2,
\end{equation}
with $\mu_2=\max(e^c H_2^2/2,\lambda e^c/2,\lambda/(2\gamma^{\min}))$.
\end{proof}

\begin{proof}[Proof of Theorem~\ref{thm:stability}(a), (c), (d)]
Combining $\dot V_{\rm targ}$ from \eqref{eq:Wdot-expand} with bounds \eqref{eq:Omega-Y}, \eqref{eq:D-Y-corrected}, \eqref{eq:RK-Y}, \eqref{eq:Ru-ge2-Y}, \eqref{eq:Ru-1-Y} (Lemma~\ref{lem:dominance}), and $\dot V_{\rm iden}$ from Lemma~\ref{lem:Vdot-N} with bound \eqref{eq:RN-Y}, into $\dot V_{\rm tot}=\dot V_{\rm targ}+\lambda\dot V_{\rm iden}$, and using $\lambda\ge 16 D_f^2 e^c/c^2$, $\delta=\lambda c/4$, and $c\le 4c_0$, each of the three dissipation terms in $\dot V_{\rm tot}$ is written as the sum of the amounts absorbing individual cross-terms plus a nonnegative remainder,
\begin{eqnarray}
\tfrac{c}{2} &=& \underbrace{\tfrac{c}{8}}_{R^K_N}+\underbrace{\tfrac{c}{8}}_{R^{u,1}_N}+\tfrac{c}{4}, \label{eq:dissip-w}\\
\tfrac{\lambda c}{2} &=& \underbrace{\tfrac{\lambda c}{8}}_{R^{u,1}_N}+\underbrace{\tfrac{\lambda c}{8}}_{R^{u,\ge 2}_N}+\underbrace{\tfrac{\lambda c}{8}}_{R_N[u]}+\tfrac{\lambda c}{8}, \label{eq:dissip-e}\\
\lambda c_0 &=& \underbrace{\tfrac{\lambda c}{8}}_{\Omega}+\underbrace{\tfrac{\lambda c_0}{2}}_{D}+\underbrace{\bigl(\tfrac{\lambda c_0}{2}-\tfrac{\lambda c}{8}\bigr)}_{\ge\,0}. \label{eq:dissip-ePhi}
\end{eqnarray}
The coefficient $c/8$ charged to $R^{u,1}_N$ equals $D_f^2 e^c/(2\delta)$ under $\delta=\lambda c/4$ and $\lambda\ge 16 D_f^2 e^c/c^2$; the leftover on the third line is nonnegative because $c\le 4c_0$. The leftovers $c/4$ and $\lambda c/8$ on the first two lines must in turn dominate the higher-order multipliers collected next.
Collecting the higher-order multipliers,
\begin{eqnarray}
\Theta_w &:=& \frac{2(L^\Omega)^2}{\lambda c}+\frac{c_0(1+e^{c/2}\Lambda^D_K(\bar u))^2}{2\lambda}\Phi_\infty(\bar u) \nonumber\\
&& +\frac{2 e^c\Lambda^{u,\ge 2}_N(\bar u)^2}{\lambda c} \nonumber\\
&& +\frac{2e^c(L^{R^K}_N)^2 G_2^{2(N+1)}}{c(1-C_K^\infty\bar u)^2}\|w\|_c^{2N} \nonumber\\
&& +\frac{2\lambda e^c(L^R_N)^2\,2^{2N+1}G_2^{2(N+1)}}{c(1-\bar u/\rho_f)^2}\|w\|_c^{2N}, \label{eq:Theta-W}\\
\Theta_e &:=& \frac{2\lambda e^c(L^R_N)^2\,2^{2N+1}}{c(1-\bar u/\rho_f)^2}\|e\|_c^{2N}. \label{eq:Theta-E}
\end{eqnarray}
where we have used $\Phi_N(u)\le\Phi_\infty(\bar u)$ on the basin in the second summand of $\Theta_w$. Each summand of $\Theta_w$ and $\Theta_e$ vanishes as $\bar u\to 0$. The result is the joint inequality
\begin{eqnarray}
\dot V_{\rm tot}(t) &\le& -\tfrac12 w^2(0,t)-\tfrac{\lambda}{2}e^2(0,t) \nonumber\\
&& -\bigl[\tfrac{c}{4}-\Theta_w\bigr]\|w\|_c^2-\bigl[\tfrac{\lambda c}{8}-\Theta_e\bigr]\|e\|_c^2 \nonumber\\
&& -\bigl[\tfrac{\lambda c_0}{2}-\tfrac{\lambda c}{8}\bigr]\|e\|_c^2\Phi_N(u). \label{eq:Udot-final}
\end{eqnarray}

Defining $\Omega_R:=\{(u,\hat u,\tilde f):\Pi\le R\}$, by Lemma~\ref{lem:norm-equiv}, $\Omega_R\supseteq\{V_{\rm tot}\le\mu_1 R^2\}$ and $\Omega_R\subseteq\{V_{\rm tot}\le\mu_2 R^2\}$. Choose $\bar R>0$ small enough that on $\Omega_{\bar R}$, $\Theta_w<c/8$ and $\Theta_e<\lambda c/16$ (achievable since $\Theta_w,\Theta_e$ vanish as $R\to 0$). Inside $\Omega_{\bar R}$, \eqref{eq:Udot-final} reduces to
\begin{eqnarray}
\dot V_{\rm tot} &\le& -\tfrac{c}{8}\|w\|_c^2-\tfrac{\lambda c}{16}\|e\|_c^2 \nonumber\\
&& -\bigl(\tfrac{\lambda c_0}{2}-\tfrac{\lambda c}{8}\bigr)\|e\|_c^2\Phi_N(u)\le 0, \label{eq:Udot-clean}
\end{eqnarray}
so the sublevel set $\{V_{\rm tot}\le\mu_1\bar R^2\}$ is forward-invariant. Set $\bar\Pi:=\bar R\sqrt{\mu_1/\mu_2}$. For $\Pi(0)\le\bar\Pi$,
\begin{equation}\label{eq:U0-bound}
V_{\rm tot}(0)\le\mu_2\Pi(0)^2\le\mu_2\bar\Pi^2=\mu_1\bar R^2,
\end{equation}
so the trajectory enters $\{V_{\rm tot}\le\mu_1\bar R^2\}\subseteq\Omega_{\bar R}$, where forward invariance and $\dot V_{\rm tot}\le 0$ apply. For any $t\ge 0$, $V_{\rm tot}(t)\le V_{\rm tot}(0)$, hence by Lemma~\ref{lem:norm-equiv},
\begin{equation}\label{eq:Pi-stability}
\mu_1\Pi(t)^2\le V_{\rm tot}(t)\le V_{\rm tot}(0)\le\mu_2\Pi(0)^2,
\end{equation}
giving $\Pi(t)\le\mu_0\,\Pi(0)$ with $\mu_0:=\sqrt{\mu_2/\mu_1}$. This is Theorem~\ref{thm:stability}(a).

Bound \eqref{eq:thm-fhat-bound} is from Lemma~\ref{lem:Vdot-N}(P1); \eqref{eq:thm-kcheck-bound} from Lemma~\ref{lem:k-bounds}. For \eqref{eq:thm-U-bound}, by Cauchy--Schwarz on $T_n(1)$ at $x=1$ using the $L^2$-slice bound $\|\check k_n(1,\cdot)\|_{L^2(T_n(1))}\le \sqrt{n!}\,D_K^\infty(C_K^\infty)^{n-1}$ from Lemma~\ref{lem:k-bounds} and the simplex symmetrization $\|\prod\hat u\|_{L^2(T_n(1))}\le\|\hat u\|^n_{L^2}/\sqrt{n!}$ \cite{HV2}, the $\sqrt{n!}$ factors cancel:
\begin{eqnarray}
|U(t)| &\le& \sum_{n=1}^N\|\check k_n(1,\cdot)\|_{L^2}\cdot\|\textstyle\prod\hat u\|_{L^2} \nonumber\\
&\le& \sum_{n=1}^N D_K^\infty(C_K^\infty\,\|\hat u\|_{L^2})^{n-1}\,\|\hat u\|_{L^2}, \label{eq:U-bound-derivation}
\end{eqnarray}
which is \eqref{eq:thm-U-bound} on the basin, where $\|\hat u\|_{L^2}\le\Pi(t)\le\mu_0\,\Pi(0)<1/C_K^\infty$. This establishes (c) and (d). Part (b) is proved in Section~\ref{sec:proof-Barbalat}.
\end{proof}

%==============================================================
\section{Barbalat Regulation and Completing Proof of Theorem~\ref{thm:stability}}\label{sec:proof-Barbalat}
%==============================================================

This section establishes Theorem~\ref{thm:stability}(b). We first extract $L^1_t\cap L^\infty_t$ regularity for the energy norms (Step 1), then prove uniform continuity of $\|w\|_c^2$ and $\|e\|_c^2$ in $t$ via an explicit bound on their time derivatives (Steps 2--3), and finally apply Barbalat's lemma and cascade the conclusion to the plant state, observer state, and input (Steps 4--5). The computations below are carried out formally; for the smoothed-projection version of the closed loop they are rigorous and the estimates pass to the limit.

\paragraph{Step 1: $L^1_t\cap L^\infty_t$ regularity.}
From \eqref{eq:Udot-clean} and $V_{\rm tot}(t)\ge 0$, integration in time yields
\begin{eqnarray}
\int_0^\infty\|w(\cdot,t)\|_c^2\,dt &\le& \tfrac{8}{c}\,V_{\rm tot}(0)<\infty, \nonumber\\
\int_0^\infty\|e(\cdot,t)\|_c^2\,dt &\le& \tfrac{16}{\lambda c}\,V_{\rm tot}(0)<\infty, \label{eq:we-L2t}
\end{eqnarray}
so $\|w\|_c^2,\|e\|_c^2\in L^1_t$. Combined with $\|w\|_c,\|e\|_c\in L^\infty_t$ (from $V_{\rm tot}(t)\le V_{\rm tot}(0)$ and Lemma~\ref{lem:norm-equiv}), both are in $L^1_t\cap L^\infty_t$.

\paragraph{Step 2: time derivative of $\|w\|_c^2$.}
Compute
\begin{equation}\label{eq:ddt-w-c2}
\tfrac{d}{dt}\|w\|_c^2=2\int_0^1\!e^{cx}w\,w_t\,dx,
\end{equation}
substitute $w_t=w_x+(I-\check K_N^*)R^u_N+c_0\Phi_N(u)(I-\check K_N^*)e-\Omega+R^K_N$ from \eqref{eq:tgt1-clean}, and integrate by parts in $x$ on the $w_x$-piece using $w(1,t)=0$:
\begin{eqnarray}
\tfrac{d}{dt}\|w\|_c^2 &=& -w^2(0,t)-c\|w\|_c^2 \nonumber\\
&& +2\langle e^{cx}w,(I-\check K_N^*)R^u_N\rangle+2\langle e^{cx}w,D\rangle \nonumber\\
&& -2\langle e^{cx}w,\Omega\rangle+2\langle e^{cx}w,R^K_N\rangle. \label{eq:ddt-w-c2-expanded}
\end{eqnarray}

\paragraph{Step 3: uniform $L^\infty_t$ bound on $\frac{d}{dt}\|w\|_c^2$.}
On the forward-invariant basin $\{\Pi\le\bar\Pi\}$ from Theorem~\ref{thm:stability}(a), all state quantities entering the right-hand side of \eqref{eq:ddt-w-c2-expanded} are bounded uniformly in $t$:
\begin{eqnarray}
\|w\|_c &\le& \sqrt{2\,\mu_2}\,\bar\Pi,\quad \|e\|_c\le\sqrt{2\,\mu_2/\lambda}\,\bar\Pi, \nonumber\\
\|\hat u\|_{L^2},\|u\|_{L^2} &\le& \bar\Pi,\quad \Phi_N(u)\le\Phi_\infty(\bar\Pi), \label{eq:basin-bounds}
\end{eqnarray}
where $\Phi_\infty$ is an $N$-uniform majorant for $\Phi_N$. Each cross-term in \eqref{eq:ddt-w-c2-expanded} is bounded as follows.

\emph{(i) The $R^u_N$ term.} By Lemma~\ref{lem:dominance} (the bilinear and HOT bounds on $R^u_N$) and the simplex-tensor structure of $(I-\check K_N^*)R^u_N$,
\begin{eqnarray}
\lefteqn{|\langle e^{cx}w,(I-\check K_N^*)R^u_N\rangle|} \nonumber\\
&\le& \bigl(D_f e^{c/2}+e^{c/2}\Lambda^{u,\ge 2}_N(\bar\Pi)\bigr) \nonumber\\
&& \times\bigl(1+e^{c/2}\Lambda^D_K(\bar\Pi)\bigr)\|w\|_c\|e\|_c\le M_1, \label{eq:ddt-Ru-bd}
\end{eqnarray}
with $M_1=M_1(\bar\Pi,D_f,N,c)$ a constant depending only on the basin radius and design.

\emph{(ii) The damping term $D$.} From \eqref{eq:D-bd},
\begin{eqnarray}
|\langle e^{cx}w,D\rangle| &\le& c_0\Phi_\infty(\bar\Pi)\bigl(1+e^{c/2}\Lambda^D_K(\bar\Pi)\bigr)\|w\|_c\|e\|_c \nonumber\\
&\le& M_2. \label{eq:ddt-D-bd}
\end{eqnarray}

\emph{(iii) The drift $\Omega$.} From \eqref{eq:Omega-x-bd} paired against $|w|e^{cx}$ and the parameter-rate bound \eqref{eq:P4-N},
\begin{eqnarray}
|\langle e^{cx}w,\Omega\rangle| &\le& e^{c/2}\sqrt{\gamma^{\max}e^c}\,\Lambda^\Omega_N(\bar\Pi)\,\|w\|_c\|e\|_c\sqrt{\Phi_\infty(\bar\Pi)} \nonumber\\
&\le& M_3. \label{eq:ddt-Omega-bd}
\end{eqnarray}

\emph{(iv) The truncation residual $R^K_N$.} From \eqref{eq:RK-state},
\begin{equation}\label{eq:ddt-RK-bd}
|\langle e^{cx}w,R^K_N\rangle|\le\|w\|_c\cdot\frac{e^{c/2}L^{R^K}_N\bar\Pi^{N+1}}{1-C_K^\infty\bar\Pi}\le M_4.
\end{equation}

Combining \eqref{eq:basin-bounds}--\eqref{eq:ddt-RK-bd} into \eqref{eq:ddt-w-c2-expanded},
\begin{eqnarray}
\Bigl|\tfrac{d}{dt}\|w(\cdot,t)\|_c^2\Bigr| &\le& c\cdot 2\mu_2\bar\Pi^2+2(M_1+M_2+M_3+M_4) \nonumber\\
&=:& M_w<\infty, \label{eq:ddt-w-Linf}
\end{eqnarray}
uniformly in $t$ on the basin. (We dropped the negative boundary term $-w^2(0,t)$, which only improves the bound.) Hence $\frac{d}{dt}\|w\|_c^2\in L^\infty_t$, so $\|w\|_c^2$ is uniformly continuous on $[0,\infty)$.

The same calculation applied to $\|e\|_c^2$ via the closed-loop $e$-dynamics \eqref{eq:e-dynamics} yields $\frac{d}{dt}\|e\|_c^2\in L^\infty_t$ analogously, with constant $M_e=M_e(\bar\Pi,D_f,N,c,\lambda,\gamma^{\max})<\infty$ (the cross-terms involve $\tilde F_N[u,\tilde f]$, $R_N[u]$, and the damping $c_0 e\Phi_N(u)$, all bounded on the basin by the projection bounds on $\hat f_n$ and the basin bounds on the state norms).

\paragraph{Step 4: Barbalat applied to $\|w\|_c^2$ and $\|e\|_c^2$.}
By Step 1, $\|w\|_c^2\in L^1_t$. By Step 3, $\|w\|_c^2$ is uniformly continuous on $[0,\infty)$. Barbalat's lemma gives
\begin{equation}\label{eq:Barbalat-w}
\|w(\cdot,t)\|_c^2\to 0,\qquad\text{hence}\qquad\|w(\cdot,t)\|_c\to 0,\quad\text{as }t\to\infty.
\end{equation}
The same argument gives $\|e(\cdot,t)\|_c\to 0$.

\paragraph{Step 5: cascade to plant, observer, and input.}
By the inverse transformation $\hat u=w+\check L_N[w]$ \eqref{eq:inverse-transform},
\begin{equation}\label{eq:cascade-uhat}
\|\hat u(\cdot,t)\|_{L^2}\le G_2\,\|w(\cdot,t)\|_{L^2}\le G_2\,\|w(\cdot,t)\|_c\to 0.
\end{equation}
By $u=\hat u+e$,
\begin{equation}\label{eq:cascade-u}
\|u(\cdot,t)\|_{L^2}\le\|\hat u(\cdot,t)\|_{L^2}+\|e(\cdot,t)\|_{L^2}\to 0.
\end{equation}
By the input bound \eqref{eq:thm-U-bound},
\begin{equation}\label{eq:cascade-U}
|U(t)|\le\sum_{n=1}^N D_K^\infty(C_K^\infty\,\|\hat u\|_{L^2})^{n-1}\,\|\hat u\|_{L^2}\to 0,
\end{equation}
since each summand goes to zero as $\|\hat u\|_{L^2}\to 0$, and the sum is finite. This establishes \eqref{eq:thm-regulation}.

%==============================================================
\section{Start of Proof of Theorem~\ref{thm:neural}: Lipschitzness of $\mathcal Q_N$}\label{sec:QN-Lip}
%==============================================================

With Theorem~\ref{thm:stability} now fully proven, we turn to the proof of Theorem~\ref{thm:neural}.

The kernel-generation operator $\mathcal Q_N$ of \eqref{eq:QN-def}, which sends the parameter estimate tuple $\hat f=(\hat f_1,\ldots,\hat f_N)$ to the kernel tuple $\check k=(\check k_1,\ldots,\check k_N)$ via the cascade \eqref{eq:k-IE}, is here shown to be Lipschitz continuous from $\prod_n \bar F_n\subset\prod_n L^\infty(T_n(1))$ to $\prod_n L^2(T_n(1))$.

\begin{lemma}[Lipschitz continuity of $\mathcal Q_N$]\label{lem:Q-Lip}
For all $\hat f,\hat f'\in\prod_{n=1}^N \bar F_n$,
\begin{eqnarray}
\lefteqn{\|\mathcal Q_N(\hat f)_n-\mathcal Q_N(\hat f')_n\|_{L^2(T_n(1))}} \nonumber\\
&\le& \ell_n\,\|\hat f-\hat f'\|_\star,\quad n=1,\ldots,N, \label{eq:Q-Lip}
\end{eqnarray}
where the norm on the parameter tuple is
\begin{equation}\label{eq:star-norm}
\|\hat f-\hat f'\|_\star:=\max_{n=1,\ldots,N}\|\hat f_n-\hat f_n'\|_{L^\infty(T_n(1))},
\end{equation}
and the Lipschitz constants $\ell_n>0$ are defined by the recursion
\begin{eqnarray}
\ell_n &:=& (1+M_n e^{M_n})\Big[\tfrac{1}{\sqrt{n!}} \nonumber\\
&& +\sum_{m=2}^n\tbinom{n+1}{m+1}\big(\tfrac{m!D_f}{\rho_f^{m-1}}\ell_{n-m+1} \nonumber\\
&& \quad+\sqrt{(n-m+1)!}D_K^\infty(C_K^\infty)^{n-m}\big) \nonumber\\
&& +\tbinom{n+1}{2}\sqrt{n!}D_K^\infty(C_K^\infty)^{n-1}\Big],\ n\ge 1, \label{eq:lambda-recursion}
\end{eqnarray}
with $M_n:=\binom{n+1}{2}D_f$ as in Lemma~\ref{lem:Sn-inverse}, $D_K^\infty,C_K^\infty$ as in Lemma~\ref{lem:k-bounds}, and $\binom{n+1}{2}\sqrt{n!}D_K^\infty(C_K^\infty)^{n-1}$ the contribution from the self-coupling perturbation $\check{\mathcal S}_n-\check{\mathcal S}_n'$. For $n=1$ the sum $\sum_{m=2}^n$ is empty, giving the base case
\begin{equation}\label{eq:lambda-base}
\ell_1=(1+D_f e^{D_f})(1+D_K^\infty).
\end{equation}
\end{lemma}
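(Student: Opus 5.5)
We argue by induction on $n$, working throughout with the compact cascade form \eqref{eq:k-IE-compact}. Fix $\hat f,\hat f'\in\prod_n\bar F_n$, and write $\check k_n=\mathcal Q_N(\hat f)_n$, $\check k_n'=\mathcal Q_N(\hat f')_n$. Let $\check{\mathcal S}_n,\check{\mathcal S}_n'$ and $\check\Psi_n,\check\Psi_n'$ be the self-coupling operators and inhomogeneities built from $\hat f$ and $\hat f'$, and set $\delta_n:=\check k_n-\check k_n'$. Subtracting the two instances of \eqref{eq:k-IE-compact} gives the difference equation
\begin{equation*}
(I-\check{\mathcal S}_n)\,\delta_n=(\check\Psi_n-\check\Psi_n')+(\check{\mathcal S}_n-\check{\mathcal S}_n')\,\check k_n'.
\end{equation*}
This is the same device used for \eqref{eq:lin-cascade}, with finite differences in place of time derivatives. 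Both $\hat f_1$ and $\hat f_1'$ satisfy $\|\cdot\|_\infty\le D_f$, so Lemma~\ref{lem:Sn-inverse} applies and gives $\|\delta_n\|_{L^2}\le(1+M_ne^{M_n})\bigl[\|\check\Psi_n-\check\Psi_n'\|_{L^2}+\|(\check{\mathcal S}_n-\check{\mathcal S}_n')\check k_n'\|_{L^2}\bigr]$. This accounts for the prefactor in \eqref{eq:lambda-recursion}. What remains is to bound the bracket by $\ell_n/(1+M_ne^{M_n})\cdot\|\hat f-\hat f'\|_\star$, treating its three groups of terms separately.

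\emph{Self-coupling perturbation.} By bilinearity, $(\check{\mathcal S}_n-\check{\mathcal S}_n')\check k_n'=\int_0^{\xi_n}B^1_n[\check k_n',\hat f_1-\hat f_1'](\xi^\sigma)\,d\sigma$. In characteristic coordinates this is a Volterra integral in $\xi_n$ over an interval of length at most $1$. The $k=1$ case of the Cauchy--Schwarz estimate \eqref{eq:Sn-CS}--\eqref{eq:Sn-k-final} together with \eqref{eq:B1n-L2} then bounds it by $\binom{n+1}{2}\|\hat f_1-\hat f_1'\|_\infty\|\check k_n'\|_{L^2(T_n(1))}$. For $\|\check k_n'\|_{L^2(T_n(1))}$ we use the $x=1$ slice bound of Lemma~\ref{lem:k-bounds}, $\sqrt{n!}D_K^\infty(C_K^\infty)^{n-1}$; this is uniform over the box because $\hat f'\in\prod_n\bar F_n$. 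Strictly, the full-simplex norm is the $x$-integral of the squared slice norms; since the slices are monotone in $x$ and the bound is increasing in $x$, this is dominated by the $x=1$ value. The result is exactly the last term of \eqref{eq:lambda-recursion}.

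\emph{Inhomogeneity.} The quadrature piece $-\int_0^{\xi_n}(\hat f_n-\hat f_n')(\xi^\sigma)\,d\sigma$ is bounded pointwise by $\|\hat f_n-\hat f_n'\|_\infty$. Its $L^2(T_n(1))$ norm is therefore at most $\sqrt{\mathrm{vol}\,T_n(1)}\,\|\hat f-\hat f'\|_\star=\|\hat f-\hat f'\|_\star/\sqrt{n!}$, which is the $1/\sqrt{n!}$ term. For the lower-order coupling terms $m\ge2$, we split by bilinearity:
\begin{equation*}
B^m_n[\check k_{n-m+1},\hat f_m]-B^m_n[\check k'_{n-m+1},\hat f_m']=B^m_n[\delta_{n-m+1},\hat f_m]+B^m_n[\check k'_{n-m+1},\hat f_m-\hat f_m'].
\end{equation*}
We need an $L^2$ bound for $B^m_n$ analogous to \eqref{eq:B1n-L2}, namely $\|B^m_n[\kappa,g]\|_{L^2(T_n)}\le\binom{n+1}{m+1}\|g\|_\infty\|\kappa\|_{L^2(T_{n-m+1})}$. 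The combinatorial factor $\binom{n+1}{m+1}$ counts the terms produced by $D^{n,m}_j$, and each term is estimated by Cauchy--Schwarz in $s_j$ as in \cite[Lemma~A.1]{HV1}. Applying this bound, the characteristic integration again costs at most a factor $1$. Then $\|\hat f_m\|_\infty\le m!D_f/\rho_f^{m-1}$ from the box, the induction hypothesis $\|\delta_{n-m+1}\|\le\ell_{n-m+1}\|\hat f-\hat f'\|_\star$, and Lemma~\ref{lem:k-bounds} for $\|\check k'_{n-m+1}\|$ together give the middle sum in \eqref{eq:lambda-recursion}.

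\emph{Base case and main obstacle.} For $n=1$ the sum over $m$ is empty, $M_1=D_f$, and $\binom{2}{2}=1$, so the recursion reduces to $(1+D_fe^{D_f})(1+D_K^\infty)$, which is \eqref{eq:lambda-base}. The step requiring real care is the $L^2$ bound for the general bilinear operator $B^m_n$ with $m\ge2$. It involves lower-dimensional kernels evaluated at permuted and omitted arguments, so the change of variables must be checked to be measure-controlled, with a Jacobian of $1$ after symmetrization. It must also produce the stated combinatorial factor. I would prove this first as a separate estimate, mirroring the proof of \eqref{eq:B1n-L2}: bound each of the permutation terms of $D^{n,m}_j$ pointwise by a one-dimensional $s_j$-integral, apply Cauchy--Schwarz, and integrate over $T_n(1)$ using that the map to $(\xi\setminus\xi_{j..j+m-1},s_j)\in T_{n-m+1}(1)$ has fibres of volume at most $1$.
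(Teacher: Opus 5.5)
Your proposal is correct and follows the paper's proof essentially step for step. It uses the same difference equation $(I-\check{\mathcal S}_n)(\check k_n-\check k_n')=(\check{\mathcal S}_n-\check{\mathcal S}_n')\check k_n'+(\check\Psi_n-\check\Psi_n')$ inverted via Lemma~\ref{lem:Sn-inverse}, the same bilinear splits of the self-coupling and $B^m_n$ terms, the factor-one bound for the characteristic integration, and strong induction on $n$ with the same base case. Your two additions make explicit steps the paper only asserts: passing from the $x$-slice bounds of Lemma~\ref{lem:k-bounds} to the full-domain $L^2$ norm by integrating in $x$, and proving the $L^2$ bound for $B^m_n$, $m\ge 2$, as a separate estimate instead of citing it as an extension of \cite[Lemma~A.1]{HV1}.
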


\begin{proof}
Cascade induction on $n$. The kernels $\check k_n$ (for $\hat f$) and $\check k_n'$ (for $\hat f'$) satisfy
\begin{equation}\label{eq:k-IE-pair}
(I-\check{\mathcal S}_n)\,\check k_n=\check\Psi_n,\qquad (I-\check{\mathcal S}_n')\,\check k_n'=\check\Psi_n',
\end{equation}
where $\check{\mathcal S}_n=\check{\mathcal S}_n[\hat f_1]$ depends on $\hat f_1$ alone (eq.~\eqref{eq:Sn-def}) and $\check\Psi_n=\check\Psi_n[\hat f_n,\check k_{n-m+1},\hat f_m\,;\,m\ge 2]$ depends on $\hat f_n$ and the lower-order kernels (eq.~\eqref{eq:Psin-def}). Subtracting:
\begin{equation}\label{eq:k-diff-IE}
(I-\check{\mathcal S}_n)(\check k_n-\check k_n')=(\check{\mathcal S}_n-\check{\mathcal S}_n')\check k_n'+(\check\Psi_n-\check\Psi_n').
\end{equation}
By Lemma~\ref{lem:Sn-inverse}, $\|(I-\check{\mathcal S}_n)^{-1}\|_{L^2\to L^2}\le 1+M_n e^{M_n}$. Hence
\begin{eqnarray}
\|\check k_n-\check k_n'\|_{L^2(T_n(1))} &\le& (1+M_n e^{M_n}) \nonumber\\
&& \times\Big[\|(\check{\mathcal S}_n-\check{\mathcal S}_n')\check k_n'\|_{L^2} \nonumber\\
&& \quad+\|\check\Psi_n-\check\Psi_n'\|_{L^2}\Big]. \label{eq:k-diff-L2}
\end{eqnarray}
\emph{The self-coupling perturbation $(\check{\mathcal S}_n-\check{\mathcal S}_n')\check k_n'$.} By bilinearity of $B^1_n$ in its second argument, $(\check{\mathcal S}_n-\check{\mathcal S}_n')[\kappa]=\int_0^{\xi_n}B^1_n[\kappa,\hat f_1-\hat f_1'](\xi^\sigma)d\sigma$. By the same characteristic-coordinate $L^2$-bound as in Lemma~\ref{lem:Sn-inverse} (applied with $k=1$, no factorial gain): for any $\kappa\in L^2(T_n(1))$ and $g\in L^\infty(T_1(1))$,
\begin{eqnarray}
\lefteqn{\Bigl\|\int_0^{\xi_n}B^1_n[\kappa,g](\xi^\sigma)d\sigma\Bigr\|_{L^2(T_n(1))}} \nonumber\\
&\le& \binom{n+1}{2}\|g\|_\infty\,\|\kappa\|_{L^2(T_n(1))}. \label{eq:S-diff-bound}
\end{eqnarray}
Hence $\|(\check{\mathcal S}_n-\check{\mathcal S}_n')\check k_n'\|_{L^2}\le \binom{n+1}{2}\|\hat f_1-\hat f_1'\|_\infty\,\|\check k_n'\|_{L^2}$. By Lemma~\ref{lem:k-bounds} at $x=1$, $\|\check k_n'\|_{L^2(T_n(1))}\le\sqrt{n!}\,D_K^\infty(C_K^\infty)^{n-1}$, so
\begin{equation}\label{eq:S-diff-final}
\|(\check{\mathcal S}_n-\check{\mathcal S}_n')\check k_n'\|_{L^2}\le \binom{n+1}{2}\sqrt{n!}\,D_K^\infty(C_K^\infty)^{n-1}\,\|\hat f-\hat f'\|_\star.
\end{equation}
\emph{The inhomogeneity perturbation $\check\Psi_n-\check\Psi_n'$.} From \eqref{eq:Psin-def},
\begin{eqnarray}
\check\Psi_n-\check\Psi_n' &=& -\int_0^{\xi_n}\!(\hat f_n-\hat f_n')(\xi^\sigma)d\sigma \nonumber\\
&& +\sum_{m=2}^n\!\int_0^{\xi_n}\!\!\big[B^m_n[\check k_{n-m+1},\hat f_m] \nonumber\\
&& -B^m_n[\check k_{n-m+1}',\hat f_m']\big](\xi^\sigma)d\sigma. \label{eq:Psi-diff}
\end{eqnarray}
The first piece, in characteristic coordinates $(\eta,\xi_n)$ where the displacement leaves $\eta$ unchanged and replaces $\xi_n$ by $\sigma$, has $L^2$-norm $\le\|\hat f_n-\hat f_n'\|_\infty\cdot\sqrt{\mathrm{vol}(T_n(1))}=\|\hat f_n-\hat f_n'\|_\infty/\sqrt{n!}\le\|\hat f-\hat f'\|_\star/\sqrt{n!}$.
For the second piece, by bilinearity of $B^m_n$,
\begin{eqnarray}
\lefteqn{B^m_n[\check k_{n-m+1},\hat f_m]-B^m_n[\check k_{n-m+1}',\hat f_m']=} \nonumber\\
&& B^m_n[\check k_{n-m+1}-\check k_{n-m+1}',\hat f_m] \nonumber\\
&& +B^m_n[\check k_{n-m+1}',\hat f_m-\hat f_m']. \label{eq:Bm-bilin-split}
\end{eqnarray}
By \cite[Lemma~A.1]{HV1} extended to $L^2$ ($B^m_n$ produces $\binom{n+1}{m+1}$ pointwise terms each bounded in $L^2$ by Cauchy--Schwarz),
\begin{equation}\label{eq:Bm-L2}
\|B^m_n[\kappa,g]\|_{L^2(T_n(1))}\le \binom{n+1}{m+1}\|g\|_\infty\,\|\kappa\|_{L^2(T_{n-m+1}(1))}.
\end{equation}
Applying \eqref{eq:Bm-L2} to each summand of \eqref{eq:Bm-bilin-split}, with $\|\hat f_m\|_\infty\le m!D_f/\rho_f^{m-1}$ and $\|\check k_{n-m+1}'\|_{L^2}\le\sqrt{(n-m+1)!}D_K^\infty(C_K^\infty)^{n-m}$, and using the inductive hypothesis $\|\check k_{n-m+1}-\check k_{n-m+1}'\|_{L^2}\le\ell_{n-m+1}\|\hat f-\hat f'\|_\star$,
\begin{eqnarray}
\lefteqn{\|B^m_n[\check k_{n-m+1},\hat f_m]-B^m_n[\check k_{n-m+1}',\hat f_m']\|_{L^2}} \nonumber\\
&\le& \binom{n+1}{m+1}\Big[\frac{m!D_f}{\rho_f^{m-1}}\ell_{n-m+1} \nonumber\\
&& +\sqrt{(n-m+1)!}D_K^\infty(C_K^\infty)^{n-m}\Big]\|\hat f-\hat f'\|_\star. \label{eq:Bm-diff-final}
\end{eqnarray}
The outer $\sigma$-integration $\int_0^{\xi_n}\cdots d\sigma$ in $L^2(T_n(1))$, in characteristic coordinates, contributes at most a factor 1 (same argument as in the proof of Lemma~\ref{lem:Sn-inverse}, $k=1$ case). Hence
\begin{eqnarray}
\lefteqn{\|\check\Psi_n-\check\Psi_n'\|_{L^2(T_n(1))}\le\Big[\tfrac{1}{\sqrt{n!}}+\sum_{m=2}^n\tbinom{n+1}{m+1}} \nonumber\\
&& \times\big(\tfrac{m!D_f}{\rho_f^{m-1}}\ell_{n-m+1}+\sqrt{(n-m+1)!}D_K^\infty(C_K^\infty)^{n-m}\big) \nonumber\\
&& \Big]\|\hat f-\hat f'\|_\star. \label{eq:Psi-diff-bound}
\end{eqnarray}
\emph{Combining.} Substituting \eqref{eq:S-diff-final} and \eqref{eq:Psi-diff-bound} into \eqref{eq:k-diff-L2} gives \eqref{eq:Q-Lip} with $\ell_n$ as in \eqref{eq:lambda-recursion}, valid for all $n\ge 1$. For $n=1$ the inductive sum $\sum_{m=2}^n$ is empty, but the self-coupling term $\binom{2}{2}\sqrt{1!}D_K^\infty(C_K^\infty)^0=D_K^\infty$ from $(\check{\mathcal S}_1-\check{\mathcal S}_1')\check k_1'$ is present and combined with the linear-inhomogeneity term $1/\sqrt{1!}=1$, giving the base case \eqref{eq:lambda-base}.
\end{proof}

%==============================================================
\section{Neural-Operator Approximation of $\mathcal Q_N$}\label{sec:NO-approx}
%==============================================================

This section establishes the existence of a neural-operator surrogate $\widehat{\mathcal Q}_N$ for the kernel-generation operator $\mathcal Q_N$ of \eqref{eq:QN-def}, with prescribed accuracy on the Lipschitz-restricted projection box $\prod_n \bar F_n^{\rm Lip}$ from \eqref{eq:Bn-Lip}. The argument follows \cite[\S5]{HV3}: Arzel\`a--Ascoli compactness from the Lipschitz seminorm constraint on the training inputs, combined with continuity of $\mathcal Q_N$ from Lemma~\ref{lem:Q-Lip}, plus the neural-operator universal approximation theorem of \cite{Lu2021Universal}.

\begin{lemma}[Compactness of the training class]\label{lem:compactness}
The set $\prod_{n=1}^N \bar F_n^{\rm Lip}$ is compact in $\prod_{n=1}^N L^\infty(T_n(1))$, and hence (via the continuous embedding $L^\infty(T_n(1))\hookrightarrow L^2(T_n(1))$) also compact in $\prod_n L^2(T_n(1))$.
\end{lemma}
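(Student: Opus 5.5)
The plan is to prove compactness one factor at a time and then pass to the finite product. For each $n$, I would show that $\bar F_n^{\rm Lip}$ is compact in $C^0(T_n(1))$ under the sup norm, and then transfer this to $L^\infty(T_n(1))$. Compactness of a finite product of compact sets (Tychonoff for finitely many factors, or simply sequential compactness coordinate by coordinate) then gives the statement in $\prod_n L^\infty(T_n(1))$. The $L^2$ statement follows because the identity map $L^\infty(T_n(1))\to L^2(T_n(1))$ is continuous, with $\|g\|_{L^2}\le \mathrm{vol}(T_n(1))^{1/2}\|g\|_\infty=\|g\|_\infty/\sqrt{n!}$, and continuous images of compact sets are compact.

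For a single factor, I would first note that $T_n(1)$ is a compact subset of $\mathbb{R}^n$: it is closed and bounded, being a finite intersection of closed half-spaces inside $[0,1]^{n+1}$. Here the "$x$" variable is counted along with $\xi$, so the domain is really the closed set $\{0\le\xi_n\le\cdots\le\xi_1\le x\le 1\}$, which is still compact. Every element of $\bar F_n^{\rm Lip}$ is Lipschitz, hence continuous, so $\bar F_n^{\rm Lip}$ sits inside $C^0(T_n(1))$. There the $L^\infty$ norm coincides with the sup norm, because a continuous function on a set that is the closure of its interior has essential supremum equal to its supremum. The family is uniformly bounded by $\bar f_n$ and equicontinuous, since it has the common Lipschitz constant $L_f$. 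Arzel\`a--Ascoli therefore gives relative compactness in $C^0(T_n(1))$.

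It remains to show that $\bar F_n^{\rm Lip}$ is closed under uniform convergence. If $g_k\to g$ uniformly with $\|g_k\|_\infty\le\bar f_n$ and $|g_k(p)-g_k(q)|\le L_f|p-q|$, then passing to the pointwise limit preserves both inequalities, so $g\in\bar F_n^{\rm Lip}$. Hence $\bar F_n^{\rm Lip}$ is compact in $C^0(T_n(1))$. Since $C^0(T_n(1))$ embeds isometrically into $L^\infty(T_n(1))$, the set is also compact in $L^\infty(T_n(1))$.

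The one step that needs real care is this identification of sup norm with $L^\infty$ norm. I would handle it by noting that $T_n(1)$ has nonempty interior and is convex, so every point is a limit of interior points, each of whose neighborhoods has positive measure. Everything else is routine.
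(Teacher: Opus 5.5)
Your proposal is correct and follows the paper's proof: apply Arzel\`a--Ascoli to each factor in $C(T_n(1))$, using the uniform bound $\bar f_n$ and equicontinuity from the common Lipschitz constant $L_f$; transfer compactness to $L^\infty(T_n(1))$; pass to the finite product; then use the continuous embedding into $\prod_n L^2(T_n(1))$. You spell out details the paper leaves implicit, namely closedness of $\bar F_n^{\rm Lip}$ under uniform limits and the agreement of the sup norm with the $L^\infty$ norm for continuous functions on the simplex, but the route is the same.
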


\begin{proof}
Each $\bar F_n^{\rm Lip}$ is a closed, uniformly bounded, equicontinuous subset of $C(T_n(1))$ (uniform bounds from $\|g\|_\infty\le \bar f_n$, equicontinuity from $\mathrm{Lip}(g)\le L_f$). By Arzel\`a--Ascoli, $\bar F_n^{\rm Lip}$ is compact in $C(T_n(1))$, hence in $L^\infty(T_n(1))$ (uniform convergence implies $L^\infty$-convergence on bounded simplices). The finite product is compact in the product topology, and the embedding into $\prod_n L^2(T_n(1))$ on the bounded simplex domain is continuous.
\end{proof}

\begin{theorem}[Neural-operator approximation of $\mathcal Q_N$]\label{thm:NO-approx}
For every $\varepsilon>0$, there exists a neural operator
\begin{equation}\label{eq:Qhat-def}
\widehat{\mathcal Q}_N:\;\prod_{n=1}^N \bar F_n^{\rm Lip}\;\longrightarrow\;\prod_{n=1}^N L^2(T_n(1))
\end{equation}
such that, for $n=1,\ldots,N$ and all $\hat f\in\prod_{n=1}^N \bar F_n^{\rm Lip}$,
\begin{equation}\label{eq:Qhat-eps}
\|\widehat{\mathcal Q}_N(\hat f)_n-\mathcal Q_N(\hat f)_n\|_{L^2(T_n(1))}\le \varepsilon.
\end{equation}
\end{theorem}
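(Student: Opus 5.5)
The plan is to invoke the universal approximation theorem for neural operators \cite{Lu2021Universal} in the form used in \cite[\S5]{HV3}. Any continuous operator from a compact subset of a Banach space into another Banach space can be approximated uniformly on that compact set, to any prescribed accuracy, by a neural operator (DeepONet-type). The two hypotheses to verify are compactness of the input set and continuity of the target operator.

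Compactness is Lemma~\ref{lem:compactness}: $K:=\prod_{n=1}^N\bar F_n^{\rm Lip}$ is compact in $\prod_n L^\infty(T_n(1))$, and equivalently in $\prod_n C(T_n(1))$ with the sup norm. Continuity of $\mathcal Q_N$ on $K$ follows from Lemma~\ref{lem:Q-Lip}. Since $\bar F_n^{\rm Lip}\subset\bar F_n$, the estimate \eqref{eq:Q-Lip} applies to all $\hat f,\hat f'\in K$. It shows that $\mathcal Q_N$ is Lipschitz from $(K,\|\cdot\|_\star)$ into $\prod_n L^2(T_n(1))$, with constant $\max_n\ell_n$, hence uniformly continuous. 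The norm $\|\cdot\|_\star$ is exactly the max-of-sup-norms product norm in which $K$ is compact, so the topologies match.

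I would then apply the approximation theorem with target space $Y=\prod_n L^2(T_n(1))$, equipped with the max norm over components. Given $\varepsilon>0$, this yields a neural operator $\widehat{\mathcal Q}_N$ with $\sup_{\hat f\in K}\max_n\|\widehat{\mathcal Q}_N(\hat f)_n-\mathcal Q_N(\hat f)_n\|_{L^2}\le\varepsilon$, which is \eqref{eq:Qhat-eps}.

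If the cited theorem is stated only for a scalar output space, I would approximate each component map $\hat f\mapsto\mathcal Q_N(\hat f)_n$ separately, to accuracy $\varepsilon$. Each component is continuous by \eqref{eq:Q-Lip} for that $n$. The $N$ networks can then be stacked in parallel, which is again a neural operator.

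The main obstacle is matching the function-space setting of \cite{Lu2021Universal}. That theorem is typically stated for inputs in $C(\cdot)$ on a compact domain and outputs in $C(\cdot)$, whereas here the outputs lie in $L^2(T_n(1))$. I would handle this in three steps.

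\emph{Step 1: encoding the input.} The input is encoded by sampling at finitely many sensor points. This is justified because $K$ consists of continuous functions and is compact in $C$.

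\emph{Step 2: decoding the output.} Continuous functions are dense in $L^2(T_n(1))$. By compactness, the image $\mathcal Q_N(K)$ is compact in $L^2$. A compact subset of $L^2$ can be approximated uniformly, to within $\varepsilon/2$, by its projection onto a finite-dimensional span of continuous basis functions. This reduces the problem to approximating finitely many continuous coefficient functionals on $K$, each within $\varepsilon/2$, by finite-dimensional neural networks via the classical universal approximation theorem.

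\emph{Step 3: combining.} The triangle inequality combines the two $\varepsilon/2$ errors into the bound $\varepsilon$.
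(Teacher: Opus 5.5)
Your proposal is correct and follows the paper's proof. Both combine compactness of $\prod_n\bar F_n^{\rm Lip}$ in $\prod_n C(T_n(1))$ (Lemma~\ref{lem:compactness}), continuity of $\mathcal Q_N$ from the Lipschitz estimate of Lemma~\ref{lem:Q-Lip}, and the universal approximation theorem of \cite{Lu2021Universal}, reading off the componentwise bound from a product-norm bound (your max norm versus the paper's $\ell^2$ norm is immaterial), and your encoder/decoder sketch just unpacks what the paper delegates to the citation. The only slip is in Step~2: with an orthonormal basis of $m$ functions, each coefficient functional must be approximated to within $\varepsilon/(2\sqrt m)$, not $\varepsilon/2$, for the decoded error to stay below $\varepsilon/2$.
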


\begin{proof}
By Lemma~\ref{lem:Q-Lip}, the kernel-generation operator $\mathcal Q_N:\prod_n \bar F_n^{\rm Lip}\to\prod_n L^2(T_n(1))$ is Lipschitz, hence continuous, where the product spaces are endowed with the $\ell^2$-type product norms $\|(g_1,\ldots,g_N)\|_{\prod L^\infty}^2:=\sum_n\|g_n\|_{L^\infty(T_n(1))}^2$ and $\|(h_1,\ldots,h_N)\|_{\prod L^2}^2:=\sum_n\|h_n\|_{L^2(T_n(1))}^2$. (Each component norm is bounded above by the product $\ell^2$-norm, so a bound on the latter immediately yields a componentwise bound.) The codomain $\prod_n L^2(T_n(1))$ is a separable Banach space. Although $\prod_n L^\infty(T_n(1))$ is not separable in general, the training class $\prod_n \bar F_n^{\rm Lip}$ embeds into the separable Banach space $\prod_n C(T_n(1))$ via the natural inclusion $\bar F_n^{\rm Lip}\subset C(T_n(1))$ (uniform-continuity is automatic for Lipschitz functions on a bounded simplex). By Lemma~\ref{lem:compactness}, $\prod_n \bar F_n^{\rm Lip}$ is compact in $\prod_n C(T_n(1))$ (uniform-norm topology coincides with $L^\infty$-norm topology on $C$). The neural-operator universal approximation theorem of \cite{Lu2021Universal}, in the form for continuous maps between separable Banach spaces restricted to a compact domain subset (an analogous statement is given in \cite{HV3} for the closely related operator $\mathcal U_N$), yields, for any $\varepsilon>0$, a neural operator $\widehat{\mathcal Q}_N$ approximating $\mathcal Q_N$ uniformly on $\prod_n \bar F_n^{\rm Lip}$ in the $\prod_n L^2$-norm (the $\ell^2$ product norm just specified) with accuracy $\varepsilon$. Since each component $L^2$-norm is bounded above by the product $\ell^2$-norm, \eqref{eq:Qhat-eps} follows componentwise.
\end{proof}

%==============================================================
\section{Completing the Proof of Theorem~\ref{thm:neural}}\label{sec:neural-closed-loop}
%==============================================================

In Section~\ref{sec:NO-approx} (Theorem~\ref{thm:NO-approx}), we have established that for every $\varepsilon>0$ there exists a neural-operator surrogate $\widehat{\mathcal Q}_N$ of the exact kernel-generation operator $\mathcal Q_N$, with $L^2(T_n(1))$-accuracy $\varepsilon$ on the Lipschitz-restricted projection box $\prod_n \bar F_n^{\rm Lip}$. In this section we fix such a $\widehat{\mathcal Q}_N$ for an $\varepsilon$ to be chosen sufficiently small below, replace the exact kernels $\check k_n=\mathcal Q_N(\hat f)_n$ in the implemented controller \eqref{eq:U-def} by the neural-operator surrogate kernels $\hat k_n=\widehat{\mathcal Q}_N(\hat f)_n$, and establish that conclusions (a)--(b) of Theorem~\ref{thm:stability} continue to hold on a (possibly smaller) basin $\{\Pi\le\bar\Pi'\}$.

The analysis is a passive-identifier design: the backstepping transformation in the proof uses the \emph{exact} cascade kernel $\check k_n$ as an analytical object, while the implemented controller uses the learned $\hat k_n$. The neural-approximation error then enters the closed-loop analysis through a perturbed boundary condition $w(1,t)=\Gamma(t)$ rather than through the interior equation, and no approximation of the time-derivative $\partial_t\hat k_n$ of the learned kernel is required: the time-derivative drift $\Omega$ in the target system involves only $\partial_t\check k_n$, controlled exactly as in Theorem~\ref{thm:stability} via Lemma~\ref{lem:Sn-inverse} and the update law \eqref{eq:upd1-N}. The combination of the passive identifier with neural-operator implementation is the strategy of \cite[\S7]{LBSK} for the linear-recirculation benchmark; the extension here is to the Volterra-nonlinear cascade with $f_1\not\equiv 0$.

\begin{proof}[Proof of Theorem~\ref{thm:neural}]
\medskip\noindent\textbf{Step 1: Neural surrogate controller and backstepping transformation.}
The implemented (neural surrogate) controller is
\begin{eqnarray}
U(t) &:=& \hat K_N[\hat u](1,t) \nonumber\\
&:=& \sum_{n=1}^N\int_{T_n(1)}\hat k_n(1,\xi,t)\prod_{i=1}^n\hat u(\xi_i,t)\,d\xi, \nonumber\\
\hat k_n(t) &:=& \widehat{\mathcal Q}_N\bigl(\hat f(t)\bigr)_n. \label{eq:Uhat-def}
\end{eqnarray}
In contrast to the exact controller \eqref{eq:U-def}, no Volterra cascade is solved online: $\hat k_n(t)$ is obtained by a single forward evaluation of the offline-trained neural operator $\widehat{\mathcal Q}_N$ at the current parameter estimate $\hat f(t)$. The backstepping transformation, however, is defined with the \emph{exact} kernels $\check k_n=\mathcal Q_N(\hat f)_n$ (the solution of the cascade \eqref{eq:k-IE} at the current $\hat f(t)$):
\begin{eqnarray}
w(x,t) &:=& \hat u(x,t)-\check K_N[\hat u](x,t), \nonumber\\
\check K_N[\hat u](x,t) &:=& \sum_{n=1}^N\int_{T_n(x)}\check k_n(x,\xi,t)\prod_{i=1}^n\hat u(\xi_i,t)\,d\xi. \label{eq:w-def-passive}
\end{eqnarray}
The exact $\check k_n$ enters the proof only as an analytical comparison object through the kernel-approximation error
\begin{eqnarray}
\Delta_n(x,\xi,t) &:=& \check k_n(x,\xi,t)-\hat k_n(x,\xi,t), \nonumber\\
\|\Delta_n(\cdot,\cdot,t)\|_{L^2(T_n(1))} &\le& \varepsilon, \label{eq:Delta-def}
\end{eqnarray}
the latter inequality holding for all $t\ge 0$ under the regularity hypothesis $\hat f(\cdot,\cdot,t)\in\prod_n \bar F_n^{\rm Lip}$ from \eqref{eq:Lip-traj-hyp}.

\medskip\noindent\textbf{Step 2: Target equation --- interior dynamics unchanged, boundary perturbed.}
Because the transformation \eqref{eq:w-def-passive} uses the exact $\check k_n$ --- the very kernels that satisfy the matching identity \eqref{eq:matching-PDE} --- the target equation derivation in Section~\ref{sec:proof-target} carries over verbatim. The target equation is
\begin{eqnarray}
w_t &=& w_x+(I-\check K_N^*)R^u_N \nonumber\\
&& +\,c_0\Phi_N(u)\,(I-\check K_N^*)e-\Omega+R^K_N, \label{eq:tgt1-neural}\\
w(1,t) &=& \Gamma(t), \label{eq:tgt2-neural}
\end{eqnarray}
identical in the interior to \eqref{eq:tgt1-clean}, with the same four perturbations $R^u_N,D,\Omega,R^K_N$ defined in Section~\ref{sec:proof-target}, and with $\Omega=\sum_n\int_{T_n(x)}\partial_t\check k_n\prod_i\hat u\,d\xi$ involving the \emph{exact} kernel's time-derivative (controllable through the cascade differentiation as in the exact-kernel proof, with no neural-operator hypothesis required). The only difference from the exact-kernel case is the boundary condition: $w(1,t)=0$ no longer holds. Instead,
\begin{eqnarray}
\Gamma(t) &:=& w(1,t)=\hat u(1,t)-\check K_N[\hat u](1,t) \nonumber\\
&=& U(t)-\check K_N[\hat u](1,t) \nonumber\\
&=& -\sum_{n=1}^N\int_{T_n(1)}\Delta_n(1,\xi,t)\prod_{i=1}^n\hat u(\xi_i,t)\,d\xi, \label{eq:Gamma-def}
\end{eqnarray}
where the last equality uses $U(t)=\hat K_N[\hat u](1,t)$ from \eqref{eq:Uhat-def} together with $\check K_N-\hat K_N=\sum_n\int\Delta_n\prod\hat u$. The boundary residual $\Gamma(t)$ is the only manifestation of the neural-approximation error in the closed loop.

\medskip\noindent\textbf{Step 3: Boundary residual bound.}
By Cauchy--Schwarz on $T_n(1)$ with the simplex bridge of \cite{HV2} $\|\prod_i\hat u(\xi_i)\|_{L^2(T_n(1))}=\|\hat u\|^n_{L^2(0,1)}/\sqrt{n!}$,
\begin{eqnarray}
|\Gamma(t)| &\le& \sum_{n=1}^N\|\Delta_n(1,\cdot,t)\|_{L^2(T_n(1))}\cdot\frac{\|\hat u(\cdot,t)\|^n_{L^2}}{\sqrt{n!}} \nonumber\\
&\le& \varepsilon\sum_{n=1}^N\frac{\|\hat u(\cdot,t)\|^n_{L^2}}{\sqrt{n!}}=:\varepsilon\,\Lambda^\Gamma_N(\|\hat u\|_{L^2}), \label{eq:Gamma-bound}
\end{eqnarray}
where $\Lambda^\Gamma_N(r):=\sum_{n=1}^N r^n/\sqrt{n!}$. Because the series starts at $n=1$,
\begin{equation}\label{eq:Gamma-vanishing}
\Lambda^\Gamma_N(0)=0,\qquad \Lambda^\Gamma_N(r)=r+O(r^2)\quad\text{as }r\to 0,
\end{equation}
so the boundary residual is state-vanishing---it injects no constant term, which is what carries regulation through to the neural closed loop. Squaring \eqref{eq:Gamma-bound}, using the inverse transformation $\|\hat u\|_{L^2}\le G_2\|w\|_c$ on the basin, and the leading-order bound $\Lambda^\Gamma_N(r)^2\le H^\Gamma_N(\bar u)\,r^2$ with the polynomial
\begin{equation}\label{eq:HGamma-def}
H^\Gamma_N(\bar u):=\Bigl(\sum_{n=1}^N \bar u^{\,n-1}/\sqrt{n!}\Bigr)^2
\end{equation}
bounded on the basin,
\begin{equation}\label{eq:Gamma-sq-bound}
\Gamma^2(t)\le \varepsilon^2 H^\Gamma_N(\bar u)\|\hat u\|^2_{L^2}\le \varepsilon^2 G_2^2 H^\Gamma_N(\bar u)\|w(\cdot,t)\|_c^2.
\end{equation}

\medskip\noindent\textbf{Step 4: Modified Lyapunov closure.}
The Lyapunov-target component $V_{\rm targ}=\tfrac12\|w\|_c^2$ now satisfies, via integration by parts in \eqref{eq:tgt1-neural} taking into account $w(1,t)=\Gamma(t)\ne 0$,
\begin{eqnarray}
\dot V_{\rm targ} &=& \tfrac12 e^c\Gamma^2(t)-\tfrac12 w^2(0,t)-\tfrac{c}{2}\|w\|_c^2 \nonumber\\
&& +(\text{interior cross-terms from $R^u_N,D,\Omega,R^K_N$}), \label{eq:Vtarg-dot-passive}
\end{eqnarray}
where the interior cross-terms are exactly the ones bounded in Section~\ref{sec:proof-dominance}. The new term $\tfrac12 e^c\Gamma^2(t)$ is absorbed into the dissipation via \eqref{eq:Gamma-sq-bound}:
\begin{equation}\label{eq:Gamma-absorption}
\tfrac12 e^c\Gamma^2(t)\le \Theta_w^\Gamma\|w(\cdot,t)\|_c^2,\qquad \Theta_w^\Gamma:=\tfrac12 e^c G_2^2 H^\Gamma_N(\bar u)\varepsilon^2.
\end{equation}
The Lyapunov inequality \eqref{eq:Udot-final} thus becomes
\begin{eqnarray}
\dot V_{\rm tot} &\le& -\tfrac12 w^2(0,t)-\tfrac{\lambda}{2}e^2(0,t) \nonumber\\
&& -\bigl[\tfrac{c}{4}-\Theta_w-\Theta_w^\Gamma\bigr]\|w\|_c^2-\bigl[\tfrac{\lambda c}{8}-\Theta_e\bigr]\|e\|_c^2 \nonumber\\
&& -\bigl[\tfrac{\lambda c_0}{2}-\tfrac{\lambda c}{8}\bigr]\|e\|_c^2\Phi_N(u), \label{eq:Udot-final-neural}
\end{eqnarray}
with the same $\Theta_w,\Theta_e$ as in Theorem~\ref{thm:stability} and the additional $\Theta_w^\Gamma=O(\varepsilon^2)$ from the boundary residual.

\medskip\noindent\textbf{Step 5: Stability and asymptotic regulation.}
The Lyapunov inequality \eqref{eq:Udot-final-neural} has the same structure as the exact-kernel inequality \eqref{eq:Udot-final}: dissipation in $\|w\|_c^2,\|e\|_c^2$ with state-dependent multipliers $\Theta_w+\Theta_w^\Gamma$ and $\Theta_e$, and no constant residual. The smallness conditions $\Theta_w+\Theta_w^\Gamma<c/8$, $\Theta_e<\lambda c/16$ define a basin radius $\bar\Pi'$ via the analysis of Section~\ref{sec:proof-Udot}, with $\bar\Pi'\le\bar\Pi$ and $\bar\Pi'\to\bar\Pi$ as $\varepsilon\to 0$. Lyapunov stability (Theorem~\ref{thm:neural}(a)) follows from \eqref{eq:Udot-final-neural} via Lemma~\ref{lem:norm-equiv}. Asymptotic regulation (Theorem~\ref{thm:neural}(b)) follows by the Barbalat argument of Section~\ref{sec:proof-Barbalat}: $\|w(\cdot,t)\|_c,\|e(\cdot,t)\|_c\to 0$, and the cascade to $\|u\|_{L^2}\to 0$, $\|\hat u\|_{L^2}\to 0$ proceeds as in Section~\ref{sec:proof-Barbalat}.

\medskip\noindent\textbf{Step 6: Kernel and input bounds.}
The kernel and input bounds asserted in Theorem~\ref{thm:neural}(c) are, for $n=1,\ldots,N$ and $t\ge 0$,
\begin{eqnarray}
\|\hat k_n(\cdot,\cdot,t)\|_{L^2(T_n(1))} &\le& \sqrt{n!}\,D_K^\infty(C_K^\infty)^{n-1}+\varepsilon, \label{eq:khat-L2-bound}\\
|U(t)| &\le& \sum_{n=1}^N\bigl(\sqrt{n!}\,D_K^\infty(C_K^\infty)^{n-1}+\varepsilon\bigr) \nonumber\\
&& \times\frac{\|\hat u(\cdot,t)\|^n_{L^2}}{\sqrt{n!}}<\infty. \label{eq:Uhat-bound}
\end{eqnarray}
The bound \eqref{eq:khat-L2-bound} combines the exact-kernel $L^2$-bound $\|\check k_n\|_{L^2(T_n(1))}\le\sqrt{n!}D_K^\infty(C_K^\infty)^{n-1}$ from Lemma~\ref{lem:k-bounds} (at $x=1$) with the approximation accuracy $\|\hat k_n-\check k_n\|_{L^2(T_n(1))}\le\varepsilon$ from \eqref{eq:Qhat-eps}, via the triangle inequality. The bound \eqref{eq:Uhat-bound} comes from Cauchy--Schwarz on $T_n(1)$ in \eqref{eq:Uhat-def} with the simplex bridge of \cite{HV2} and \eqref{eq:khat-L2-bound}. Asymptotic regulation $|U(t)|\to 0$ follows from \eqref{eq:Uhat-bound} and $\|\hat u(\cdot,t)\|_{L^2}\to 0$.

At $\varepsilon=0$ (i.e., when $\widehat{\mathcal Q}_N$ is replaced by the exact $\mathcal Q_N$), $\Delta_n=0$, $\Gamma=0$, $\Theta_w^\Gamma=0$, $\bar\Pi'=\bar\Pi$, and Theorem~\ref{thm:stability} is recovered exactly.
\end{proof}

%==============================================================
\section{Numerical Illustration}\label{sec:sims}
%==============================================================

\begin{figure*}[t]
\centering
\includegraphics[width=\textwidth]{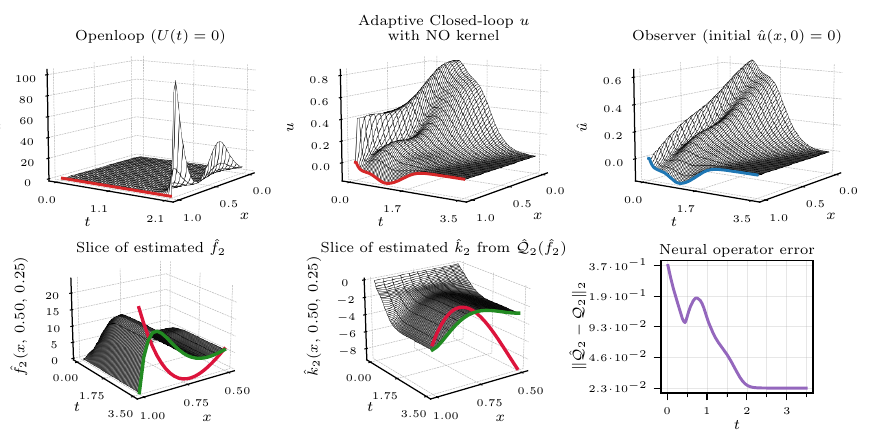}
\caption{Adaptive closed loop at $N=2$ with the neural-operator kernel. \emph{Top:} the open loop ($U\equiv0$) blows up in finite time (left); under the surrogate-generated boundary input the plant state $u$ is regulated to zero (center); the observer $\hat u$, started from $\hat u(\cdot,0)\equiv0$, converges to it (right). \emph{Bottom:} the identifier's coefficient estimate $\hat f_2$ (left) and the kernel $\hat k_2=\widehat{\mathcal Q}_2(\hat f_2)$ the operator returns from it (center), sliced through the simplex at $(\xi_1,\xi_2)=(0.5,0.25)$; and the in-loop operator error $\|\widehat{\mathcal Q}_2-\mathcal Q_2\|_2$ (right), which settles near $2\times10^{-2}$ as the estimate stabilizes. The estimates settle early and then freeze, well before the states have finished decaying, and do not reach the true profiles.}
\label{fig:sims}
\vspace{-1em}
\end{figure*}

We close with an $N=2$ adaptive closed loop in which the controller kernel is supplied, at every step, by the neural operator rather than by an online cascade solve. The plant is a quadratic Volterra PDE from the family of \cite{HV3},
\begin{subequations}
\begin{align}
    u_t(x, t) =&\, u_x(x, t) + F_2[u](x, t)\,, \\
    F_2[u](x, t) =&\, \int_0^x \int_0^{\xi_1} f_2(x, \xi_1, \xi_2)u(\xi_1, t)u(\xi_2, t)\, d\xi_2\, d\xi_1\,, \\
    f_2(x, \xi_1, \xi_2) =&\, A^3 \rho(x, \gamma)\rho(\xi_1, \gamma)\rho(\xi_2, \gamma)\,,
\end{align}
\end{subequations}
where $\rho(z, \gamma) = \cos^2(\gamma\arccos z)$. Its coefficient $f_2$ is unknown and estimated online by the passive identifier of Section~\ref{sec:design} (first-order Euler), with the observer initialized at $\hat u(\cdot,0)\equiv0$, as Theorem~\ref{thm:stability} permits. The run of Figure~\ref{fig:sims} takes $A=2$, $\gamma=5$, the ramp initial condition $u_0(x)=0.4\min(x/0.05,1)$, $\hat u_0\equiv0$, and the initial estimate $\hat f_2(x,\xi_1,\xi_2,0)\equiv 0$. The design uses damping $c_0=3$ in \eqref{eq:obs1-N}, weight $c=2.5$ (with $c\le 4c_0=12$) in \eqref{eq:upd2-N}, and adaptation gains $\gamma_1=0$ (the plant is purely quadratic) and $\gamma_2=60$ in \eqref{eq:Phi-N}; the plant and observer are integrated by first-order upwind on $\Delta x=0.025$, $\Delta t=0.0225$.

\paragraph*{Training the operator} The surrogate $\widehat{\mathcal Q}_2$ of $\mathcal Q_2:\hat f\mapsto\check k$ is trained once, offline, in the manner of \cite{LBSK}, on supervised pairs $(f_2,\check k_2)$: the input is a $41^3$ pointwise evaluation of $f_2$ on the simplex and the output is the boundary gain $\check k_2(1,\cdot,\cdot)$, following the gain-only approach of \cite{VK241}. Since the projection in \eqref{eq:upd1-N} confines $\hat f(t)$ to the Lipschitz box $\bar F_2^{\rm Lip}$ for all $t$, the operator is queried in-distribution throughout the run, so its input-side hypothesis in Theorem~\ref{thm:neural} holds along the trajectory. The training set spans both the plant family and the estimates that adaptation is likely to visit: $300$ samples from the direct family with $(\gamma,A)\sim\mathcal U([1.5,5.5]\times[2,6.5])$, and $150$ multi-component samples $\sum_{m=1}^M a_m\,\rho(x,\gamma_x)\rho(\xi_1,\gamma_\xi)\rho(\xi_2,\gamma_\xi)$ with $M\sim\mathcal U\{2,3,4\}$, $a_m\sim\mathcal U(20,200)\cdot\mathcal U\{-1,1\}$, and $\gamma_x,\gamma_\xi\sim\mathcal U(1,6)$. On a $90/10$ split the operator reaches a relative $L^2$ test error of $8\times10^{-4}$; training takes about one hour on an RTX Blackwell GPU, and in closed loop a single forward pass runs $3.8\times$ faster than the numerical kernel solve ($0.17$ vs.\ $0.64$\,ms per step).

\paragraph*{Two time scales, and why the estimates do not reach the truth} The figure separates two time scales. The instability seen in the open-loop panel is, in closed loop, the transient that drives the estimate: $\hat f_2$ moves while the state is large, and $\hat k_2$---the operator's image of $\hat f_2$---moves with it. Both settle early, well before the states, which continue their slower descent to zero on the transport time scale. The separation is structural, and familiar from adaptive control: the estimate is driven only while the state excites it, and here the controller succeeds in extinguishing the state. Once $u\to0$ the excitation is gone and the estimate freezes at whatever value it has reached---not the true $f_2$, and $\hat k_2$ not the true kernel. They need not be: they are good enough that the certainty-equivalence controller built on them regulates the state to zero, which in turn removes the very signal that would have refined them.

\paragraph*{What the operator is, and is not, doing} This separates two stories that are easy to conflate. The non-convergence of $\hat f_2$ to the truth is a property of adaptive control without excitation---it would occur with exact kernels too---and has nothing to do with operator learning. The neural operator's only task, and the only thing the figure asks of it, is to be accurate enough that the kernel it emits from the running estimate matches, in real time, what an exact online cascade solve would have produced. The operator-error panel confirms this: the deviation from the exact kernel falls to about $2\times10^{-2}$ as the estimate stabilizes and holds there. And so the adaptive loop behaves as it would under exact adaptive control with the kernels solved numerically at every step---at a fraction of the online cost.

\section{Conclusions}\label{sec:conclusions}
%==============================================================

We have developed an adaptive controller for a hyperbolic PDE whose nonlinearity is an entire Volterra series with unknown kernels, and we have made it run in real time. These two demands pull against each other about as hard as any pairing in PDE control. Adaptation requires that the feedback gains be recomputed as the plant estimate moves; for this class, however, the gains are not a formula but the solution of a cascade of kernel PDEs whose simplex domains climb in dimension without bound. A controller that must be resynthesized at every instant, out of a computation one would hesitate to run even once---that is the bind, and it is what has kept adaptive nonlinear control in infinite dimension out of reach. The neural operator undoes this bind: trained a single time, offline, it returns the entire gain tuple from one forward evaluation, so a certainty-equivalence law that could previously only be written down now closes the loop.

What is new in the adaptive problem is a difficulty absent from every fixed-model predecessor: the linear kernel is itself unknown. That one fact turns the gain cascade from a chain of explicit quadratures into a self-coupled Volterra integral equation, and we prove it invertible for a nonlinearity of any size, with no smallness condition. Our analysis deals with a target system carrying five vanishing perturbations, all dominated by a single Lyapunov function. Four are interior---the plant--observer mismatch, the observer damping, the cross-order tail the truncated feedback leaves unmatched, and the drift of gains recomputed as the estimate moves---and the fifth acts at the boundary: the surrogate's error. That every one of them vanishes at the origin is what preserves regulation; the boundary term alone scales with the operator's accuracy, and the guaranteed basin recovers the exact-kernel basin as that accuracy improves.

The simulation shows the loop doing what it could not do before: stabilizing the plant while the kernels are still being learned and while the gains are produced on the fly by the offline-trained operator, rather than by a cascade solve that no online implementation could afford. The exact-kernel controller was never the bottleneck---computing it inside the loop was, and that is precisely what the neural operator supplies.

%==============================================================
\section*{Acknowledgment}
%==============================================================
The author thanks Luke Bhan, who declined coauthorship of this primarily theoretical paper. Luke developed the numerical results in Section~\ref{sec:sims}.

%==============================================================
%==============================================================

\begin{IEEEbiography}[{\includegraphics[width=1in,height=1.25in,clip,keepaspectratio]{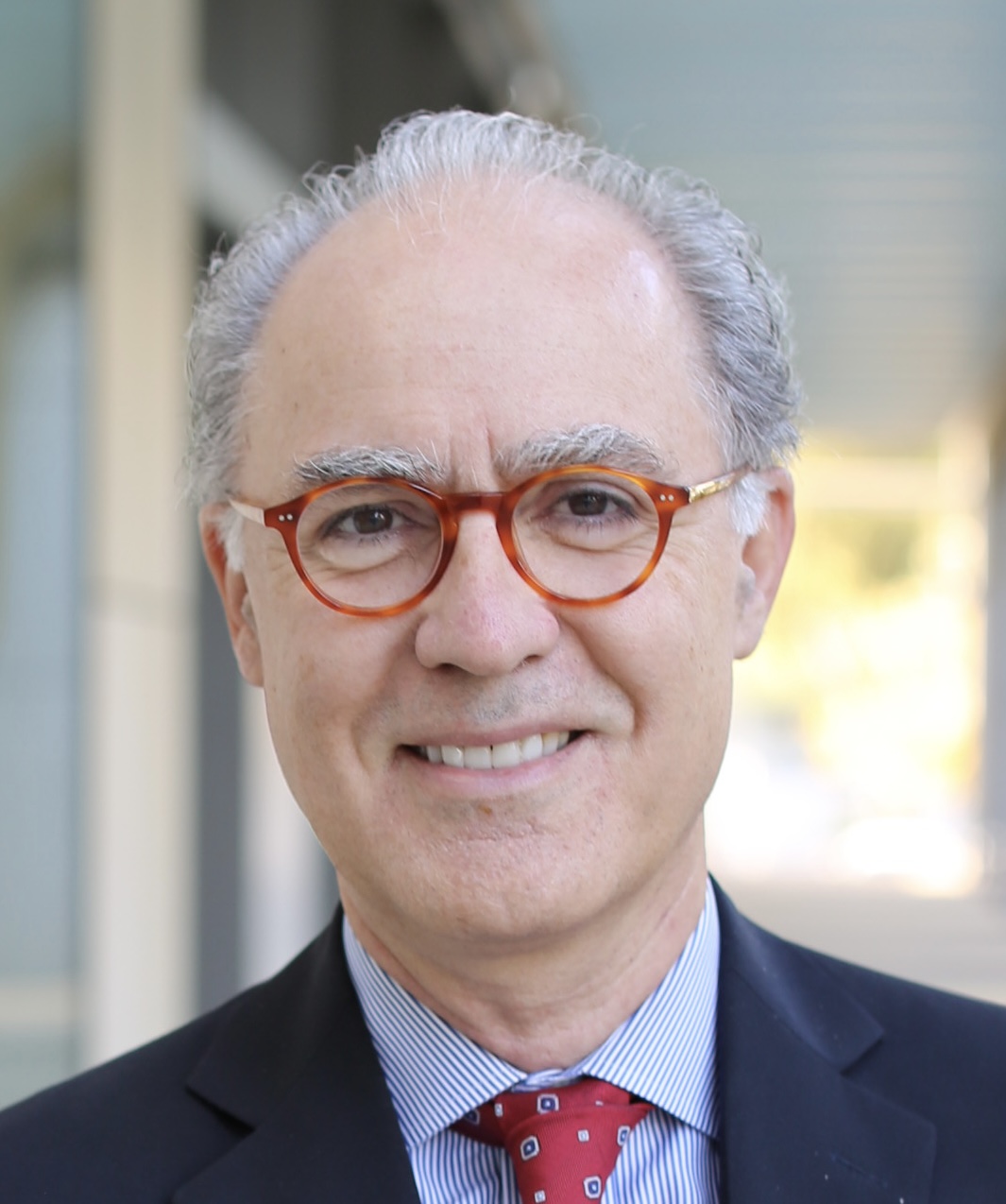}}]{Miroslav Krsti\'c} (Fellow, IEEE) received the Ph.D. degree in electrical engineering from UC Santa Barbara in 1995. 

He is currently a Distinguished Professor with UC San Diego. Dr. Krsti\'c is Fellow of IFAC, SIAM, AIAA, ASME, AAAS, IET, Serbian Academy of Sciences and Arts, and Academia Europaea.

Krsti\'c was the recipient of the IEEE Roger W. Brocket Control Systems Award, Hendrik Bode Lecture Prize, Richard E. Bellman Control Heritage Award, SIAM Reid Prize, ASME Rufus Oldenburger Medal, Harry Nyquist Lecture Prize, Henry Paynter Outstanding Investigator Award, John Ragazzini Education Award, four IFAC TC Awards (Nonlinear Control Systems, Distributed Parameter Systems, Adaptive and Learning Systems, Time-Delay Systems), IFAC Harold Chestnut Control Textbook Award, AV Balakrishnan Award for the Mathematics of Systems, and CSS Distinguished Member Award. Early in his career he received the UC Santa Barbara best dissertation award, the student best paper awards at CDC and ACC, the PECASE, NSF Career, and ONR Young Investigator awards, the Schuck and Axelby paper prizes, and the first UCSD Research Award given to an engineer. 

Krsti\'c is a Fellow-Ambassador of the French CNRS and has also been awarded the Miller Distinguished Visiting Professorship and Springer Visiting Professorship at UC Berkeley, the Distinguished Visiting Fellowship of the Royal Academy of Engineering, and the Invitation Fellowship of the Japan Society for the Promotion of Science.
%, and four honorary professorships outside of the United States. 

He serves as the Editor-in-Chief of IEEE Transactions on Automatic Control and has previously served as Editor-in-Chief of Systems and Control Letters and Senior Editor in Automatica. 
%He has also served as Senior Associate Vice Chancellor for Research at UC San Diego (2012-2026). Krstic has coauthored nineteen books on adaptive, nonlinear, and stochastic control, extremum seeking, control of PDE systems including turbulent flows, and control of delay systems.
\end{IEEEbiography}

\end{document}